  \colorlet{red0}{black!75!red!100!}
  \colorlet{red1}{black!50!red!100!}
  \colorlet{red2}{black!25!red!100!}
  \colorlet{red3}{red}
  \colorlet{red4}{red!50!}
  \colorlet{red5}{red!25!}
  \colorlet{red6}{red!10!}
  \colorlet{red7}{red!5!}
  \newcommand{\NN}{\mathbb N}
  \newcommand{\N}{\mathcal Y}
  \newcommand{\Nb}{\mathbf Y}
  \newcommand{\Nn}{Y}
  \newcommand{\ZZ}{\mathbb Z}
  \newcommand{\RR}{\mathbb R}
  \newcommand{\B}{\mathcal B}
  \newcommand{\M}{\mathcal Z}
  \newcommand{\MM}{\mathbf Z}
  \newcommand{\Mn}{Z}
  \newcommand{\lb}{\mathrm{lb}}
  \newcommand{\XX}{\mathbf{X}}
  \newcommand{\YY}{\mathbf{Y}}
  \newcommand{\mto}{\rightrightarrows}
  \newcommand{\dd}{\mathrm d}
  \newcommand{\Time}{\mathrm{time}}
  \newcommand{\length}[1]{\left|#1\right|}
  \newcommand{\flength}[1]{\left|#1\right|}
  \newcommand{\sdzero}{\textup{\texttt{0}}}
  \newcommand{\sdone}{\textup{\texttt{1}}}
  \newcommand{\albe}{\Sigma}
  \newcommand{\binalbe}{\ensuremath{\left\{\sdzero,\sdone\right\}}\xspace}
  \newcommand{\Reg}{\reg}
  \newcommand{\eq}{\operatorname{eq}}
  \newcommand{\img}{\operatorname{img}}
  \newcommand{\bigo}{\mathcal O}
  \newcommand{\demph}[1]{\textbf{#1}}
  \newcommand{\eval}{\mathrm{eval}}
  \newcommand{\size}[1]{\operatorname{ent}(#1)}
  \newcommand{\diam}{\operatorname{diam}}
  \newcommand{\LL}[1]{{\mathrm L}^{#1}}
  \newcommand{\Lp}{\LL{p}}
  \newcommand{\xip}{\xi_{\Lp}}
  \newcommand{\dom}{\mathrm{dom}}
  \newcommand{\str}{\mathbf}
  \newcommand{\reg}{\albe^{**}}
  \newcommand{\abs}[1]{\left|#1\right|}
  \newcommand{\norm}[1]{\left\|#1\right\|}
\title{Bounded time computation\\on metric spaces and Banach spaces}
\author{Matthias Schröder\footnote{Technische Universit\"at Darmstadt; Email: matthias.schroeder@cca-net.de}\ \  and Florian Steinberg\footnote{
Technische Universität Darmstadt; Email: steinberg@mathematik.tu-darmstadt.de}
}
\date{}
\begin{document}

	\maketitle
	\abstract{
		We extend the framework by Kawamura and Cook for investigating computational complexity for operators occuring in analysis.
		This model is based on second-order complexity theory for functionals on the Baire space, which is lifted to metric spaces by means of representations.
		Time is measured in terms of the length of the input encodings and the required output precision.

		We propose the notions of a complete representation and of a regular representation.
		We show that complete representations ensure that any computable function has a time bound.
		Regular representations generalize Kawamura and Cook's more restrictive notion of a second-order representation, while still guaranteeing fast computability of the length of the encodings.

		Applying these notions, we investigate the relationship between purely metric properties of a metric space and the existence of a representation such that the metric is computable within bounded time.
		We show that a bound on the running time of the metric can be straightforwardly translated into size bounds of compact subsets of the metric space.
		Conversely, for compact spaces and for Banach spaces we construct a family of admissible, complete, regular representations that allow for fast computation of the metric and provide short encodings.
		Here it is necessary to trade the time bound off against the length of encodings. 
	}
	\newpage
	\tableofcontents
	\newpage


\section{Introduction}

	Computability theory provides a mathematical framework for algorithmic considerations about discrete structures.
	The merits and applications are well-known and appreciated by both computer scientists and mathematicians.
	One of the biggest impacts the development of computability theory had on computer science is that it became possible to give mathematical proofs of non-existence of algorithms (or real-world processes, assuming the Church-Turing hypothesis) to solve certain problems.
	This has concrete implications for programming practice: The halting problem is incomputable, which means that automatic checks for correctness are impossible in general and has seeded whole fields of research like model checking.

	Complexity theory is the resource sensitive refinement of computability theory.
	For a decidable problem it investigates bounds on the number of steps or memory space a machine solving the problem needs to take or use.
	Again, in principal one of the main merits of a formal complexity theory are proofs that certain problems are inherently difficult to solve so that one does not need to continue looking for a better algorithm.
	However, while there exist hierarchy theorems, many interesting complexity classes are notoriously difficult to separate:
	The famous $\mathcal{P}$ vs.\ $\mathcal{NP}$ problem is one of the millennium problems and unsolved.
	Thus, the main focus in complexity theory has shifted to classifying problems as hard for certain classes.
	In practice, providing $\mathcal{NP}$ hardness of a problem has similar implications as proving it to not be solvable in polynomial time as a provably better algorithm would solve a problem difficult enough for many people to give up before even trying.

	In their traditional form, computability and complexity theory are only applicable to discrete structures.
	Many applications in modern computer science and numerical analysis require to model continuous structures as well.
	Engineers want to solve differential equations, mathematicians want to use computers to search for counter examples to the Riemann Hypothesis, etc.
	The most common way to model continuous structures on digital computers are floating point numbers.
	However, the use of floating point arithmetics leads to an offset between the mathematical description of an algorithm and its implementation:
	Floating point numbers do not provide many of the properties that mathematicians expect of real numbers, for instance distributivity or associativity of the operations.
	Also, the uncontrolled error propagation can lead to unreliable results of straight-forward implementations.
	Computer scientists and numerical analysts are well aware of these issues and circumvent the implications by hand using multiple precision arithmetic with correct rounding (MPFR), interval arithmetic, etc.
	For people only interested in computability issues that are willing to sacrifice a lot of efficiency, a full solution of the above problems is provided by computable analysis, Weihrauch's TTE and his theory of representations.
	\demph{Computable analysis} is a mathematically rigorous extension of computability theory to cover continuous spaces while still relying on a realistic machine model for computations.

	While computable analysis sees its roots in one of the cornerstone papers of discrete computability theory \cite{turing1936computable} and the main developments were done in the 1950's \cite{MR0086756} and 1980's \cite{MR1795407}, its time restricted counterpart \demph{real complexity theory} has until recently been stuck on the level of real functions and point-wise considerations about operators \cite{MR1137517,MR2342601}.
	Only in 2010 the framework of \demph{second-order representations} was introduced by Kawamura and Cook \cite{Kawamura:2012:CTO:2189778.2189780} and made it possible to consider uniform complexity of operators on spaces like the continuous functions on the unit interval.
	The framework heavily relies on second-order complexity theory for functions on the Baire space which was introduced by Mehlhorn \cite{MR0411947} and in particular on a fairly recent characterization of Mehlhorns basic feasible functionals by resource bounded oracle Turing machines by Kapron and Cook \cite{MR1374053}.
	Among the early achievements of the framework of second-order representations are the characterization of the minimal representation of the continuous functions on the unit interval that renders evaluation polynomial-time computable and a number of translations to the new setting of some earlier point-wise results about complexity theory that related certain operators on the continuous functions to important inclusions of discrete complexity classes.
	Prior to Kawamura and Cook, the first author introduced a slightly different, more restrictive framework for complexity \cite{MR2090390} and provided some applications \cite{KUNKLE2005111}.
	Many of the results in this paper were found when attempting to unify these two approaches.
	
	Computable analysis seems satisfactory in the sense that it is applicable to a big variety of problems that come up in practice:
	For separable metric spaces with a canonical sequence or more generally countably based $T_0$ spaces with a canonical basis there exists a standard representation.
	It is possible to form products and disjoint unions of spaces, and there is a function space construction.
	Many results like the computable Weierstraß Approximation Theorem provide compatibility of the above constructions:
	Under reasonable assumptions on the spaces involved it does not matter whether a canonical sequence is chosen in the function space or the function space construction is used.
	The notion of admissibility provides a reasonable condition for whether or not a chosen computable structure is compatible with a topology on the space.
	While this still leaves some gaps on whose closure the community of computable analysis is actively working on \cite{MR2146883,mir_mods_00000784}, in the past years the focus has shifted to finding applications to problems that people in numerical analysis are interested in \cite{MR1694433,MR2275411,MR2275415,MR2351947,NavierStokes}.
	
	In comparison to computable analysis, the ground stock of problems that real complexity theory is applicable to is vanishingly small:
	The standard representation of metric spaces is known to only lead to a good complexity theory in very restricted cases \cite{MR1952428,Kawamura:2016:CTC:2933575.2935311}.
	In particular it does not lead to the right result for the continuous functions on the unit interval.
	The construction for the representation of this space, in turn, can not be expected to generalize to a big class of spaces \cite{MR3219039}.
	This situation is not acceptable and many efforts have been done to improve it.
	The most notable advances are the constructions of representations for the analytic functions on the unit disc and the levels of Gevrey's hierarchy \cite{MR3377508} based on prior work by Norbert Müller \cite{Muller1995}, the investigations of complexity of functionals \cite{MR3239272,postive} and advances on $L^p$ and Sobolev spaces \cite{arXiv:1612.06419,Kawamura:2016:CTC:2933575.2935311}.
	However, all of the above examples do only cover very restricted areas and handle examples in a highly specialized and not generalizable way.

	This paper aims to close the gap in applicability between computable analysis and real complexity theory where it is possible and to point out where this is impossible.
	It provides an in-depth investigation of the general restrictions of the framework of Kawamura and Cook when computations on metric spaces are carried out.
	It generalizes the Cauchy representation in a way such that the constructions for continuous functions on the unit interval and $L^p$ and Sobolev spaces can be seen as special cases.
	The construction is highly inspired by results from approximation theory and close to ideas frequently used in numerical analysis.
	We firmly believe that this provides a general recipe for constructing useful representations that are complexity theoretically well-behaved and opens the field for applications similar to those computable analysis has.
	For accessibility of the constructions the requirement of being second-order representation is relaxed to a weaker notion we call regularity.
	We believe this to be an important step as it removes the necessity of padding that seems to be an unnatural thing to do in real world applications.
	At the same time it maintains many of the advantages of second-order representations, like accessibility of the length function.
	Furthermore, we introduce the properties of completeness and computable completeness that guarantee that any computable operator allows a time bound.
	Computable completeness lifts computable admissibility to the bounded-time realm in that assuming it for a representation of a computable metric space implies bounded time equivalence to the Cauchy representation, where without it only computable equivalence is guaranteed.
	We refrain from making this a definition as other authors have claimed that a resource restricted notion of admissibility should not exists \cite{MR3219039}.

	\subsection{The content of this paper}

		The first chapter contains the introduction, this description of the content of the paper and a short section listing the most basic notational conventions.

		\Cref{sec:representations} gives a brief introduction into computable analysis formulated not in the standard way, but in an appropriate form such that second-order complexity theory is applicable.
		It also presents those results and definitions from second-order complexity theory that are of relevance for the content of the paper.
		The last part introduces the notion of a length of a representation which is new and important for the next chapter.
		It proves that many representations have a length.

		In \Cref{sec:metric entropy} the concept of metric entropy is presented as a way of measuring the size of a compact set.
		This concept is not new but has been used in approximation theory, constructive analysis and proof mining and computable analysis before.
		However, we skew the definitions a little in a way that make them fit in with the framework presented in \Cref{sec:representations}.
		The most basic properties are listed for the convenience of the reader.
		Then the first major result of this paper is stated:
		\Cref{resu:from complexity to metric entropy} constructs bounds on the metric entropy of certain compact subsets from a running time of the metric.
		Let us state the result of the theorem, when applied to compact spaces in natural language:
		Whenever there exists an encoding of a compact metric space $\Mn$ by string functions such that the metric can be obtained from the encodings in time $T$ and each element has an encoding of length at most $\ell$, then the size of the compact space is bounded polynomially in $T$ and $\ell$.
		More explicitly $\size\Mn\in \bigo(T(\ell, \cdot+2)^2)$.
		This is a general restriction of the framework.
		If a compact set is too big, we can not expect to find representations that both feature short names and fast computation of the metric.
		A trade-off is necessary.

		\Cref{sec:regularity and completeness} prepares the construction of interesting representations by introducing some necessary restrictions on running times.
		We introduce a concept of time-constructibility for second-order running times that restricts to the usual notion of time-constructibility for usual running times and choose a notion of monotonicity.
		We then present the notion of regularity that relaxes being a second-order representation.
		This is mainly done to make the definitions of the standard representations in the next section more legible, however, we think that it introduces an important concept that makes the framework more accessible and brings it closer to actual implementations.
		Finally, we introduce completeness and computable completeness of representations.
		The notion is inspired by the notion of properness \cite{MR2090390} and we prove it to imply that any computable function is computable in bounded time in \Cref{resu:completeness as properness}.
		\Cref{resu:completenesses} proves that a separable metric space is complete if and only if every Cauchy representation is complete and that the Cauchy representation of a complete computable metric space is computably complete.
		This is the motivation for the name and shows that if a representation of a complete computable metric space is computably complete, it is bounded-time equivalent to the Cauchy representation.

		The second to last \Cref{sec:construction of standard representations} constructs standard representations for metric spaces.
		As can be expected from the results of \Cref{sec:metric entropy} it is necessary to impose conditions on the relation between the length of names and the running time.
		We specify sets of conditions under which it is possible to construct such representations.
		However, in general it can not be guaranteed that the bounds from \Cref{resu:from complexity to metric entropy} are tight.
		To be able to provide lower bounds on the metric entropy we have to impose further conditions on the metric space.
		The two conditions under which we succeeded are those of an infinite compact metric space (\Cref{resu:representing compact spaces}) and of an infinite dimensional Banach space that allows a Schauder basis (\Cref{resu:representing banach spaces}).
		Of course in the former case the bound is only valid if it is not bigger than the metric entropy of the compact space itself which makes the statement slightly more complicated.
		The merit of the compact case is that the existence of a uniformly dense sequence is automatic.
		This is in contrast to the situation for Banach spaces, where the existence of a Schauder basis has to be assumed.
		The last part mentions that additional assumptions have to be made to guarantee that the construction works well with a computable structure on the space.
		For Banach spaces it is known that this is for principal reasons, for metric spaces it is open whether these additional assumptions are indeed necessary.

		Finally, \Cref{sec:appl} provides applications of the constructions and connections to other results.
		First it compares the results to the results from approximation theory that inspired the constructions for Banach spaces in the first place.
		These results in turn can be applied to the construction and improve some of the bounds drastically.
		This did not come as an surprise to the authors:
		While we consider the proofs within the framework of real complexity theory elegant, they are new and apply to a more general situation.
		The authors from approximation theory, in contrast, have had a lot of time and invested a lot of effort into optimizations.
		The last two parts of \Cref{sec:appl} go on to reconstruct different known representation by choosing the parameters in the constructions from \Cref{resu:representing compact spaces,resu:representing banach spaces} appropriately.
		Most notably the representation of the continuous functions on the unit interval that has been introduced and proven to be the minimal second-order representation to allow polynomial-time evaluation by Kawamura and Cook.

		The paper ends in a conclusion and the bibliography.

	\subsection{Notational conventions}

    By $\Sigma$ we denote the alphabet $\{\sdzero,\sdone\}$
    and by $\albe^*$ 
    we denote the set of finite strings over $\Sigma$.
    The letters $\str a,\str b,\ldots$ will be used for strings.
    By the Baire space
    we mean the space $\B := (\albe^*)^{\albe^*}$ of total functions $\varphi,\psi,\ldots$ from finite strings to finite strings equipped with the product topology.
    It is topologically equivalent to the countable product of the discrete natural numbers.
    We assume familiarity with the notions of computability and complexity of total and partial string functions.

    We identify the set $\NN$ of natural numbers with the set
    $\{\varepsilon,\sdone,\sdone\sdzero,\sdone\sdone,\ldots\}$
    of their binary representations.
    Another copy of the natural numbers is the set $\omega := \{\varepsilon,\sdone,\sdone\sdone,\sdone\sdone\sdone,\ldots\}$ of their unary representations.
    We denote elements of both $\NN$ and $\omega$ by $n,m,\ldots$.
    We identify the set of integers with the set $\ZZ:=\NN\setminus\{\varepsilon\}\cup\sdzero\NN$, where $\sdzero n$ is interpreted as $-n$ (note that the union is disjoint).

    By $|\cdot|\colon \albe^*\to \omega$ the length mapping is denoted, which simply replaces all occurrences of $\sdzero$ by $\sdone$s.
    Computations on products are handled via \demph{tuple functions}.
		\Cref{sec:sub:metric entropy and complexity} requires the tuple functions to have some very specific properties.
		Up until then any standard tuple function (i.e.\ some bijective, polynomial-time computable function with polynomial-time computable projections) may be used.
		For any given dimension $k$ define an injection $(\albe^*)^k \to \albe^*$ as follows:
		Given strings $\str a_1,\ldots, \str a_k$ denote by $\str c_i= c_{i,1}\ldots c_{i,\max\{\length{\str a_i}\}+1}$ the padding of $\str a_i$ to length $\max\{\length{\str a_i}\mid i=1,\ldots,k\}+1$ by appending a $\sdone$ and then an appropriate number of $\sdzero$s.
		Set
		\[ \langle \str a_1,\ldots,\str a_k\rangle := c_{1,1}\ldots c_{k,1}c_{1,2}\ldots c_{k,2}\ldots\ldots c_{1,\max\{\length{\str a_i}\}+1}\ldots c_{k,\max\{\length{\str a_i}\}+1}. \]
		These tuple functions are not surjective thus extra care has to be taken in some definitions.
		However, for a fixed $k$ each of the tuple functions and its projections
		\begin{equation}\tag{$\pi$}\label{eq:projection}
			\pi_i(\str b) := \begin{cases} \sdzero a_i& \text{ if } \str b = \langle\str a_1,\dots,\str a_d\rangle \\ \varepsilon &\text{ if }\str b \notin \img(\langle\cdot,\ldots,\cdot\rangle) \end{cases}
		\end{equation}
		are computable in linear time.
		The encoding of finite sequences of arbitrary length as $\langle k,\langle \str a_1,\ldots,\str a_k\rangle\rangle$ is explicitly written out when used to avoid confusion.
		The corresponding \demph{pairing function for string functions} is defined as follows:
		\[ \langle\cdot,\cdot\rangle:\Sigma^{*\Sigma^*}\times\Sigma^{*\Sigma^*},\quad \langle \varphi,\psi\rangle(\str a) := \langle\varphi(\str a),\psi(\str a)\rangle. \]
		The paring function for string functions is injective (and bijective if a bijective pairing function is used).

    A multivalued function $f:\subseteq X\mto Y$ is a function from $X$ to the power set of $Y$.
		The elements of $f(x)$ are interpreted as the \lq acceptable return values\rq.
		The domain $\dom(f)$ of $f$ is the set of $x$ such that $f(x)\neq \emptyset$.
		A multivalued function is total if $\dom(f) = X$ and this is indicated by writing $f:X\mto Y$.
		We choose the convention that a function is assumed to be total if it is not explicitly stated that it is allowed to be partial. 
		A partial function $f:\subseteq X\to Y$ is identified with the multivalued function $\tilde f:\subseteq X\mto Y$ with $\tilde f(x)= \{f(x)\}$ if $x\in\dom(f)$ and $\tilde f(x)=\emptyset$ otherwise.

    The reader is assumed to be familiar with the basic theory of metric spaces $(\Nn,d)$ such as the diameter of a set and its distance function.
    The closed ball of radius $r\geq 0$ around $x\in \Nn$ is denoted by
    $ B^c_r(x) := \{y\in \Nn\mid d(x,y) \leq r\} $
    and the open ball by
    $B^o_r(x):=\{y\in  \Nn\mid d(x,y)<r\}$.
    The open balls form a base of the topology induced by the metric.
    Note that $B^o_r(x) \subseteq \overline{B^o_r(x)}\subseteq B^c_r(x)$, but in general nothing further can be said about strictness of the inclusions.
    In particular, the closed balls may in general be different from the closures of the open balls.

    All vector spaces considered in this paper have the space $\RR$ of the real numbers as their underlying field. 

\section{Representations}\label{sec:representations}

	Discrete computability theory encodes objects by finite strings to make them accessible to digital computers.
	This only works if the structure considered is countable.
	Most metric spaces appearing in practice, however, are not countable.
	For instance the real numbers, or, to mention a compact one, the unit interval are uncountable.
	Computable analysis removes the necessity of countability by encoding elements by infinite objects (infinite binary strings or string functions) instead of strings.

	For us the \demph{Baire space} is the space of all string functions $\B:=(\albe^*)^{\albe^*}$ equipped with the product topology.

	\begin{definition}
		A \demph{represented space} is a pair $\XX=(X,\xi)$ of a set $X$ and a partial surjective mapping $\xi:\subseteq\B\to X$.
		The elements of $\xi^{-1}(x)$ are called the \demph{names} (or \demph{$\xi$-names} or \demph{$\XX$-names}) of $x$.
	\end{definition}

	A space with a fixed representation is called a \demph{represented space}.
	We denote represented spaces by $\XX$, $\YY,\ldots$, their underlying sets by $X$, $Y,\ldots$ and their representations by $\xi_{\XX}$, $\xi_{\YY},\ldots$
	Like the topology of a topological space the representation of a represented space is only mentioned explicitly if necessary to avoid ambiguities.
	An element of a represented space is called \demph{computable} if it has a computable name.
	It is said to lie within a complexity class if it has a name from that complexity class.

	On one hand, any represented space carries a natural topology: The final topology of the representation.
	On the other hand, one often looks for a representation suitable for a topological space.
	It is reasonable to require such a representation to induce the topology the space is equipped with.
	For this, continuity is necessary but not sufficient.
	Continuity together with openness is sufficient but not necessary.

	The following is the standard representation for carrying out computability considerations on metric spaces \cite{MR1795407} and has also been used in different contexts for a long time \cite{MR966421,MR1232903}.
	\begin{definition}\label{def:Cauchy representation}
		Let $\M:=(\Mn,d,(r_i))$ be a triple such that $(\Mn,d)$ is a separable metric space and $(r_i)_{i\in\NN}$ is a dense sequence in $\Mn$.
		Define the \demph{Cauchy representation} $\xi_{\M}$ of $\Mn$ with respect to $(r_i)$ as follows:
		A string function $\varphi\in\B$ is a $\xi_{\M}$-name of $x\in M$ if and only if for all $n\in\NN$
		\[ \varphi(n)\in\NN\wedge d(x,r_{\varphi(n)}) \leq \frac1{n+1}. \]
	\end{definition}
	This representation can be checked to induce the metric topology.
	Note that equivalently one could use integers in unary and require $d(x,r_{\varphi(\sdone^n)})\leq 2^{-n}$, which is the more common convention in real complexity theory.
	A third equivalent option would have been to use encodings of rational numbers.

	The following representation is the standard representation of real numbers that is used throughout real complexity theory:
	\begin{definition}\label{def:standard representation of the reals}
	Let $\varphi\in\B$ be an $\RR$-name of $x\in\RR$ if and only if for all $n\in\NN$
	\[ \varphi(n)\in\ZZ\wedge\abs{x-\frac{\varphi(n)}{n+1}}\leq \frac1{n+1}. \]
	\end{definition}
	This representation is in a very strong sense (see \Cref{def:equivalence}) equivalent to the Cauchy representation, if the dense sequence is chosen as a standard enumeration of the dyadic numbers.

	Recall the pairing $\langle\cdot,\cdot\rangle:\B\times\B\to\B$ on string functions from the introduction.
	\begin{definition}\label{def:product representation}
		Let $\XX$ and $\YY$ be represented spaces.
		Define their \demph{product} $\XX\times \YY$ by equipping $X\times Y$ with the representation $\xi_{\XX\times\YY}$ defined by
		\begin{equation*} \xi_{\XX\times \YY}(\langle\varphi,\psi\rangle) = (x,y)\\\Leftrightarrow \\\big(x=\xi_\XX(\varphi) \wedge y=\xi_\YY(\psi)\big).
		\end{equation*}
	\end{definition}
	That is: a string function is a name of the pair $(x,y)$ if and only if it is the pairing of a name of $x$ and a name of $y$.

	Computing functions between represented spaces can be reduced to computing functionals on the Baire space:
	\begin{definition}\label{def:realizer}
		Let $f:\XX\mto\YY$ be a multivalued function between represented spaces.
		A partial function $F:\subseteq\B\to\B$ is a \demph{realizer} of $f$, if
	\[ \varphi\in\xi_\XX^{-1}(x) \quad \Rightarrow \quad F(\varphi)\in \xi_\YY^{-1}(f(x)). \]
	\end{definition}
	That is: The realizer $F$ translates each name of $x$ into some name of $f(x)$.
	No assumptions about the behavior on elements that are not names are made.

	Computability of functionals on the Baire space was introduced in \cite{MR0051790} and is well established.
	An overview over the historical development can be found in \cite{MR2143877}.
	We use the model of computation by oracle Turing machines.
	A functional $F:\subseteq\B\to\B$ is called \demph{computable} if there is an oracle Turing machine $M^?$ such that the computation of $M^?$ with oracle $\varphi$ and on input $\str a$ halts with the string $F(\varphi)(\str a)$ written on the output tape, or for short if $M^\varphi(\str a) = F(\varphi)(\str a)$.
	A function between represented spaces is called \demph{computable} if it has a computable realizer.

	\begin{example}[Comp. metric spaces]\label{ex:computable metric space}
		A triple $\M=(\Mn,d,(r_i))$ is a \demph{computable metric space}, if $(\Mn,d)$ is a separable metric space and $(r_i)$ is a dense sequence such that the discrete metric
		\begin{equation}\tag{$\tilde d$}\label{eq:the discrete metric}
			\tilde d: \NN\times \NN \to \RR, \quad (i,j)\mapsto d(r_i,r_j)
		\end{equation}
		is computable.
		To be specific here, a representation of $\NN$ must be picked:
		Set $\xi_{\NN}(\varphi)=n$ if and only if $\varphi(\str a) = n$ for all $\str a\in\Sigma^*$.
		This is the Cauchy representation with respect to the discrete metric and $(n)_{n\in\NN}$ as dense sequence.

		The metric $d:\Mn\times \Mn\to \RR$ of a computable metric space is computable with respect to the Cauchy representation $\xi_{\M}$ from \Cref{def:Cauchy representation} and the representation from \Cref{def:standard representation of the reals} on $\RR$:
		To specify an oracle Turing machine computing a name of $d(x,y)$ from a pair $\langle\varphi,\psi\rangle$ of names of $x$ and $y$ as oracle, on input $n$ do the computation of $\tilde d$ on inputs $i:=\varphi(4n+3)$, $j:=\psi(4n+3)$ and input $2n+1$.
	\end{example}

	\subsection{Second-order complexity theory}

		Complexity theory for functionals on the Baire space is called \demph{second-order complexity theory}.
		It was originally introduced by Mehlhorn \cite{MR0411947}.
		This paper uses a characterization via resource bounded oracle Turing machines due to Kapron and Cook \cite{MR1374053} as definition.
		Such a machine is granted time depending on the size of the input.
		The string functions are considered part of the input.

		\begin{definition}
			For $\varphi\in\B$ a string function define its \demph{length} $\flength\varphi:\omega\to \omega$ of $\varphi$ by
			\[ \flength{\varphi}(n) := \max\{\length{\varphi(\str a)}\mid \length{\str a} \leq n\}. \]
		\end{definition}
		For instance: The length of each polynomial-time computable string function is bounded by an polynomial from above.
		Polynomials $\omega\to \omega$ have themselves as length.
		A polynomial as function $\NN\to\NN$ has linear length where the slope is the degree of the polynomial.

		A running time is a mapping that assigns an allowed number of steps to sizes of the inputs.
		Thus, it should be of type $\omega^\omega\times \omega \to \omega$.
		The following conventions are used to measure the time of oracle interactions of an oracle Turing machine:
		Writing the query takes time.
		Reading the answer tape takes time.
		Writing the answer to an oracle query to the answer tape is done by the oracle and does not take time.
		It may very well happen that some of the content of the answer tape is not accessible to the machine due to running time restrictions.
		\begin{definition}\label{def:M runs in time T}
			An oracle Turing machine $M^?$ \demph{runs in time $T:\omega^\omega\times \omega\to\omega$}, if for each oracle $\varphi\in\B$ and input string $\str a$ it terminates within at most $T(\flength\varphi,\length{\str a})$ steps.
			It \demph{runs in time $\bigo(T)$} if there is a $C\in \omega$ such that it runs in time $(l,n)\mapsto CT(l,n) + C$.
		\end{definition}

		This definition, for instance also used in \cite{higherorder}, is a straight-forward generalization of a characterization from \cite{MR1374053} and its application to computable analysis from \cite{Kawamura:2012:CTO:2189778.2189780}.
		The latter sources only use the definition for a subclass of running times that are considered polynomial and called \demph{second-order polynomials}.
		The set of second-order polynomials is recursively defined as follows:
		\begin{itemize}
			\item Whenever $p$ is a positive integer polynomial, then the mapping $(l,n)\mapsto p(n)$ is a second-order polynomial.
			\item If $P$ and $Q$ are second-order polynomials then $P+Q$ and $P\cdot Q$ are second-order polynomials.
			\item If $P$ is a second-order polynomial then $(l,n)\mapsto l(P(l,n))$ is a second-order polynomial.
		\end{itemize}
		Second-order polynomials also turn up in other contexts \cite{MR1462200}.
		
		A functional $F:\subseteq \B\to\B$ is called \demph{computable in time $T$} resp.\ \demph{computable in polynomial time} if there is an oracle machine that computes it and runs in time $T$ resp.\ in time bounded by some second-order polynomial.

		\begin{definition}
			A function between represented spaces is \demph{polynomial-time computable} resp.\ \demph{computable in time $T$}, if it has a polynomial-time computable realizer (in the sense of \Cref{def:realizer}) resp.\ a realizer computable in time $T$.
		\end{definition}
		A function is called \demph{computable in bounded time}, if there is any running time $T$ such that it is computable in time $T$.
		While a total functional on the Baire space is always computable in bounded time, not every computable function between represented spaces is computable in bounded time as all continuous realizers may be partial.

		\begin{example}[\ref{ex:computable metric space} continued]\label{ex:computable metric spaces continued}
			Let $\M=(\Mn,d,(r_i))$ be a computable metric space.
			Let $t:\omega \times \omega \to \omega$ be a running time of the discrete metric $\tilde d(i,j):=d(r_i,r_j)$ (I.e. the function $S:\omega^\omega\times \omega\to\omega$ defined by $S(l,n):=t(l(0),n)$ is a running time, this is not a restriction).
			The algorithm to compute the metric from \Cref{ex:computable metric space} terminates within $\bigo(T)$ steps for the function $T(l,n):=t(l(n+3),n+1)$.
			In particular $d$ is polynomial-time computable if $\tilde d$ is polynomial-time computable.
		\end{example}

		Let $\xi$ and $\xi'$ be representations of the same space $X$.

		A continuous functional $F:\subseteq\B\to\B$ is called a \demph{translation} from $\xi$ to $\xi'$ if it is a realizer of the identity function, i.e. if it maps $\xi$-names into $\xi'$-names of the same element.
		\begin{definition}\label{def:equivalence}
			Two representations are called \demph{topologically equivalent} if there exist translations in both directions.
			They are called \demph{computably, resp.\ bounded time or polynomial-time equivalent} if there exist computable, resp.\ bounded time or polynomial-time computable translations in both directions.
		\end{definition}
		The Cauchy representation from \Cref{def:Cauchy representation} is polynomial-time equivalent to the open representation that arises if the inequality in its definition is made strict.
		The comment after \Cref{def:standard representation of the reals} points to polynomial-time equivalence when it says \lq equivalent\rq.
		In the following we identify representations if they are polynomial-time equivalent.

		\Cref{def:M runs in time T} implies that any machine that runs in bounded time computes a total function.
		The parts of our results that assume bounded time computability remain valid under the weaker assumption that the machine runs in bounded time, whenever the oracle is from the domain of the computed function.
		Thus, we additionally use the following weaker notion:
		\begin{definition}
			We say that an oracle Turing machine $M^?$ \demph{runs in time $T$ on a subset $A\subseteq \B$}, if its run with oracle $\varphi\in A$ on input string $\str a$ terminates within $T(\flength\varphi,\length{\str a})$ steps.
		\end{definition}
		A functional on the Baire space is \demph{computable in time $T$ on $A$} if there is a machine that computes it and runs in time $T$ on $A$.
		For functions between represented spaces the set $A$ is usually the domain of the representation of the domain of the function.
		\begin{definition}
			A multivalued function $f:\XX\mto\YY$ between represented spaces is called \demph{computable in time $T$ on $\XX$} if it has a realizer (in the sense of \Cref{def:realizer}) that is computable in time $T$ on $\dom(\xi_{\XX})$.
		\end{definition}
		
		\begin{remark}
			Kawamura and Cook's framework of second-order representations \cite{Kawamura:2012:CTO:2189778.2189780} (cf. also \Cref{sec:regularity}) can be seen to use this definition for its definition of polynomial-time computability.
			In this case, however, it turns out to be equivalent to the classical definition due to the existence of a polynomial-time retraction. 
			Note, however, that some care has to be taken, as in other cases there are subtle differences between computing on the Baire space and computing on a subset even if there is such a retraction (cf. \Cref{def:time-constructible} and the example afterwards).
		\end{remark}

	\subsection{Admissibility, length and compactness}
		Recall that a set is called \demph{relatively compact} if its closure is compact.
		The relatively compact subsets of the Baire space are easily classified.
		A subset of the Baire space is relatively compact if and only if it is contained in a set from the following family:
		\begin{definition}
			Define a family $(K_l)_{l\in\omega^\omega}$ of compact subsets of $\B$ by
			\[ K_l := \{\varphi\in \B \mid \flength{\varphi}\leq l\}. \]
		\end{definition}
		The mentioned property of this family of sets resembles hemi-compactness:
		Whenever $K$ is a compact subset of the Baire space, there is some $l\in\omega^\omega$ such that $K\subseteq K_l$.
		The only difference is that the index set $\omega^\omega$ is not countable but has the cardinality of the continuum.

		\begin{definition}
			We call a function $\ell\in\omega^\omega$ a \demph{length} of a representation $\xi$ of a space $X$, if $\xi(K_\ell)= X$.
		\end{definition}
		I.e.\ $\ell$ is a length of $\xi$ if every element has a $\xi$-name of length $\ell$.
		This condition does not imply that the domain of $\xi$ is included in $K_\ell$, but is strictly weaker.
		
		\begin{example}\label{ex:length of a representation}
			Let $\Omega$ be a bounded subset of $\RR^d$.
			And let $C\in \omega$ be such that each element of $\Omega$ has supremum norm less than $2^C$.
			Let $\xi_{\Omega}$ denote the range restriction of the $d$-fold product of the standard representation from \Cref{def:standard representation of the reals}.
			Then the function $\ell(n) := 2d(n + C + 4)$ is a length of $\xi_{\Omega}$.
		\end{example}

		Not every representation has a length, but representations of compact spaces usually do for the following reason:

		\begin{theorem}[Compactness and length]\label{resu:length of open representations}
			Any open representation of a compact space has a length.
		\end{theorem}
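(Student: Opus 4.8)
The plan is to construct the length $\ell$ one coordinate-level at a time, using openness together with compactness of $X$ to bound the lengths $\length{\varphi(\str a)}$ for the strings $\str a$ of each given length, and then to glue the resulting per-level information into a single name by a König-type argument. Write $X$ for the compact space and $\xi$ for the open (continuous, surjective) representation, so that for every open $U\subseteq\B$ the image $\xi(U\cap\dom(\xi))$ is open in $X$. The only feature of $\xi$ that I will use, besides surjectivity, is this openness, which is exactly what turns the coverings below into \emph{open} coverings of $X$.

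First I would set up the level-wise bounds. For $n,m\in\omega$ the set
\[ W_{n,m} := \left\{\varphi\in\B \;\middle|\; \length{\varphi(\str a)}\le m \text{ for all } \str a \text{ with } \length{\str a}\le n\right\} \]
constrains only the finitely many coordinates $\str a$ with $\length{\str a}\le n$, each to one of finitely many values, so it is clopen in $\B$. I define $\ell$ recursively so that the per-level restriction
\[ P_n := \left\{\varphi\in\B \;\middle|\; \length{\varphi(\str a)}\le \ell(\length{\str a}) \text{ for all } \str a \text{ with } \length{\str a}\le n\right\} \]
satisfies $\xi(P_n\cap\dom(\xi)) = X$. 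Given $\ell(0),\dots,\ell(n-1)$ with $\xi(P_{n-1}\cap\dom(\xi))=X$, writing $P_n = P_{n-1}\cap\{\varphi\mid \length{\varphi(\str a)}\le m \text{ for } \length{\str a}=n\}$ and letting $m\to\infty$ produces an increasing family of clopen sets whose union is $P_{n-1}$, hence, by openness, an increasing family of open subsets of $X$ whose union is $\xi(P_{n-1}\cap\dom(\xi))=X$. Compactness of $X$ forces a single member to be everything, which selects a value $\ell(n)$ with $\xi(P_n\cap\dom(\xi))=X$ (the base case is the same argument applied to $W_{0,m}$). This yields $\ell\in\omega^\omega$ such that every $x\in X$ has, for each fixed $n$, \emph{some} name respecting the bound $\ell$ on all coordinates of length at most $n$.

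Since $K_\ell=\bigcap_n P_n$, it then remains to upgrade these per-level names into one global name lying in $K_\ell$, i.e. to show $\xi^{-1}(x)\cap K_\ell\neq\emptyset$ for each $x$. I would fix $x$ and form the tree whose nodes at level $n$ are the restrictions $\varphi|_{\{\str a\,:\,\length{\str a}\le n\}}$ of names $\varphi\in\xi^{-1}(x)\cap P_n$, ordered by extension. Every level is nonempty by the previous step, and because $\ell$ permits only finitely many values on each of the finitely many coordinates of a given level, the tree is finitely branching. König's lemma (equivalently, the inverse limit of the nonempty finite level-sets is nonempty) then delivers an infinite branch: a single $\varphi\in K_\ell$ all of whose finite restrictions extend to genuine names of $x$, so that there are names $\psi_n\in\xi^{-1}(x)$ with $\psi_n\to\varphi$ in $\B$.

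The main obstacle is this last step: concluding that the branch $\varphi$ is itself a name of $x$. I expect to obtain this from closedness of the fibre $\xi^{-1}(x)$ in $\B$ — continuity of $\xi$ makes each fibre closed in $\dom(\xi)$, and for a $T_1$ space (as the compact spaces occurring here are) the set of names of a fixed point is closed — so that the limit $\varphi$ of the $\psi_n$ is again a name of $x$, giving $\xi(\varphi)=x$ with $\varphi\in K_\ell$. The delicate point is ensuring this limit lands inside $\dom(\xi)$; it suffices that the names of each point form a closed subset of all of $\B$ (not merely of $\dom(\xi)$), as is the case for the standard representations. The finiteness of the length bounds is exactly what confines the branch to the compact set $K_\ell$ and lets the fibre-intersection be run as a genuine compactness/König argument rather than a bare nested-sequence argument, and this is where the hypotheses are essential: openness drives the level-wise compactness covering, and without it a merely continuous representation of a compact space need not bound the lengths of its names at all.
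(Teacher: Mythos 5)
Your level-by-level construction of $\ell$ is essentially the paper's own: cover $X$ by the open images of the clopen sets that fix (or bound) the name's behaviour on arguments of length at most $n$, extract a finite subcover by compactness, and record the resulting bound. The paper stops there and asserts that the resulting $\ell$ is a length ``by its definition''; you have correctly noticed that this is not automatic, since the recursion only yields, for each $n$ separately, \emph{some} name of $x$ respecting $\ell$ on arguments of length at most $n$, and these names may vary with $n$. Your K\"onig-type gluing is the right instinct and is sound as far as it goes: the tree of admissible finite restrictions is infinite and finitely branching, so it has a branch $\varphi\in K_\ell$ which is a limit of names $\psi_n\in\xi^{-1}(x)$.

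The genuine gap is exactly the one you flag yourself: concluding $\xi(\varphi)=x$ needs $\xi^{-1}(x)$ to be closed in all of $\B$ (or at least $\varphi\in\dom(\xi)$), and nothing in the hypotheses --- openness of $\xi$ plus compactness of $X$ --- delivers this. The branch can genuinely fall outside $\dom(\xi)$: consider a representation of Cantor space in which a name of $x$ lists the prefixes $x_0\cdots x_k$ but must, at one self-chosen level $N$, append a certificate of length at least the position of the $(N+1)$-st $\sdone$ in $x$. This map is open and continuous and surjective; the level-wise bounds exist because the certificate can always be pushed beyond level $n$; yet the K\"onig branch is the certificate-free function, which is not a name, and for any candidate $\ell$ a sufficiently sparse $x$ has no name in $K_\ell$ at all. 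So the final step cannot be repaired from openness and compactness alone; it does go through under the extra hypotheses you name (continuity plus closedness of the fibres, e.g.\ a complete representation in the paper's later terminology, of a Hausdorff space). In short, your proof is the paper's proof plus an honest attempt at the one step the paper silently skips, and the obstruction you isolate there is real rather than an artefact of your particular approach.
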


		\begin{proof}
			Let $\xi$ be an open representation of a compact space $K$.
			Let $\str a_m$ denote the standard enumeration $\NN\to\Sigma^*$ that simply removes the first digit.
			Recursively define a function $\ell:\omega\to\omega$ as follows:

			For $m\in \NN$ set
			\[ U_{0,m}:= \{\varphi\mid \varphi(\varepsilon)= \str a_m\}\subseteq\Sigma^{*\Sigma^*}. \]
			The sequence $(U_{0,m})_{m\in\NN}$ forms an open cover of the Baire space.
			Since $\xi$ is open, the images of these sets compose an open cover of $K$.
			Since $K$ is compact, there exists a finite subset $I$ of $\NN$ such that $(\xi(U_{0,n}))_{n\in I}$ covers $K$.
			Define
			\[ \ell(0) := \max\{\length{n} \mid n\in I\} \]
			Note that this means that each element of $K$ has a name $\varphi$ such that $\flength{\varphi}(0)=\length{\varphi(\varepsilon)}\leq \ell(0)$.

			Now assume that $\ell$ has been defined up to value $n$ such that each element of $K$ has a name $\varphi$ such that for each $k\leq n$ it holds that $\flength{\varphi}(k)\leq \ell(k)$.
			To construct $\ell(n+1)$ let $J$ be the set of all strings of length exactly $n+1$.
			Define a sequence $(U_{n+1,m})_{m\in \NN^{J}}$ as follows:
			\[ U_{n+1,(m_{\str b})_{\str b\in J}}:= \{\varphi\mid \forall k\leq n:\flength{\varphi}(k)\leq \ell(k)\text{ and } \forall \str b\in J:\varphi(\str b) = m_{\str b}\}. \]
			Note that each of these sets is an open subset of Baire space and that the assumptions of the recursion make sure that the union of their images is $K$.
			Thus, again from the compactness of $K$ it follows that there is a finite subset $I$ of $\NN^J$ such that $(\xi(U_{n+1,m}))_{m\in I}$ covers $K$. 
			Set
			\[ \ell(n+1):= \max\{\length{m}\mid \exists (n_{\str a})_{\str a\in J}\in I,\exists\str b\in J: m = n_{\str b}\}. \]
			By choice of $I$ it is again the case that each element of $K$ has a name $\varphi$ such that for each $k\leq n+1$ it holds that $\flength{\varphi}(k)\leq \ell(k)$.

			The constructed function is a length of the representation by its definition.
		\end{proof}

		All representations considered so far were polynomial-time equivalent to open representations.
	This is due to the fact that only representations of metric spaces were considered and metric spaces are second countable.
	The abstract concept from computable analysis that remains appropriate in the general case is the following:
	\begin{definition}
		A representation $\xi$ of a topological space $X$ is called \demph{admissible}, if for any continuous representation $\xi'$ there is a translation (that is: A continuous realizer of the identity) to $\xi$.
	\end{definition}
	A representation of a second countable Hausdorff space is admissible if and only if it is topologically equivalent to an open representation \cite{MR1923900}.
	
	The following corollary can be obtained from \Cref{resu:length of open representations} together with the observation, that any admissibly represented compact Hausdorff space is a metrizable space (for instance \cite[Proposition 3.3.2]{SchroederPhD}).
	\begin{corollary}[Length of restrictions]\label{resu:length of restrictions}
		Let $\xi$ be an admissible representation of a Hausdorff space $\XX$.
		Then the range restriction of $\xi$ to any compact subset $K\subset \XX$ has a length.
	\end{corollary}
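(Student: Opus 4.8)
The plan is to reduce the claim to the stated hemicompactness of the family $(K_l)_{l\in\omega^\omega}$ by exhibiting a representation of $K$ whose domain is genuinely compact, and then pushing it forward through the universal property of admissibility. Write $\xi|^K$ for the range restriction of $\xi$ to $K$, i.e.\ the representation of $K$ with domain $\xi^{-1}(K)$ that agrees with $\xi$ there. First I would record that $\xi|^K$ is again admissible: admissibility is inherited by subspaces (see \cite{SchroederPhD}), so $\xi|^K$ is an admissible representation of the subspace $K$. Since $K$ is a compact Hausdorff space carrying an admissible representation, the observation preceding the statement shows that $K$ is metrizable; being compact metrizable it is in particular separable and second countable.

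Next I would produce a representation of $K$ with compact domain. By the Alexandroff--Hausdorff theorem every nonempty compact metric space is a continuous image of the Cantor space, so there is a continuous surjection $g$ from $\mathcal C := \{\varphi\in\B\mid \varphi(\str a)\in\{\sdzero,\sdone\}\text{ for all }\str a\in\Sigma^*\}$ onto $K$ (note $\mathcal C$ is homeomorphic to $2^\omega$). The set $\mathcal C$ is a product of finite discrete spaces, hence a compact subset of $\B$; and $\xi_0:=g$, with $\dom(\xi_0)=\mathcal C$, is a continuous representation of $K$. (The empty case is trivial, any $\ell$ being a length.)

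Now the universal property does the work. Since $\xi|^K$ is admissible and $\xi_0$ is a continuous representation of the same space $K$, there is a continuous translation $F$ from $\xi_0$ to $\xi|^K$; in particular $F$ is defined and continuous on all of $\dom(\xi_0)=\mathcal C$ and satisfies $\xi|^K(F(\varphi))=\xi_0(\varphi)$ for every $\varphi\in\mathcal C$. As $\mathcal C$ is compact and $F$ is continuous, the image $F(\mathcal C)$ is a compact subset of $\B$, so by the property that every compact subset of $\B$ lies in some $K_l$ there is an $\ell\in\omega^\omega$ with $F(\mathcal C)\subseteq K_\ell$. Finally, given any $x\in K$, surjectivity of $\xi_0$ yields $\varphi\in\mathcal C$ with $\xi_0(\varphi)=x$, whence $F(\varphi)\in K_\ell$ is a $\xi|^K$-name of $x$. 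Thus $\xi|^K(K_\ell)=K$ and $\ell$ is a length of $\xi|^K$.

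The step I expect to be the real obstacle---and the reason I route through a compact domain rather than directly through \Cref{resu:length of open representations}---is that having a length is not obviously preserved under topological equivalence: a continuous translation from an open representation (which has a length by \Cref{resu:length of open representations}) into $\xi|^K$ may inflate name lengths without bound, since the relevant set of short source names need not be compact. Replacing the equivalent open representation by the compact-domain representation $\xi_0$ is exactly what repairs this, because continuity automatically bounds the lengths of the images on the compact $\mathcal C$. The remaining points to verify carefully are the inheritance of admissibility by $\xi|^K$ and the compactness of $\mathcal C$ in the product topology of $\B$; both are routine.
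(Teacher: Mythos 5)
Your proof is correct, but it takes a genuinely different route from the paper's. The paper derives the corollary from \Cref{resu:length of open representations}: since $K$ is an admissibly represented compact Hausdorff space it is metrizable, hence second countable, so the range restriction $\xi|^K$ is topologically equivalent to an open representation of the compact space $K$, to which the recursive finite-subcover construction of \Cref{resu:length of open representations} applies. You bypass that theorem entirely: the Alexandroff--Hausdorff theorem gives a continuous representation of $K$ with genuinely compact domain (the copy of Cantor space inside $\B$), the universal property of admissibility pushes it forward into $\dom(\xi|^K)$, and the property that every compact subset of the Baire space is contained in some $K_l$ finishes the argument. What your route buys is precisely the point you flag: a length is not obviously transported along a mere topological equivalence, because the set of short names of the equivalent open representation, $K_\ell\cap\dom(\xi')$, need not be compact (the domain need not be closed), so its image under the continuous translation need not lie in any $K_{\ell'}$. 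Your compact-domain intermediary makes this transfer step airtight, and in that sense your argument is more careful and more self-contained than the paper's one-line derivation; the cost is the appeal to Alexandroff--Hausdorff in place of the already-established \Cref{resu:length of open representations}. The two auxiliary facts you lean on --- that admissibility is inherited by range restrictions to subspaces, and that the product of two-element discrete spaces over the countable index set $\Sigma^*$ is a compact subset of $\B$ --- are both standard, and the first is implicitly used by the paper as well when it applies the metrizability observation to $K$.
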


\section{Metric entropy}\label{sec:metric entropy}

	Throughout this chapter let $\N$ be a represented metric space, that is $\N=(\Nb,d)$ is a pair of a represented space $\Nb = (\Nn,\xi)$ and a metric $d$.
	We do not impose any compatibility conditions, however, most results assume the metric to be computable in bounded time, this in turn implies continuity of the metric with respect to the topology that $\Nb$ has as a represented space.

	It is well known that in a complete metric space a subset is relatively compact if and only if it is totally bounded.
	The following notion is a straight-forward quantification of total boundedness and can be used to measure the \lq size\rq\ of compact subsets of metric spaces.
	It was first considered in \cite{MR0112032}, where some of the names originate from.
	It is also regularly used in proof mining \cite{MR2445721,Kohlenbach} and constructive mathematics \cite{MR0221878}, where other names are taken from.
	A comprehensive overview can be found in \cite{lorentz1966}.

	\begin{definition}\label{def:metric entropy and size}
		A function $\nu:\omega \to \omega$ is called \demph{modulus of total boundedness} of a subset $K$ of a metric space, if for any $n\in\omega$ the set $K$ can be covered by $2^{\nu(n)}$ closed balls of radius $2^{-n}$.
		The smallest modulus of total boundedness of a set $K$ is called the \demph{metric entropy} of the set and denoted by $\size K$.
	\end{definition}

	Due to the use of closed balls it holds that $\size K=\size{\overline{K}}$.
	The metric entropy of a compact set is often hard to get hold of while upper bounds are easily specified.
	In a complete metric space a closed set allows for a modulus of total boundedness if and only if it is compact.

	A modulus of total boundedness should intuitively be understood as an upper bound on the size of a compact set.
	For providing lower bounds, a different notion is more convenient.

	\begin{definition}\label{def:spanning bound}
		A function $\eta:\omega\to\omega$ is called a \demph{spanning bound} of a subset $K\subseteq \Nn$ if for any $n$ there exist elements $x_1,\ldots,x_{2^{\eta(n)}}$ such that for all $i,j\in\NN$
		\[ i\neq j \quad \Rightarrow\quad d(x_i,x_j)> 2^{-n+1}. \]
	\end{definition}
	This means, that the closed $2^{-n}$-balls around the points $x_i$ are disjoint.
	It is not difficult to see that there is a biggest spanning bound if and only if the set $K$ is relatively compact.
	The following states that spanning bounds are lower bounds to the metric entropy:

	\begin{proposition}\label{resu:spanning bounds}
		Let $K$ be a subset of a metric space.
		For any spanning bound $\eta$ of $K$ it holds that $\eta(n)\leq\size K(n)$.
	\end{proposition}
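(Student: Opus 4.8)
The plan is to run the standard packing--covering comparison. I would fix $n\in\omega$ and abbreviate $m:=\size K(n)$; by the definition of the metric entropy the set $K$ then admits a covering by $2^{m}$ closed balls of radius $2^{-n}$, whose centres I denote $c_1,\ldots,c_{2^m}$. By the definition of a spanning bound there are points $x_1,\ldots,x_{2^{\eta(n)}}\in K$ that are pairwise more than $2^{-n+1}$ apart.

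I would then assign to each separated point a covering ball that contains it: since the balls cover $K$ and each $x_i$ lies in $K$, for every $i$ there is an index $k_i$ with $x_i\in B^c_{2^{-n}}(c_{k_i})$. The heart of the argument is to verify that $i\mapsto k_i$ is injective. Assuming $k_i=k_j$ for some $i\neq j$, both $x_i$ and $x_j$ sit in the closed ball of radius $2^{-n}$ around $c_{k_i}$, so the triangle inequality gives
\[ d(x_i,x_j)\le d(x_i,c_{k_i})+d(c_{k_i},x_j)\le 2^{-n}+2^{-n}=2^{-n+1}, \]
contradicting the separation $d(x_i,x_j)>2^{-n+1}$. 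Hence no two of the $x_i$ can share a ball and the assignment is injective.

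Injectivity bounds the number of separated points by the number of covering balls, i.e.\ $2^{\eta(n)}\le 2^{m}$; taking binary logarithms yields $\eta(n)\le\size K(n)$, and since $n$ was arbitrary this is the claim.

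I do not anticipate a real obstacle: the one point that has to be handled with care is that the factor $2$ between the covering radius $2^{-n}$ and the separation threshold $2^{-n+1}=2\cdot 2^{-n}$ is precisely what makes the triangle inequality close, so the strict inequality in the definition of a spanning bound is exactly what is needed to exclude the borderline case of two points lying antipodally in a single ball. I would also make explicit the (implicit) requirement that the witnessing points $x_i$ belong to $K$, since otherwise the covering of $K$ would not control them.
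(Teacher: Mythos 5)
Your proposal is correct and is essentially the paper's own argument: both cover $K$ by $2^{\size K(n)}$ closed $2^{-n}$-balls, place the separated points into these balls, and use the triangle inequality to show that two points in a common ball would be at distance at most $2^{-n+1}$, contradicting the strict separation; the paper phrases this as a pigeonhole contradiction while you phrase it as injectivity of the assignment, which is the same argument. Your closing remark that the witnessing points $x_i$ must be understood to lie in $K$ is a fair observation about an implicit assumption in \Cref{def:spanning bound}.
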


	\begin{proof}
		Show by contradiction that $\size K(n)+1$ is not a value of a spanning bound.
		Assume it is a value of a spanning bound.
		This means that there are $2^{\nu(n)+1}$ elements $x_i$ such that $d(x_i,x_j)\geq 2^{-n+1}$.
		By definition of $\size K$ there are $2^{\nu(n)}$ elements $y_i$ such that the closed $2^{-n}$-balls around the $y_i$ cover $K$.
		Since each $x_i$ has to lie in one of the balls and there are more $x_i$ than $y_i$, there are indices $j$, $l$ and $m$ such that $x_j$ and $x_l$ both lie in the closed $2^{-n}$-ball around $y_m$.
		By triangle inequality it follows that
		\[ d(x_j,x_l) \leq d(x_j,y_m) + d(y_m,x_l) < 2 \cdot 2^{-n}, \]
		which is a contradiction to $\size K(n)+1$ being a value of a spanning bound.
	\end{proof}

	\subsection{Metric entropy in normed spaces}
		In general, the metric entropy heavily depends on the metric chosen for a space, and not only on the topology it induces.
		For normed spaces, however, the situation is a lot better.
		Recall that a norm $\norm\cdot$ on a vector space $X$ induces a metric by $d(x,y):= \norm{x-y}$.
		But not every metric vector space can be equipped with a norm.

		\begin{proposition}\label{resu:equivalent norms}
			Let $\|\cdot\|$ and $\|\cdot\|'$ be norms on a vector space $X$ that induce the same topology.
			Then there exists a constant $C\in\omega$ such that whenever $\nu$ is a modulus of total boundedness of a subset $K$ of $X$ with respect to $\|\cdot\|$, then $\nu'(n):=\nu(n+C)$ is a modulus of total boundedness of $K$ with respect to $\|\cdot\|'$ and vice versa.
		\end{proposition}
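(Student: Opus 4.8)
The plan is to reduce the statement to the classical fact that two norms inducing the same topology are equivalent in the metric (Lipschitz) sense. First I would observe that since $\norm{\cdot}$ and $\norm{\cdot}'$ generate the same topology, the identity map on $X$ is a homeomorphism between $(X,\norm{\cdot})$ and $(X,\norm{\cdot}')$, continuous in both directions. Being linear, such a map is continuous if and only if it is bounded, so there exist constants $c_1,c_2>0$ with
\[ c_1\norm{x}\leq\norm{x}'\leq c_2\norm{x}\qquad\text{for all }x\in X. \]
This passage from topological to Lipschitz equivalence is the only place where the vector space structure is genuinely used, and it is also the main (though entirely standard) obstacle: for an arbitrary, possibly infinite-dimensional $X$ one cannot invoke the finite-dimensional equivalence-of-norms theorem, but must instead rely on boundedness of continuous linear maps.

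Next I would record how these constants move balls. If $\norm{x-y}\leq r$, then $\norm{x-y}'\leq c_2 r$, so the closed ball of radius $r$ about a point in the metric $\norm{\cdot}$ is contained in the closed ball of radius $c_2 r$ about the same point in the metric $\norm{\cdot}'$; symmetrically, using $\norm{x}\leq\norm{x}'/c_1$, a closed $\norm{\cdot}'$-ball of radius $r$ is contained in the closed $\norm{\cdot}$-ball of radius $r/c_1$ about the same center.

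With this in hand the covering translation is immediate. Let $\nu$ be a modulus of total boundedness of $K$ for $\norm{\cdot}$. Fixing $n\in\omega$, by definition $K$ is covered by $2^{\nu(n+C)}$ closed $\norm{\cdot}$-balls of radius $2^{-(n+C)}$. By the containment just noted, each such ball lies inside the closed $\norm{\cdot}'$-ball of radius $c_2 2^{-(n+C)}$ about the same center. Choosing the integer $C$ so large that $c_2 2^{-C}\leq 1$, i.e. $C\geq\log_2 c_2$, forces $c_2 2^{-(n+C)}\leq 2^{-n}$, so $K$ is covered by $2^{\nu(n+C)}$ closed $\norm{\cdot}'$-balls of radius $2^{-n}$. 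This says precisely that $\nu'(n):=\nu(n+C)$ is a modulus of total boundedness of $K$ for $\norm{\cdot}'$.

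Finally I would handle the converse direction by the symmetric argument, using the inclusion of $\norm{\cdot}'$-balls into $\norm{\cdot}$-balls, which requires $C\geq\log_2(1/c_1)$. Taking
\[ C:=\max\bigl\{0,\lceil\log_2 c_2\rceil,\lceil\log_2(1/c_1)\rceil\bigr\}\in\omega \]
serves both directions at once. Here it is worth noting that enlarging $C$ is harmless, since a modulus of total boundedness remains one if its values are increased (more balls may always be used); hence a single $C$ works uniformly for all subsets $K$, as claimed.
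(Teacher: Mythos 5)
Your proof is correct and follows the same route as the paper: both reduce the statement to the classical fact that two norms inducing the same topology are equivalent, i.e.\ $2^{-C}\norm{x}\leq\norm{x}'\leq 2^C\norm{x}$, and then translate coverings by rescaling ball radii. The paper leaves the ball-translation step as ``straightforward,'' whereas you carry it out explicitly; there is no substantive difference.
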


		\begin{proof}
			Note, that norms inducing the same topology are equivalent.
			This means that there exists a constant $C$ such that for any $x\in X$
			\[ 2^{-C}\|x\|\leq \|x\|' \leq 2^C \|x\|. \]
			From these inequalities it is straightforward to compute the relation between the metric entropies.
		\end{proof}

		Subsets of $\RR^n$, and with it all finite dimensional vector spaces only have compact subsets of linear metric entropy.
		Infinite vector spaces, in contrast, always have bounded compact sets of arbitrary size.

		\begin{theorem}\label{resu:sets of arbitrary size}
			Whenever $(X,\norm\cdot)$ is an infinite dimensional normed space and $\mu:\omega\to\omega$ is a function, then there is a compact subset $K$ of $X$ such that $\size{K}\geq \mu$.
		\end{theorem}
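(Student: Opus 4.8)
The plan is to build, separately at each dyadic scale, a large finite set of pairwise well-separated points, and then to collect all of these points together with their common limit $0$ into a single set $K$ that is manifestly compact. Since the points attached to scale $n$ will number exactly $2^{\mu(n)}$ and will be more than $2^{-n+1}$ apart, the function $\mu$ becomes a spanning bound of $K$ in the sense of \Cref{def:spanning bound}, and \Cref{resu:spanning bounds} then gives $\mu(n)\leq\size K(n)$ for all $n$, which is exactly the assertion.

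First I would invoke infinite-dimensionality through Riesz's lemma: given any finite-dimensional (hence closed and proper) subspace $Y\subsetneq X$ and any $\theta\in(0,1)$, there is a unit vector $x$ with $\dist(x,Y)>\theta$. Iterating this with $\theta=\tfrac12$, starting from an arbitrary unit vector $e_1$ and at each step taking $Y=\Span(e_1,\dots,e_m)$, produces a sequence $(e_k)_{k\in\NN}$ of unit vectors with $\norm{e_k-e_l}>\tfrac12$ for all $k\neq l$.

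Next I would rescale these vectors scale by scale. For each $n\in\omega$ and each $k$ with $1\leq k\leq 2^{\mu(n)}$ set $p^{(n)}_k:=2^{-n+3}e_k$. Two such points at the same scale satisfy $\norm{p^{(n)}_k-p^{(n)}_l}=2^{-n+3}\norm{e_k-e_l}>2^{-n+2}>2^{-n+1}$, so the $2^{\mu(n)}$ points $p^{(n)}_1,\dots,p^{(n)}_{2^{\mu(n)}}$ witness that $\mu$ is a spanning bound of the set $K:=\{0\}\cup\{p^{(n)}_k : n\in\omega,\ 1\leq k\leq 2^{\mu(n)}\}$, provided we know that $K\subseteq X$ is compact. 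Note that cross-scale distances are irrelevant here: a spanning bound only constrains the points belonging to one fixed scale at a time.

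The main point to check — and the only place where care is needed, since $X$ is not assumed complete — is that $K$ is genuinely compact rather than merely totally bounded. Here the explicit structure does the work: $\norm{p^{(n)}_k}=2^{-n+3}\to0$ as $n\to\infty$, while each scale $n$ contributes only the finitely many points $p^{(n)}_1,\dots,p^{(n)}_{2^{\mu(n)}}$. Hence any sequence in $K$ either repeats one of finitely many values (giving a constant convergent subsequence) or visits arbitrarily large scales along a subsequence, whose norms then tend to $0$ so that it converges to $0\in K$; either way it admits a subsequence converging in $K$, so $K$ is sequentially compact and therefore compact. With compactness of $K$ established, \Cref{resu:spanning bounds} applied to the spanning bound $\mu$ yields $\size K\geq\mu$, completing the argument.
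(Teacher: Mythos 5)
Your proof is correct and takes essentially the same route as the paper: Riesz's lemma yields $\tfrac12$-separated unit vectors, these are rescaled towards $0$ to form a compact set containing $0$, and compactness is verified via sequential compactness exactly as in the paper's argument. The only (harmless) difference is bookkeeping: you place all $2^{\mu(n)}$ points at scale $n$ and pass through the spanning-bound machinery of \Cref{def:spanning bound} and \Cref{resu:spanning bounds}, whereas the paper distributes $2^{\mu(i)}-2^{\mu(i-1)}$ points per scale and counts covering balls directly — your version is, if anything, slightly cleaner about the separation constants.
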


		\begin{proof}
			By Riesz's Lemma there exists a sequence of elements $x_i$ such that $\|x_i\|=1$ and $\|x_i-x_j\|>\frac12$ for all $i\neq j$.
			This means that a closed ball of radius $1$ around zero contains all of the $x_i$, but a closed ball of radius $\frac 12$ around any point contains at most one $x_i$.
			Define a set $K\subseteq X$ by
			\[ K := \{0\} \cup\bigcup_{i\in\omega} \bigcup_{j=1}^{2^{\mu(i)}-2^{\mu(i-1)}}\{ 2^{-i+1}x_{j}\}, \]
			where we use the convention $2^{\mu(-1)}=0$.
			To cover $K$ with balls of radius $2^{-n}$ we need at least one ball for each of the $2^{\mu(n)}$ many elements $x\in K$ with $\|x\|\geq 2^{-n+1}$.
			Therefore $\big|K\big|(n)\geq \mu(n)$ for all $n\in\omega$.

			To argue that the constructed set $K$ is compact, show sequential compactness.
			For metric spaces this is equivalent to compactness.
			Consider an arbitrary sequence in $K$.
			From the construction it is clear that the ball of any radius around zero contains all but finitely many elements of $K$.
			By the pigeonhole principle there is either an element of $K$ that is visited infinitely often by the sequence, or there is an element of the sequence in any a ball of any radius around zero.
			In the first case there is a constant subsequence, in the second case there is a subsequence that is convergent to zero.
			Therefore the set is compact.
		\end{proof}

		It is well known that for Banach spaces finite dimensionality and local compactness coincide.
		A similar characterization follows from the above:

		\begin{corollary}
			For a normed space the following are equivalent:
			\begin{enumerate}
				\item It is infinite dimensional.
				\item For any $\nu:\omega\to\omega$ there exists a compact subset $K$ such that $\size{K}\geq \nu$.
				\item There exists a compact subset that has no linear modulus of total boundedness.
			\end{enumerate}
		\end{corollary}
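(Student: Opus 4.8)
The plan is to prove the three statements equivalent via the cycle $(1)\Rightarrow(2)\Rightarrow(3)\Rightarrow(1)$, where only the last implication requires genuinely new work, since the other two follow almost immediately from results already established above.

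The implication $(1)\Rightarrow(2)$ is nothing but a restatement of \Cref{resu:sets of arbitrary size}: given any $\nu\colon\omega\to\omega$, that theorem produces a compact $K$ with $\size K\geq\nu$, which is exactly the content of $(2)$. For $(2)\Rightarrow(3)$ I would instantiate $(2)$ at a super-linear function, say $\nu(n):=n^2$, obtaining a compact $K$ with $\size K\geq\nu$. Since by \Cref{def:metric entropy and size} the metric entropy is the pointwise smallest modulus of total boundedness, every modulus $\mu$ of $K$ satisfies $\mu\geq\size K\geq\nu$, so $\mu$ cannot be dominated by a linear function; hence $K$ witnesses $(3)$.

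For $(3)\Rightarrow(1)$ I would argue by contraposition, showing that a finite dimensional normed space admits only compact subsets of linear metric entropy. By \Cref{resu:equivalent norms} it suffices to verify this for one convenient norm, so I fix $X=\RR^k$ with the supremum norm. A compact, hence bounded, subset $K$ is contained in a cube $[-2^C,2^C]^k$ for some $C\in\omega$. A closed ball of radius $2^{-n}$ in the supremum norm is an axis-parallel subcube of side $2^{1-n}$, and such subcubes cover the big cube using at most $2^{k(n+C)}$ of them. Thus $\mu(n):=k(n+C)$ is a modulus of total boundedness of $K$, so $\size K\leq\mu$ is linear. Transporting this back along an arbitrary norm via \Cref{resu:equivalent norms} only shifts the argument by a fixed additive constant and therefore preserves linearity, so every compact subset of a finite dimensional space has linear metric entropy and $(3)$ fails.

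The one step carrying real content is the finite dimensional bound inside $(3)\Rightarrow(1)$; everything else is bookkeeping on top of \Cref{resu:sets of arbitrary size} and \Cref{resu:equivalent norms}. The main point to get right there is the volumetric counting argument in $\RR^k$, together with the observation that \Cref{resu:equivalent norms} allows one to replace an arbitrary norm on a $k$-dimensional space by the supremum norm at the cost of only an additive constant, which is harmless for the linearity claim.
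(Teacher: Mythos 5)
Your proof is correct and follows essentially the same route as the paper: $(1)\Rightarrow(2)$ is \Cref{resu:sets of arbitrary size}, $(2)\Rightarrow(3)$ is immediate, and $(3)\Rightarrow(1)$ is handled by contraposition via \Cref{resu:equivalent norms} and the supremum norm on $\RR^k$. The only difference is that you spell out the volumetric counting in the cube, which the paper leaves as an unproved assertion.
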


		\begin{proof}
			\begin{itemize}
				\item[]
				\item[{\itshape1.}]$\Rightarrow${\itshape 2.}: This is \Cref{resu:sets of arbitrary size}.
				\item[{\itshape2.}]$\Rightarrow${\itshape 3.}: This is trivial.
				\item[{\itshape3.}]$\Rightarrow${\itshape1.}: By contradiction assume that the normed vector space was finite dimensional.
				Choose a basis to identify it with $\RR^d$.
				Since all norms on a finite dimensional vector space are equivalent, \Cref{resu:equivalent norms} shows that it is irrelevant which norm is used for whether or not all compact sets have a linear modulus of total boundedness.
				With respect to the supremum norm all sets have a linear modulus of total boundedness.
				This contradicts the third item.
			\end{itemize}
		\end{proof}

	\subsection{Metric entropy and complexity}\label{sec:sub:metric entropy and complexity}
		
		This chapter investigates connections between the concept of metric entropy and computational complexity.
		Recall the exhausting family $(K_l)_{l\in\omega^\omega}$ of compact subsets $K_l$ of the Baire space indexed by functions $l:\omega\to\omega$, namely:
		\[ K_l := \big\{\varphi\in\B \mid \forall n:\flength{\varphi}(n)\leq l(n) \big\} \]
		The following example is very instructive for the content of this section:		
		\begin{example}[A family of compacts]\label{ex:a family of compact sets}			
			Let $\M=(\Mn,d,(r_i))$ be a complete separable metric space with a dense sequence.
			For the Cauchy representation $\xi_{\M}$ of $\M$ from \Cref{def:Cauchy representation} it holds that
			\[ \xi_{\M}(K_l) = \bigcap_{n\in\omega} \bigcup_{i=0}^{2^{l(n)}-1} B^c_{2^{-n}}(r_i). \]
			The set on the right hand side is closed as an intersection of finite unions of closed sets and its metric entropy is bounded in terms of $l$.
			Due to the completeness of $\M$ the set is compact and for all $n\in\omega$ and non-decreasing $l:\omega\to\omega$
			\[ \size{\xi_{\M}(K_l)}(n) \leq l(n). \]
		\end{example}

		Surprisingly, a similar connection exists for arbitrary representations of metric spaces.
		The goal of this section is to obtain from a bound of the running time of the metric a modulus of total boundedness of the sets $\xi(K_l)$.
		For this we often consider the function that arises if the first order argument of a running time $T:\omega^\omega\times\omega\to \omega$ is fixed to some $l:\omega\to\omega$.
		We denote this function by $T(l,\cdot)$.
		\begin{theorem}[Complexity $\leadsto$ metric entropy]\label{resu:from complexity to metric entropy}
			Let $(\Nb,d)$ be a re\-pre\-sen\-ted metric space such that $d$ is computable in time $T$ on $\Nb\times \Nb$.
			Then there exists some $C\in\omega$ such that for all $l\in\omega^\omega$ the function $CT(l,\cdot +2)^2 + C$ is a modulus of total boundedness of $\xi_{\Nb}(K_l)$.
		\end{theorem}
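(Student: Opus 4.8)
The plan is to convert the time bound on $d$ into a covering of $\xi_{\Nb}(K_l)$ by attaching to each point a finite \emph{transcript} of the computation of the metric and showing that points with the same transcript must lie within $2^{-n}$ of one another. So I would first fix an oracle machine $M$ computing a realizer $F$ of $d$ (mapping names of pairs to names of reals in the sense of \Cref{def:standard representation of the reals}) that runs in time $T$ on $\Nb\times\Nb$. Fix the target precision $n$ and let $\str a$ be the input string, of length roughly $n+2$, for which reading $F(\cdot)(\str a)$ returns an approximation to the corresponding real up to $2^{-n-1}$; this is the source of the shift $\cdot+2$. For a point $x\in\xi_{\Nb}(K_l)$ choose a name $\varphi\in K_l$ and run $M$ on the \emph{self-pairing} $\langle\varphi,\varphi\rangle$ (which is a name of $(x,x)$) on input $\str a$. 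Since the tuple function gives $\flength{\langle\varphi,\varphi\rangle}\le 2(l+1)$, the run halts within $t:=T(\flength{\langle\varphi,\varphi\rangle},\length{\str a})$ steps, during which $M$ issues at most $t$ queries and scans at most $t$ answer bits. Record as the transcript $\tau(x)$ the sequence of queries together with the answer prefixes actually scanned.

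The heart of the argument, and the step I expect to be the main obstacle, is the following closeness claim: if $x,x'\in\xi_{\Nb}(K_l)$ satisfy $\tau(x)=\tau(x')$, then $d(x,x')\le 2^{-n}$. To prove it I would run $M$ on the \emph{crossed} oracle $\langle\varphi,\varphi'\rangle$ and show by induction on the steps that this run follows the very same transcript $\tau:=\tau(x)=\tau(x')$ and produces the same output as the run on $\langle\varphi,\varphi\rangle$. The point of the self-pairing is exactly that an answer prefix recorded in $\tau$ decomposes, through the interleaving of the tuple function, into a prefix of $\varphi$ in the first component and a prefix of $\varphi'$ in the second, each of which is consistent with what $\langle\varphi,\varphi'\rangle$ returns; since $M$ is deterministic its control flow is driven only by the bits it scans, so the crossed run cannot diverge. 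Hence the crossed run outputs the same string as the run on $\langle\varphi,\varphi\rangle$, which is a valid $2^{-n-1}$-approximation of $d(x,x)=0$. But $\langle\varphi,\varphi'\rangle$ is a genuine name of $(x,x')$, so by \Cref{def:realizer} the same output is also a valid $2^{-n-1}$-approximation of $d(x,x')$; the triangle inequality then yields $d(x,x')\le 2^{-n-1}+2^{-n-1}=2^{-n}$.

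With the closeness claim in hand the covering is immediate. Grouping the points of $\xi_{\Nb}(K_l)$ by their transcript, each group has diameter at most $2^{-n}$, so a single closed ball of radius $2^{-n}$ centred at any representative contains the whole group; the number of balls needed is therefore at most the number of possible transcripts. Here I would bound this number crudely: there are at most $t$ queries, and the prefix scanned for each is a string of length at most $t$, so the count is at most $2^{O(t^2)}$. Absorbing the passage from $\flength{\langle\varphi,\varphi\rangle}\le 2(l+1)$ back to $l$, together with all multiplicative constants, into a single $C\in\omega$ and into the square, this is at most $2^{CT(l,\cdot+2)^2+C}$, exactly the claimed modulus of total boundedness in the sense of \Cref{def:metric entropy and size}.

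Two points will need care in a full write-up. The first is the time-and-length accounting: the pairing doubles lengths and the crude transcript count is where the square genuinely enters, so I would not expect the bound to be tight (consistent with the remarks preceding the theorem), and I must verify that the running time is honestly bounded by $t$ on all of $K_l$ rather than merely on a single oracle. The second, and the real obstacle, is the inductive simulation in the closeness claim: I must argue rigorously that oracle-answer-dependent branching does not let the crossed run read beyond the recorded prefixes or make a different query, which is precisely what the self-pairing is designed to guarantee, and I should check the interleaving decomposition of prefixes carefully so that the first- and second-component prefixes recorded in $\tau$ really are simultaneously prefixes of $\varphi$ and of $\varphi'$ at each queried string.
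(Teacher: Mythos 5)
Your proposal is correct and follows essentially the same route as the paper: your ``transcript'' $\tau(x)$ is exactly the paper's communication function $L(\langle\varphi,\varphi\rangle,\cdot)$, the self-pairing/crossed-oracle argument is the paper's key lemma on communication functions (property (v)), and the count of transcripts via the time bound yields the same $2^{\bigo(T^2)}$ covering. The only difference is organizational: the paper factors the argument through an intermediate Boolean ``equality function'' (whence the shift $\cdot+2$) and a separate lemma on communication functions, whereas you inline both and work directly with the realizer of $d$; the interleaving subtlety you flag is addressed (equally briefly) in the paper's proof.
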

		We may express this in brief by writing
		\[ \size{\xi_{\Nb}(K_l)} \in \bigo(T(l,\cdot+2)^2). \]
		Note however, that the above statement does not indicate the independence of the constant from the function $l$.
		The proof is postponed to the next section.
		For now consider some implications of this theorem and some examples.
		For instance:
		\begin{example}[\ref{ex:computable metric space}, \ref{ex:computable metric spaces continued}, \ref{ex:a family of compact sets} continued]
			Assume that $(\MM,d,(r_i))$ is a computable metric space, where the representation $\xi_{\MM}$ of $\MM$ is the Cauchy representation from \Cref{ex:computable metric space}.
			Assume that the discrete metric (cf. \Cref{eq:the discrete metric} on page \pageref{eq:the discrete metric}) is polynomial-time computable.
			According to \Cref{ex:computable metric spaces continued} the metric $d$ is computable in time polynomial in $l(n+2)$ and $n$.
			Thus, by \Cref{resu:from complexity to metric entropy} the metric entropy of $\xi_{\MM}(K_l)$ is bounded by a polynomial in $l(n+4)$ and $n$.
			This estimate is reasonably close to the estimate $l(n)$ for non-decreasing $l$ from \Cref{ex:a family of compact sets}.
		\end{example}

		Using Weihrauch's TTE and Weihrauch's representations (see \cite{MR1795407}) is equivalent to restricting to representations $\xi$ such that all names fulfill $\varphi(\str a) = \varphi(\sdone^{\length{\str a}})$ and $\length \varphi\equiv1$.
		We call such representations \demph{Cantor space representations}.
		For a Cantor space representation computability in time $T(l,n)$ is equivalent to computability in time $t:\omega\to\omega,$ $t(n) := T(1,n)$ within the TTE framework.
		\begin{corollary}
			Let $(\Nb,d)$ be a Cantor space represented metric space such that the metric is computable in time $t:\omega\to\omega$.
			Then $\size{\Nb}$ is bounded by a polynomial in $t(n+2)$.
		\end{corollary}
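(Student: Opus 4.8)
The plan is to reduce the statement to a single application of \Cref{resu:from complexity to metric entropy} with the first-order argument fixed to the constant function $l = 1$, exploiting the fact that for a Cantor space representation the whole family $(K_l)_{l\in\omega^\omega}$ collapses, in the relevant sense, to the single index $l = 1$. The point is that a Cantor space representation controls the length of \emph{every} name, so that $\xi_{\Nb}(K_1)$ is not merely a proper compact subset but the entire space $\Nn$.

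First I would unpack the definition recalled just before the statement. A Cantor space representation is one all of whose names $\varphi$ satisfy $\flength\varphi \equiv 1$; hence $\dom(\xi_{\Nb}) \subseteq K_1$, where $1$ denotes the constant function with value $1$. Since $\xi_{\Nb}$ is surjective, this forces $\xi_{\Nb}(K_1) = \Nn$, i.e. the image of this one compact set is all of the underlying space. Next I would translate the time bound: as noted in the text, computability of the metric in time $t$ within the TTE framework is exactly computability in time $T$ for a second-order running time $T$ with $T(1,\cdot) = t$, so $d$ is computable in time $T$ on $\Nb\times\Nb$ for such a $T$.

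It then remains to feed $l = 1$ into \Cref{resu:from complexity to metric entropy}. This produces a constant $C\in\omega$ for which $C\,T(1,\cdot+2)^2 + C$ is a modulus of total boundedness of $\xi_{\Nb}(K_1)$. Using $\xi_{\Nb}(K_1) = \Nn$ together with $T(1,n+2) = t(n+2)$, this reads
\[ \size{\Nb}(n) \;\leq\; C\,t(n+2)^2 + C, \]
which is a polynomial in $t(n+2)$, as claimed; incidentally this also shows $\Nn$ is totally bounded. I do not expect a genuine obstacle here, since the corollary is essentially a specialization of the general theorem. The only step needing care is the observation that the uniform constant-length condition of Cantor space representations is precisely what lets $K_1$ witness \emph{every} element of $\Nn$, so that the theorem bounds the metric entropy of the whole space rather than of one of the sets $\xi_{\Nb}(K_l)$ for larger $l$.
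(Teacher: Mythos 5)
Your proposal is correct and is exactly the argument the paper leaves implicit: a Cantor space representation forces $\dom(\xi_{\Nb})\subseteq K_1$, hence $\xi_{\Nb}(K_1)=\Nn$, and applying \Cref{resu:from complexity to metric entropy} with the constant second-order bound $T(l,n):=t(n)$ yields $\size{\Nb}(n)\leq C\,t(n+2)^2+C$.
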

		Whenever there is a Cantor space representation of $C([0,1])$ such that the restriction of the evaluation operator to some set $F\subseteq C([0,1])$ can be computed in time $t(n)$, then the supremum norm on $F$ can be computed in time exponential in $t(n)$.
		This implies a weak version of the folklore fact from computable analysis that there is no Cantor space representation of the space of Lipschitz one functions on the unit interval that map zero to zero such that the evaluation is polynomial time computable (see for instance \cite{MR1952428}):
		In the above situation \Cref{resu:from complexity to metric entropy} specifies a bound on $\size{F}$ that is exponential in $t(n+2)$.
		This together with the fact that $C([0,1])$ as an infinite dimensional normed space has bounded subsets of arbitrary high metric entropy by \Cref{resu:sets of arbitrary size} shows that there is no Cantor space representation of $C([0,1])$ or even the unit ball of this space such that evaluation is computable in bounded time.
		More generally:
		\begin{corollary}
			No infinite dimensional normed space has a Cantor space representation such that the norm and the vector space operations are computable in bounded time when restricted to the closed unit ball.
		\end{corollary}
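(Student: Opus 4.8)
The plan is to argue by contradiction and to funnel the bounded-time computability of the metric into the entropy estimate of \Cref{resu:from complexity to metric entropy}. Suppose some infinite dimensional normed space $(X,\norm\cdot)$ admitted a Cantor space representation $\xi$ under which the norm and the vector space operations are computable in bounded time when restricted to the closed unit ball $B:=B^c_1(0)$. I would first show that under these hypotheses the induced metric $d(x,y)=\norm{x-y}$ is itself computable in bounded time on $B\times B$, so that the theorem becomes applicable to the represented metric space $(B,d)$.

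The only delicate point in this reduction is that the difference of two elements of $B$ need not lie in $B$, so the norm is a priori not available on it. I would circumvent this by keeping every intermediate argument inside the unit ball: scalar multiplication by the fixed (bounded-time) scalar $\frac12$ maps $B$ into $B$, and for $u,v\in B$ the difference $\frac12 u-\frac12 v=\frac12(u-v)$, obtained from the vector space operations, again lies in $B$ since its norm is at most $1$. Hence the composite $\langle x,y\rangle\mapsto 2\norm{\frac12 x-\frac12 y}$ computes $d(x,y)$ while applying the norm and the operations only to arguments from $B$; as a composition of maps computable in bounded time is again computable in bounded time, $d$ is computable in bounded time on $B\times B$.

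Next I would regard $(B,d)$ as a represented metric space whose representation is the range restriction of $\xi$ to names of elements of $B$. This restriction is still a Cantor space representation, so every element has a name in $K_1$ and $\xi(K_1)=B$. Applying \Cref{resu:from complexity to metric entropy} to this space — with a running time of the form $T(1,\cdot)$ afforded by the Cantor space convention — produces a function of the shape $CT(1,\cdot+2)^2+C$ that is a genuine modulus of total boundedness of $\xi(K_1)=B$. In particular $B$ would be totally bounded, with $\size B$ defined and polynomially bounded.

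This is the contradiction. In an infinite dimensional normed space the closed unit ball is never totally bounded: Riesz's Lemma yields a sequence on the unit sphere whose pairwise distances exceed $\frac12$, so no finite cover of $B$ by balls of radius $\frac14$ exists and $\size B$ cannot be finite. Equivalently, \Cref{resu:sets of arbitrary size} furnishes, after rescaling into $B$, compact subsets of $B$ of arbitrarily large metric entropy, each of which would be forced below the single polynomial bound on $\size B$. Either way the assumed representation cannot exist. I expect the reduction to a metric computable on $B\times B$ — specifically the bookkeeping that keeps every intermediate argument inside the unit ball so that the norm remains available — to be the only step requiring genuine care; the remainder is a direct invocation of the entropy dichotomy already established.
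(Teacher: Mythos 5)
Your proof is correct and follows essentially the same route as the paper's: reduce bounded-time computability of the norm and operations on the unit ball to bounded-time computability of the metric, invoke the entropy bound of \Cref{resu:from complexity to metric entropy} (via its Cantor-space corollary) to get a modulus of total boundedness of the ball, and contradict this with \Cref{resu:sets of arbitrary size}. Your explicit bookkeeping via $d(x,y)=2\bnorm{\tfrac12 x-\tfrac12 y}$ to keep all intermediate arguments inside the unit ball is a worthwhile detail that the paper's one-line reduction leaves implicit.
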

		
		\begin{proof}
			If the norm and the vector space operations are computable in bounded time, so is the metric.
			By the previous corollary this implies that there exists a modulus of total boundedness of the unit ball.
			Since the vector space is an infinite dimensional normed space, there exists arbitrary big compact subsets of the unit ball by \Cref{resu:sets of arbitrary size}.
			This is a contradiction, as moving to subsets increases the modulus of total boundedness by at most a shift by one.
		\end{proof}

		Recall from \Cref{resu:length of restrictions} that any range restriction of an admissible representation to a compact set has a length.
		The following corollary can be used to check whether a given representation has the minimal possible length.
		\begin{corollary}\label{resu:length of range restrictions of open representations}
			Let $\xi$ be an admissible representation of a metric space $(\Nn,d)$ such that the metric is computable in time $T$.
			Then the range restriction of $\xi$ to any compact subset $K\subseteq M$ has a length $\ell$ that fulfills
			\[ \size K \in \bigo(T(\ell,\cdot+2)^2). \]
		\end{corollary}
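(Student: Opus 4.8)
The plan is to compose the two results that immediately precede the corollary: \Cref{resu:length of restrictions}, which supplies a length, and \Cref{resu:from complexity to metric entropy}, which bounds metric entropy in terms of the running time of the metric. Since every metric space is Hausdorff and $\xi$ is admissible by hypothesis, \Cref{resu:length of restrictions} applies verbatim and produces a length $\ell$ of the range restriction of $\xi$ to the compact set $K$. This $\ell$ is exactly the function the corollary asks for, so only the entropy bound remains to be shown.

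The crux is reading off a set inclusion from the length property. By definition a length $\ell$ of the range restriction $\xi|_K$ satisfies $\xi|_K(K_\ell) = K$; unfolding this, every element of $K$ possesses a $\xi$-name lying in $K_\ell$, which is precisely the inclusion $K \subseteq \xi(K_\ell)$. I would then apply \Cref{resu:from complexity to metric entropy} to the represented metric space $(\Nb,d)$, instantiated at the first-order argument $l := \ell$ (note that $\ell \in \omega^\omega$). Its hypothesis is met because being computable in time $T$ entails being computable in time $T$ on $\Nb \times \Nb$, so we obtain a constant $C \in \omega$, independent of $\ell$, such that $CT(\ell,\cdot+2)^2 + C$ is a modulus of total boundedness of $\xi(K_\ell)$.

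Finally I would pass from $\xi(K_\ell)$ down to its subset $K$. Because \Cref{def:metric entropy and size} places no constraint on where the centers of the covering balls lie, any cover of $\xi(K_\ell)$ by $2^{\nu(n)}$ closed balls of radius $2^{-n}$ is simultaneously a cover of $K$; hence every modulus of total boundedness of $\xi(K_\ell)$ is one of $K$, and in particular $\size K(n) \leq CT(\ell,\cdot+2)^2(n) + C$ for every $n$. This is exactly the asserted $\size K \in \bigo(T(\ell,\cdot+2)^2)$. I do not expect a substantive obstacle here: the argument is a bookkeeping composition of the two earlier results, and the only points demanding care are extracting the inclusion $K \subseteq \xi(K_\ell)$ from the definition of a length and recording that the metric entropy of a set never exceeds that of a superset.
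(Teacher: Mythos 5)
Your proof is correct and follows essentially the same route as the paper: obtain a length $\ell$ of the range restriction from the preceding compactness results and then invoke \Cref{resu:from complexity to metric entropy}. The only (immaterial) difference is that the paper applies the entropy bound to the restricted representation itself, using that running times are preserved under range restriction, whereas you apply it to the full representation and then pass from $\xi(K_\ell)$ to its subset $K$; both variants are sound.
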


		\begin{proof}
			Openness is preserved under taking range restrictions.
			\Cref{resu:length of open representations} proves that the range restriction, as an open representation of a compact set, has a length $\ell$.
			Running times are also preserved under range restrictions.
			Apply \Cref{resu:from complexity to metric entropy} and get the assertion.
		\end{proof}

	\subsection{Proof of the main theorem}

		The goal of this section is to prove \Cref{resu:from complexity to metric entropy}.
		The argument works in a very general setting, in particular the assumption about the space can be weaker than computability of the metric in bounded time.
		This is not very surprising, as the metric entropy only mentions small balls and does not use any information about the exact values of the distance of points far away from each other.

		Equality in a metric space equipped with the Cauchy representation is usually not decidable.
		However, for two given elements $x,y$ it is verifiable if they are far apart.
		This can be formalized as computability of the following multivalued function:
		\begin{definition}\label{def:equality}
			Let $\N=(\Nb,d)$ be a represented metric space.
			Define the \demph{equality function} $\eq_{\N}:\Nb\times \Nb \mto \B$ by:
			\[ \eq_{\N}(x,y) := \left\{\varphi\in\binalbe^{\albe^*}\left|\begin{array}{c} \forall n: d(x,y) \leq2^{-n-1} \Rightarrow \varphi(\sdone^n) = \sdone\\
			\text{and} \\ \forall n:d(x,y) > 2^{-n} \Rightarrow \varphi(\sdone^n) = \sdzero
			\end{array}\right.\right\}. \]
		\end{definition}

		Informally but more intuitively one might indicate this by
		\[ \eq_{\N}(x,y)(\sdone^n) = \begin{cases} \sdone & \text{if } d(x,y)\leq 2^{-n-1} \\\sdzero & \text{if } d(x,y) > 2^{-n} \\ \sdzero\text{ or }\sdone &\text{otherwise.}\end{cases} \]

		\begin{definition}
			Let $\N=(\Nb,d)$ be a represented metric space.
			We say that \demph{equality is approximable in time $T$}, if its equality function $\eq_{\N}$ is computable in time $T$ on $\Nb\times \Nb$.
		\end{definition}
		
		This notion is indeed weaker than bounded time computability of the metric in a very precise sense:
		\begin{lemma}[From metric to equality]\label{resu:from metric to equality}
			Let $(\Nb,d)$ be a represented metric space.
			If $d$ is computable in time $T$ on $\Nb\times \Nb$, then equality is approximable in time $\bigo(\tilde T)$ for $\tilde T(l,n):= T(l,n+2)$.
		\end{lemma}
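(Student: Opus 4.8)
The plan is to assemble a realizer $G$ for $\eq_{\N}$ out of the given time-$T$ realizer $F$ of the metric. Recall that $F$ turns a name $\langle\varphi,\psi\rangle$ of a pair $(x,y)$ into an $\RR$-name in the sense of \Cref{def:standard representation of the reals} of $d(x,y)$, so that for $k:=F(\langle\varphi,\psi\rangle)(m)$ we have $k\in\ZZ$ and $\abs{d(x,y)-\frac{k}{m+1}}\le\frac1{m+1}$ for every precision $m\in\NN$. On an input $\sdone^n$ the job of $G$ is to return $\sdone$ whenever $d(x,y)\le 2^{-n-1}$ and $\sdzero$ whenever $d(x,y)>2^{-n}$, with a free choice in the gap in between.

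First I would fix the query precision. The two thresholds $2^{-n-1}$ and $2^{-n}$ are separated by $2^{-n-1}$, so an approximation of $d(x,y)$ with error at most $2^{-n-2}$ is enough to decide on which side of their midpoint $3\cdot 2^{-n-2}$ the value lies. I therefore query $F$ at precision $m:=2^{n+2}-1$, for which $\frac1{m+1}=2^{-n-2}$ and $q:=\frac{k}{m+1}=\frac{k}{2^{n+2}}$. The decision rule is to output $\sdone$ if $k\le 3$ and $\sdzero$ otherwise. Correctness is a short case check: if $d(x,y)\le 2^{-n-1}$ then $q\le 2^{-n-1}+2^{-n-2}=3\cdot 2^{-n-2}$, hence $k\le 3$; and if $d(x,y)>2^{-n}$ then $q> 2^{-n}-2^{-n-2}=3\cdot 2^{-n-2}$, hence $k> 3$. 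Thus the output always lies in $\eq_{\N}(x,y)$; on inputs not of the form $\sdone^n$, which $\eq_{\N}$ leaves unconstrained, $G$ simply outputs $\sdzero$ after a linear-time test.

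The key bookkeeping is the encoding length of the chosen precision. Under the identification of $\NN$ with binary strings, the number $2^{n+2}-1$ is exactly the string $\sdone^{n+2}$, whose length is $n+2$. Hence running $F$'s machine on this input with the fixed oracle $\langle\varphi,\psi\rangle$ of length $l:=\flength{\langle\varphi,\psi\rangle}$ costs at most $T(l,n+2)=\tilde T(l,n)$ steps. Since the input $\sdone^n$ to $G$ has length exactly $n$, this is the bound we want. The remaining overhead—forming the query $\sdone^{n+2}$ from $\sdone^n$ and comparing the returned integer $k$ (whose length is bounded by the running time of $F$) against the constant $3$—is linear in quantities already dominated by $T(l,n+2)$, so the whole computation runs in time $\bigo(\tilde T)$ on $\Nb\times\Nb$.

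The only point demanding care, rather than a genuine obstacle, is precisely this interplay between the required precision and its encoding cost: one needs error $2^{-n-2}$, which in the $\frac1{m+1}$-convention of \Cref{def:standard representation of the reals} forces a precision $m\approx 2^{n+2}$, and it is exactly because such an $m$ is written in binary by a length-$(n+2)$ string that the overhead stays additive in the second argument and the clean bound $\tilde T(l,n)=T(l,n+2)$ emerges instead of an exponential blow-up.
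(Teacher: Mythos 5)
Your proposal is correct and follows essentially the same route as the paper's proof: query the metric's machine at precision $2^{n+2}-1$ (so the error is $2^{-n-2}$ and the query string $\sdone^{n+2}$ has length $n+2$, giving the bound $T(l,n+2)$), then threshold the returned approximation at $3\cdot 2^{-n-2}$. The paper leaves the case analysis as "a triangle inequality argument," which you simply spell out; the reduction of the comparison to the integer test $k\le 3$ is a minor cosmetic simplification.
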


		\begin{proof}
			Let $M^?$ be a machine that computes $d$ in time bounded by $T$ on $\Nb\times \Nb$.
			Given a approximation requirement $\sdone^n$, carry out the computation this machine does on input $2^{n+2}-1$ to produce an encoding of a $2^{-n-2}$-approximation to $d(x,y)$.
			Return $\sdzero$ if the return value encodes a dyadic number strictly bigger than $2^{-n-1}+2^{-n-2}$, otherwise return $\sdone$.
			Carrying out the computations of $M$ takes time $T(l,n+2)$, and checking the return value against $2^{-n-1}+2^{-n-2}$ takes time less than twice that plus a constant.
			A triangle equality argument proves that this machine approximates the equality function.
		\end{proof}

		A running time bound restricts the access an oracle machine has to the oracle.
		The following lemma describes this dependence in exactly as much detail as needed for our purposes.
		It assigns to a machine a function $\B\times \albe^*\to \albe^*$ that takes as input an oracle and a string, and whose return value is a description of the communication between the machine and the oracle.
		In particular, if the first argument is changed the return values only change if the machine can distinguish the oracles in a computation with the second input as input string.
		
		Similar constructions have been done for oracles that only return $\sdzero$ or $\sdone$ before \cite{MR892102}.
		The function constructed is very closely related to moduli of sequentiality \cite{MR1911553}.
		Our approach differs from the ones taken in these sources in that a \lq dialog\rq\ only describes the return values of the oracle and not the queries.

		\begin{lemma}[Communication functions]\label{resu:oracle restriction}
			For any oracle Turing machine $M^?$ that runs in time $T$ on a set $A\subseteq \B$ there exists a function $L:A\times \Sigma^*\to\Sigma^*$ such that
			\begin{itemize}
				\item[(d)] $M^\varphi(\str a)$ is \underline{d}etermined by $L(\varphi,\str a)$, that is for all $\varphi,\psi\in A$ and strings $\str a$
				from $L(\varphi,\str a) =L(\psi,\str a)$ it follows that $M^\varphi(\str a) = M^\psi(\str a)$.
				\item[(v)] Each \underline{v}alue of an oracle on a string either matters a lot or does not matter at all:
				Whenever $\varphi,\psi,\phi\in A$ are string functions and $\str b$ is a string such that whenever $\str a$ is a string such that the $T(\flength\varphi,\length{\str b})$ initial segments of $\varphi(\str a)$ and $\psi(\str a)$ coincide, then so does the initial segment of $\phi(\str a)$ and if additionally $L(\varphi,\str b) = L(\psi,\str b)$ then it follows that $M^\varphi(\str b) = M^{\phi}(\str b)$.
				\item[(l)] The \underline{l}ength of $L$ is bounded by the running time:
				For all strings $\str a$ and $\varphi\in A$ 
				\[ \length{L(\varphi,\str a)}\leq 2\Big(T(|\varphi|,|\str a|)\cdot\big(T(|\varphi|,|\str a|) +1\big)+1\Big). \]
			\end{itemize}
		\end{lemma}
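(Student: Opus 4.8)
The plan is to let $L(\varphi,\str a)$ record the \emph{dialog} of oracle answers encountered during the run of $M^\varphi(\str a)$, but to deliberately record more of each answer than the machine actually reads. Write $t:=T(\flength\varphi,\length{\str a})$. Since this run halts within $t$ steps and issuing a query costs at least one step, the machine makes at most $t$ queries $\str a_1,\dots,\str a_m$ with $m\le t$. For the $i$-th query I would record the prefix $u_i$ consisting of the first $\min\{t,\length{\varphi(\str a_i)}\}$ symbols of the answer $\varphi(\str a_i)$, i.e.\ the first $t$ symbols, or all of the answer when it is shorter. I then encode the finite sequence $(u_1,\dots,u_m)$ self-delimitingly: double every symbol ($\sdzero\mapsto\sdzero\sdzero$, $\sdone\mapsto\sdone\sdone$) and close each block with the separator $\sdzero\sdone$, so that, reading the code two symbols at a time, block boundaries (and hence the length of each $u_i$) are unambiguous. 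The concatenation of these blocks is $L(\varphi,\str a)$.

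The one genuine design decision --- and the source of the quadratic bound --- is recording the first $t$ symbols of each answer rather than only the symbols the machine reads. Because the whole computation lasts at most $t$ steps and reading a symbol costs a step, the machine inspects at most $t$ symbols of any single answer, so everything it can access lies within the recorded prefix $u_i$; this is what will give (d). Recording only the symbols actually read would be shorter (linear in $t$), but then equality of two dialogs would force the oracles to agree merely on the bits read, which is too weak to trigger the hypothesis of (v), whose premise speaks about the full first $t$ symbols. Over-recording up to the first $t$ symbols is precisely what makes (v) provable, at the price of the $t^2$ in (l). I expect this to be the main conceptual obstacle; once the right notion of dialog is fixed, the three verifications are routine.

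For (d), note that $M$ is deterministic, so its run on $\str a$ is fixed by the sequence of answer symbols it reads together with the end-of-answer information (a recorded block of length $<t$ signals an exhausted answer). The computation up to the first query is oracle-independent, so the first query coincides for $\varphi$ and $\psi$; and if $L(\varphi,\str a)=L(\psi,\str a)$, decoding yields identical blocks $u_i$, whence at each stage the machine reads the same symbols, issues the same next query, and finally halts with the same output, giving $M^\varphi(\str a)=M^\psi(\str a)$. For (v), fix $\varphi,\psi,\phi$ and $\str b$ as in the hypothesis with $t=T(\flength\varphi,\length{\str b})$. By (d) the assumption $L(\varphi,\str b)=L(\psi,\str b)$ makes the runs $M^\varphi(\str b)$ and $M^\psi(\str b)$ identical, with a common query sequence $\str a_1,\dots,\str a_m$; since the dialog records the first $t$ symbols of each answer and the block lengths match, $\varphi(\str a_i)$ and $\psi(\str a_i)$ agree on their first $t$ symbols for every $i$. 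The hypothesis on $\phi$ then gives that $\phi(\str a_i)$ agrees with $\varphi(\str a_i)$ on the first $t$ symbols as well. Comparing the runs of $M^\varphi(\str b)$ and $M^\phi(\str b)$ step by step, the first query agrees, the machine reads only within the first $t$ symbols of the answer, where $\phi$ and $\varphi$ coincide, so the two runs stay in lockstep and terminate with $M^\varphi(\str b)=M^\phi(\str b)$.

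Finally (l) is a direct count: there are at most $t$ blocks, each contributing at most $2\length{u_i}\le 2t$ symbols from the doubling plus $2$ symbols for its separator, so
\[ \length{L(\varphi,\str a)}\le m\,(2t+2)\le 2t(t+1)\le 2\big(t(t+1)+1\big), \]
which is exactly the asserted bound, since $t=T(\flength\varphi,\length{\str a})$. A small point I would defer as routine is the bookkeeping around answers whose length is close to $t$ and around the fact that the time bound used for $\psi$ is formally $T(\flength\psi,\length{\str b})$ rather than $t$: because matching dialogs pin down the exact block lengths, equal dialogs force the oracles into the same case and hence into agreement on the first $t$ symbols, and the machine, having no spare step to probe beyond the symbols it has read, cannot detect any discrepancy past them, so neither (d) nor (v) is affected.
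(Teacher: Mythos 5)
Your proof is correct and takes essentially the same approach as the paper: both define $L(\varphi,\str a)$ by recording the first $T(\flength\varphi,\length{\str a})$ symbols of each oracle answer in the run, differing only in the encoding of the list (your self-delimiting doubling versus the paper's tuple function) and in that you prove (d) directly while the paper derives it from (v). The $t$-versus-$T(\flength\psi,\length{\str b})$ subtlety you flag at the end is likewise left implicit in the paper's own proof, so your treatment is at the same level of rigor.
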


		Note that (v) implies (d), however, since the meaning of (d) is a lot easier to grasp and (v) is only needed to guarantee that pairings work as expected, they are stated separately.

		\begin{proof}
			Let $L(\varphi,\str a)$ be an encoding of the number of oracle queries together with a list of the $T(\flength{\varphi},\length{\str a})$ first bits of the answers to the oracle calls during the run $M^\varphi(\str a)$ of $M^?$ on input $\str a$ with oracle $\varphi$.
			I.e.\
			\[ L(\varphi)(\str a) = \langle N,\langle\str b_1,\ldots,\str b_N\rangle\rangle \]
			where $\str b_i$ consist of the $T(\flength{\varphi},\length{\str a})$ first bits of $\varphi(\str a_i)$ where $\str a_i$ is the $i$-th of the $N$ queries the machine asks to $\varphi$.

			As mentioned, the condition from (d) is implied by the one from (v).
			To see that the condition (v) holds note that the value $L(\varphi,\str b)$ determines the number $N$ of queries the machine asks the oracle $\psi$ and also their values $\str a_1,\ldots \str a_N$.
			Now $L(\varphi,\str b) = L(\psi,\str b)$ implies that the $T(\flength\varphi,\length{\str b})$ initial segments of $\varphi(\str a_i)$ and $\psi(\str a_i)$ coincide for all $i$.
			Therefore from the other assumption of (v) it follows that the run of $M^?$ on $\str b$ with oracle $\phi$ writes the same queries and gets answers that are indistinguishable for the machine.
			Thus $M^{\phi}(\str b)$ produces the same return value as both $M^\varphi(\str b)$ and $M^\psi(\str b)$.

			From the restriction of the running time of $M^?$ it follows that the number $N$ and each $\length{\str b_i}$ can at most be $T(|\varphi|,\length{\str a})$.
			This put together with the length estimations for the pairing functions from the introduction leads to the bound on the length of $L(\varphi,\str a)$.
		\end{proof}
		The conclusion $M^\varphi(\str b) = M^{\phi}(\str b)$ from item (v) cannot be replaced by the stronger $L(\varphi,\str b) = L(\phi,\str b)$.
		This is due to the use of initial segments of the oracle answers.
		The length of these initial segments depend on the value of the running time, which we have no control over.

		Recall that to each $l:\omega\to\omega$ a compact subset $K_l$ of the Baire space was assigned by
		\[ K_l:= \{\varphi\in \B\mid \flength{\varphi} \leq l\} \]
		and that the family $(K_l)_{l\in\omega^\omega}$ has the property that every compact subset of the Baire space is contained in some $K_l$.

		The proof of the following theorem is now a straightforward application of the previous lemma.
		Note that it does not require the metric space to be compact, but instead talks about certain relatively compact subsets of the space.

		\begin{theorem}\label{resu:from approximability to metric entropy}
			Let $\N$ be a represented metric space such that the equality is approximable in time $T$.
			Then for all $n\in\omega$
			\[ \size{\xi(K_l)}(n)\leq 2\big(T(l,n)\cdot(T(l,n)+1)+1\big). \]
		\end{theorem}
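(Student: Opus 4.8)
The plan is to run a machine realizing the equality function $\eq_{\N}$ against cleverly chosen oracles and to read off, via the communication function of \Cref{resu:oracle restriction}, a finite ``fingerprint'' for each point of $\xi(K_l)$ — fine enough that two points sharing a fingerprint are forced to be $2^{-n}$-close, and coarse enough that there are at most $2^{\nu(n)}$ fingerprints, where $\nu(n):=2(T(l,n)(T(l,n)+1)+1)$. Grouping the points of $\xi(K_l)$ by their fingerprint then exhibits a cover by at most $2^{\nu(n)}$ closed balls of radius $2^{-n}$, which is exactly the asserted bound on $\size{\xi(K_l)}(n)$.

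Concretely, let $M^?$ realize $\eq_{\N}$ and run in time $T$ on $\Nb\times\Nb$, and let $L$ be the communication function of \Cref{resu:oracle restriction} for the set $A:=\dom(\xi_{\Nb\times\Nb})$ of pairings of names. For each $x\in\xi(K_l)$ I would fix a name $\varphi_x\in K_l$ and define its fingerprint as the string $L(\langle\varphi_x,\varphi_x\rangle,\sdone^n)$. The heart of the argument is the claim: \emph{if $x$ and $y$ have the same fingerprint, then $d(x,y)\le 2^{-n}$.} To prove it I would invoke property (v) of \Cref{resu:oracle restriction} with $\varphi:=\langle\varphi_x,\varphi_x\rangle$, $\psi:=\langle\varphi_y,\varphi_y\rangle$, $\phi:=\langle\varphi_x,\varphi_y\rangle$ and input $\str b:=\sdone^n$ (all three pairings lie in $A$ since $\varphi_x,\varphi_y\in\dom(\xi_{\Nb})$). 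Equality of fingerprints is precisely the hypothesis $L(\varphi,\str b)=L(\psi,\str b)$, while the remaining hypothesis of (v) — that $\phi$ agrees with $\varphi$ at every query on which $\varphi$ and $\psi$ already agree up to the relevant precision — is a purely combinatorial fact about the interleaving pairing $\langle\cdot,\cdot\rangle$ and is exactly what clause (v) was designed to absorb. Property (v) then yields $M^{\langle\varphi_x,\varphi_y\rangle}(\sdone^n)=M^{\langle\varphi_x,\varphi_x\rangle}(\sdone^n)$; the right-hand side equals $\sdone$, being a value of $\eq_{\N}(x,x)$ at $\sdone^n$ with $d(x,x)=0\le 2^{-n-1}$. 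Hence $M^{\langle\varphi_x,\varphi_y\rangle}(\sdone^n)=\sdone$ is a value of $\eq_{\N}(x,y)$ at $\sdone^n$, and by \Cref{def:equality} this is incompatible with $d(x,y)>2^{-n}$, proving the claim.

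Given the claim, each fingerprint class has diameter at most $2^{-n}$, so it is contained in the closed ball $B^c_{2^{-n}}(z)$ about any of its points $z$; thus the number of balls required is at most the number of distinct fingerprints. I would then bound this number using the length estimate (l) of \Cref{resu:oracle restriction}: since $\langle\varphi_x,\varphi_x\rangle\in A$ and $\varphi_x\in K_l$, the length $\flength{\langle\varphi_x,\varphi_x\rangle}$ is controlled by $l$, so every fingerprint is a string of length at most $2(T(l,n)(T(l,n)+1)+1)=\nu(n)$, whence there are at most $2^{\nu(n)}$ of them. This gives $\size{\xi(K_l)}(n)\le\nu(n)$.

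The main obstacle is the verification of the agreement hypothesis of (v) for the interleaving pairing: one must check, at the bit level, that letting the second coordinate follow the first wherever the two self-pairings already agree cannot alter any oracle answer the machine can actually read within its time budget. This is where the factor $2$ and the length conventions of $\langle\cdot,\cdot\rangle$ from the introduction must be tracked carefully; once (v) applies, the remainder is bookkeeping. A secondary, purely quantitative point is that feeding the machine the pairing $\langle\varphi_x,\varphi_x\rangle$ rather than $\varphi_x$ itself inflates the relevant length argument of $T$ by the pairing overhead, so the clean constant $T(l,n)$ should be read up to this standard overhead — which is in any case absorbed into the $\bigo$ of \Cref{resu:from complexity to metric entropy}.
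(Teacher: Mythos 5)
Your proposal is correct and follows essentially the same route as the paper's proof: run the equality machine on self-pairings, use the communication function's length bound (l) to count the possible ``fingerprints'' $L(\langle\varphi_x,\varphi_x\rangle,\sdone^n)$, and use property (v) on the triple $\langle\varphi_x,\varphi_x\rangle$, $\langle\varphi_y,\varphi_y\rangle$, $\langle\varphi_x,\varphi_y\rangle$ to force $d(x,y)\leq 2^{-n}$ when fingerprints agree. The combinatorial verification of the agreement hypothesis for the interleaving pairing that you flag as the main obstacle is handled in the paper exactly as you anticipate, and your remark about the pairing overhead being absorbed into the $\bigo$ of the subsequent corollary matches the paper's treatment.
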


		\begin{proof}
			Let $\N=(\Nb,d)$ and fix some $n$.
			Let $M^?$ be the machine computing the equality function in time $T$ on the set
			\[ A:=\dom(\xi_{\Nb\times\Nb}) = \{\langle\varphi,\psi\rangle\in\B \mid \varphi\in\dom(\xi) \text{ and }\psi\in \dom(\xi)\}, \]
			and let $L:A\times \Sigma^*\to\Sigma^*$ be the communication function assigned to $M^?$ by \Cref{resu:oracle restriction}.
			Let $I$ be the set of strings $\str a$ such that there exists a $\psi\in K_l\cap\dom(\xi)$ such that $L(\langle\psi,\psi\rangle,\sdone^{n+1}) = \str a$.
			For each $i\in I$ choose some $\psi_i\in K_l\cap \dom(\xi)$ such that $L(\langle\psi_i,\psi_i\rangle,\sdone^{n+1}) = i$.
			From the size limit for $L(\psi,\sdone^{n+1})$ from \Cref{resu:oracle restriction} item (l) it follows that
			\[ \#I \leq 2^{2(T(l,n)\cdot(T(l,n)+1)+1)}. \]

			Claim that the closed $2^{-n}$-balls around the $\xi(\psi_i)$ cover $\xi(K_l)$:
			Indeed, take an arbitrary $x\in\xi(K_l)$, that is $x=\xi(\psi)$ for some $\psi\in K_l \cap \dom(\xi)$.
			Then for $i:= L(\langle\psi,\psi\rangle,\sdone^n)\in I$ it holds that $L(\langle\psi_i,\psi_i\rangle,\sdone^n) = L(\langle\psi,\psi\rangle,\sdone^n)$.
			Use the property of $L$ from item (v) of \Cref{resu:oracle restriction} for the functions $\langle\psi,\psi\rangle$, $\langle \psi_i,\psi_i\rangle$ and $\langle\psi_i,\psi\rangle$.
			To do so, it is necessary to show that if certain beginning segments of $\langle\psi,\psi\rangle(\str b)$ and $\langle\psi_i,\psi_i\rangle(\str b)$ coincide, then so does $\langle \psi_i,\psi\rangle(\str b)$.
			This is true for any initial segments: If $\langle\psi,\psi\rangle(\str b) = \langle\psi(\str b),\psi(\str b)\rangle$ and $\langle \psi_i(\str b),\psi_i(\str b)\rangle$ coincide, the corresponding padded versions of $\psi(\str b)$ and $\psi_i(\str b)$ coincide and one may as well only swap one of the padded strings.
			It follows that
			\[ M^{\langle\psi_i,\psi\rangle}(\sdone^n) = M^{\langle\psi,\psi\rangle}(\sdone^n) = \sdone. \]
			Since $M^?$ computes the function $\mathrm{eq}_{\N}$ from \Cref{def:equality}, this implies that $d(x,\xi(\psi_i)) \leq 2^{-n}$.
			Thus, $x\in B^c_{2^{-n}}(\xi(\psi_i))$ and, since $x\in \xi(K_l)$ was arbitrary, the closed $2^{-n}$-balls around the images of the $\psi_i$ cover $\xi(K_l)$.
		\end{proof}

		Finally, let us argue that the theorem indeed implies \Cref{resu:from complexity to metric entropy}.

		\begin{proof}[of \Cref{resu:from complexity to metric entropy}]
			Let $\N=(\Nb,d)$ be a represented metric space and assume that $d$ is computable in time $T$ on $\Nb\times\Nb$.
			\Cref{resu:from metric to equality} provides a constant $\tilde C\in\omega$ such that equality on $\N$ is approximable in time $(l,n)\mapsto \tilde C T(l,n+2)+\tilde C$.
			Apply \Cref{resu:from approximability to metric entropy} to obtain
			\[ \size{\xi(K_l)}(n)\leq 2\big((\tilde CT(l,n+2) +\tilde C)(\tilde T(l,n+2) + \tilde C+1)+1\big)\in\bigo(T(l,n+2)^2), \]
			where the constant $C$ can be chosen as $8\tilde C^2+4\tilde C+2$.
			The independence of $l$ follows from the independence of $\tilde C$ of $l$ (by definition of $\bigo(T)$).
		\end{proof}


\section{Regular and complete representations}\label{sec:regularity and completeness}

	Some functions $T:\omega^\omega \times \omega\to \omega$ are no reasonable candidates for running times.
	A running time should grant more time, if the input is bigger; therefore it should be monotone.
	We use the notion of monotonicity introduced by Howard \cite{MR0469712,MR2130066}.
	It restricts to monotonicity in the point-wise sense, if the function is only considered on increasing inputs and returns only increasing functions.
	\begin{definition}
		A function $T:\omega^\omega\times \omega\to\omega$ is \demph{monotone}, if for all $l,l'\in\omega^\omega$ from $\forall n\leq m:l(n)\leq l'(m)$ it follows that $\forall n\leq m:T(l,n)\leq T(l',m)$.
	\end{definition}

	In \Cref{sec:construction of standard representations} we encounter a situation, where a machine needs to compute a function that is similar to its running time.
	To guarantee that this can be done by the machine without taking too many steps, a notion of time-constructibility is needed for second-order running times. 
	\begin{definition}\label{def:running time}\label{def:time-constructible}
		We call $T:\omega^\omega\times \omega \to \omega$ \demph{time-constructible on $A\subseteq \B$}, if there is an oracle Turing machine that computes the mapping $(\varphi,\str a)\mapsto T(\flength \varphi,\length{\str a})$ and runs in time $\bigo(T)$ on $A$.
	\end{definition}
	We call a function \demph{time-constructible}, if it is time-constructible on the Baire space.
	This definition reproduces usual notion of time-constructibility for functions that are independent of the first argument.

	\begin{example}\label{ex:length function}
		Recall that the length function $\flength\cdot:\B\to\omega^\omega$ was defined by $\flength{\varphi}(n) := \max\big\{\length{\varphi(\str a)}\,\big|\, \length{\str a}\leq n\big\}$.
		Due to the evaluation of the maximum taking an exponential number of oracle queries, this function is not polynomial-time computable.
		Thus, the function $(l,n) \mapsto l(n)$ is not time-constructible.
		As a consequence, most second-order polynomials are not time-constructible.
		The function $(l,n)\mapsto 2^{\max\{l(n),n\}}$, on the other hand, is time-constructible.
	\end{example}

	The standard application of time-constructibility is increasing the domain of realizers:
	\begin{lemma}
		Let $F:\subseteq \B\to \B$ be a functional on the Baire space computable in time $T$ on some set $A$.
		If $T$ is time-constructible on $B\supseteq A$, then $F$ is computable in time $\bigo(T)$ on $B$.
	\end{lemma}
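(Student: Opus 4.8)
The plan is to \emph{clock} the given machine: I would use time-constructibility of $T$ to let the machine compute its own allotted time budget and then abort the simulation once that budget is exhausted. Concretely, let $M^?$ be a machine that computes $F$ and runs in time $T$ on $A$, and, invoking time-constructibility of $T$ on $B$, let $N^?$ be a machine computing $(\varphi,\str a)\mapsto T(\flength\varphi,\length{\str a})$ that runs in time $\bigo(T)$ on $B$. I would then build a new machine $M'^?$ which, on oracle $\varphi$ and input $\str a$, proceeds in two phases: first it runs $N^?$ to write down the (unary) number $b:=T(\flength\varphi,\length{\str a})$, and then it simulates the run $M^\varphi(\str a)$ step by step, maintaining a step counter and passing each oracle query of $M^?$ unchanged to $\varphi$, until either the simulation halts — in which case $M'^?$ outputs its result — or the counter reaches $c\cdot b$, where $c$ is the fixed per-step overhead of simulating $M^?$ — in which case $M'^?$ stops and returns the default value $\varepsilon$.

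For the running time I would argue phase by phase. Running $N^?$ costs at most $\bigo(T)$ steps on $B$ by its choice; note that its unary output $b$ has length exactly $T(\flength\varphi,\length{\str a})$, which is already $\bigo(T)$, so initialising, maintaining and decrementing the counter in the second phase costs only $\bigo(b)=\bigo(T)$. Since one step of the fixed machine $M^?$ — including writing a query and reading back the appropriately truncated answer under the timing conventions of \Cref{def:M runs in time T} — is simulated with a constant factor $c$ of overhead, the simulation phase runs for at most $c\cdot b+\bigo(b)=\bigo(T)$ steps. Adding the two phases and absorbing all constants into the $\bigo$ shows that $M'^?$ runs in time $\bigo(T)$ on all of $B$.

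For correctness I would observe that $M'^?$ reproduces the output of $M^?$ exactly whenever the latter halts within $b$ steps. This is precisely what happens on $A$: there $M^?$ runs in time $T$, so $M^\varphi(\str a)$ terminates within $b=T(\flength\varphi,\length{\str a})$ steps, the simulation completes before the counter expires, and hence $M'^\varphi(\str a)=M^\varphi(\str a)=F(\varphi)(\str a)$. Thus $M'^?$ agrees with $M^?$ on $A$, and since the realizer obligation of $M^?$ to agree with $F$ is exactly its agreement on this relevant set, $M'^?$ still computes $F$, now with the time bound extended to $B$.

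The routine but genuinely load-bearing step — and the one I expect to need the most care — is the second phase. I must verify that simulating the fixed machine $M^?$ incurs only a constant-factor slowdown even across oracle interactions (where the answer tape is written by the oracle and only partially read), and that aborting after $c\cdot b$ simulated steps never truncates a run that $M^?$ would have completed within its own budget of $b$ steps on $A$. This is exactly the point where time-constructibility is indispensable: without a way to produce the budget $b$ within $\bigo(T)$ steps, the clock itself could blow the budget — which, as \Cref{ex:length function} records for naive candidate bounds such as $(l,n)\mapsto l(n)$, is precisely the obstruction that the hypothesis of \Cref{def:time-constructible} is designed to rule out.
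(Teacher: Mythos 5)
The paper states this lemma without any proof, so there is no official argument to compare against; your clocking construction (compute the budget $b=T(\flength\varphi,\length{\str a})$ via time-constructibility, then simulate $M^?$ with a step counter and abort at $c\cdot b$) is exactly the standard argument the authors leave implicit, and your phase-by-phase time analysis is fine --- note in particular that since $T$ maps into $\omega$, which the paper identifies with unary strings, your assumption that the budget is produced in unary is consistent with the conventions. The one soft spot is your correctness step: by the paper's definition, ``computes $F$'' means $M'^\varphi(\str a)=F(\varphi)(\str a)$ for \emph{all} $\varphi\in\dom(F)$, not just for $\varphi\in A$, and if $\dom(F)\setminus A$ is nonempty your clock may truncate a run of $M^?$ that would have returned the correct value there, so your claim that ``the realizer obligation \ldots is exactly its agreement on this relevant set'' does not follow from the stated definitions. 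Read fully literally the lemma is in fact false (take $A=\emptyset$, $T\equiv 1$, $B=\B$ and $F=\mathrm{id}$: the hypothesis is vacuous but no constant-time machine computes the identity), so the statement only holds under the reading you tacitly adopt --- that the machine need only realize $F$ on $A$, equivalently that one passes to a functional agreeing with $F$ on $A$ --- which is how the lemma is used throughout the paper, where $A$ is the domain of a representation and only behaviour on names matters. With that convention made explicit, your proof is complete and is the intended one.
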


	\subsection{Regularity}\label{sec:regularity}

		The mapping $L(l,n):=l(n)$ plays a special role within the second-order polynomials.
		Indeed, if $L$ is time-constructible on a set $A$, then many second-order polynomials are time-constructible on $A$.
		Under additional assumptions about $A$, for instance that all elements are strictly increasing it is possible to prove that time-constructibility of $L$ on $A$ implies time-constructibility of all second-order polynomials on $A$.
		\begin{definition}\label{def:linear-time computability}
			We call a functional $F\colon\B\to\B$ \demph{linear-time computable}, resp.\ \demph{linear-time computable on} $A\subseteq B$, if it is computable in time $\bigo(L)$ for $L(l,n):=l(n)$ resp.\ computable in time $\bigo(L)$ on $A$.
		\end{definition}
		Using this concept the content of \Cref{ex:length function} can be formulated as \lq $L$ is time-constructible on $A$ if and only if the length function is linear-time computable on $A$\rq.

		If it is impossible to find an upper bound on the length of a name in a representation this will usually lead to difficulties in applications.
		The following regularity condition on representations removes most of these difficulties, while not restricting the freedom of choice of representations too much:
		\begin{definition}\label{def:regular}
			We call a representation $\xi$ \demph{regular}, if there exists a linear-time computable upper bound of the length function.
			I.e. if there exists a linear-time computable function $m:\B\to\omega^\omega$ such that for all $\varphi \in\dom(\xi)$ it holds that $m(\varphi)$ is monotone and $\length\varphi\leq m(\varphi)$.
		\end{definition}
		If the length function is linear time computable on $\dom(\xi)$, then $\xi$ is regular.
		Due to the use of an upper bound the converse does not hold.
		However, from the linear time computability of $m$ it follows that there exists some $C\in\omega$ such that $m(n)\leq C\flength\varphi(n) +C$.
		
		The same problem has been tackled by Kawamura and Cook in a slightly less general way by introducing a restricted class of string functions that are allowed to be names (cf. \cite{Kawamura:2012:CTO:2189778.2189780} and \cite{kawamuraphd}).
		\begin{definition}
			A string function $\varphi\in\B$ is called \demph{length-monotone}, if for all strings $\str a,\str b$ it holds that
			\[  \length{\str a}\leq \length{\str b} \quad\Rightarrow\quad\length{\varphi(\str a)} \leq \length{\varphi(\str b)}. \]
			The set of length-monotone string functions is denoted by $\reg$.
		\end{definition}
		For length-monotone string functions it holds that $\flength\varphi(n) = \length{\varphi(\sdone^n)}$.
		Therefore, the length function is linear-time computable on $\Reg$ and all second-order polynomials are time-constructible on $\Reg$.

		\begin{definition}
			A representation is called a \demph{second-order representation}, if its domain is contained in $\Reg$.
		\end{definition}
		The use of the term \lq second-order\rq\ indicates applicability of second-order complexity theory, not the use of higher-order objects in the representation.
		The restriction to length-monotone names leads to excessive padding and technical complications that are avoided by regular representations.
		Since the length function is time constructible on $\Reg$ any second-order representation is regular.
		More generally:

		\begin{proposition}[Regularity vs. second-order]\label{resu:regularity vs. second-order}
			For every regular representation there is a linear-time equivalent second-order representation.
			Second-order re\-pre\-sen\-ta\-ti\-ons are regular.
		\end{proposition}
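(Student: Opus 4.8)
The second assertion is the easy one and I would dispose of it first. If $\xi$ is a second-order representation then $\dom(\xi)\subseteq\Reg$, and on $\Reg$ the length function is monotone and linear-time computable, since there $\flength\varphi(n)=\length{\varphi(\sdone^n)}$ can be read off from a single query. Thus the functional $\varphi\mapsto\bigl(n\mapsto\length{\varphi(\sdone^n)}\bigr)$ is a linear-time computable upper bound of $\flength\cdot$ that is monotone on $\dom(\xi)$, which is exactly the witness required by \Cref{def:regular}. The substance of the proposition is therefore the first claim, and the plan is to turn an arbitrary regular representation into a length-monotone one by padding each name to a length that depends only on the length of the query.

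So fix a regular representation $\xi$ together with a witnessing linear-time computable $m\colon\B\to\omega^\omega$, so that $m(\varphi)$ is monotone and $\flength\varphi\le m(\varphi)$ for every $\varphi\in\dom(\xi)$. I would introduce a padding functional $E$ and a decoding functional $D$ built on a self-delimiting code: set
\[ E(\varphi)(\str a) := \varphi(\str a)\,\sdone\,\sdzero^{\,m(\varphi)(\length{\str a})-\length{\varphi(\str a)}}, \]
and let $D(\psi)(\str a)$ be the string obtained from $\psi(\str a)$ by deleting all trailing $\sdzero$s together with the single $\sdone$ immediately preceding them. For $\varphi\in\dom(\xi)$ the exponent is nonnegative because $\length{\varphi(\str a)}\le\flength\varphi(\length{\str a})\le m(\varphi)(\length{\str a})$, and $\length{E(\varphi)(\str a)}=m(\varphi)(\length{\str a})+1$ depends only on $\length{\str a}$ and is nondecreasing in it, since $m(\varphi)$ is monotone; hence $E(\varphi)\in\Reg$. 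The appended $\sdone$ serves as an end-marker, so $D$ reconstructs $\varphi$ exactly, giving $D\circ E=\mathrm{id}$ on $\dom(\xi)$ and in particular injectivity of $E$ there. I would then define $\xi'$ with domain $E(\dom(\xi))\subseteq\Reg$ by $\xi'(E(\varphi)):=\xi(\varphi)$; this is well defined by injectivity and surjective because $\xi$ is, so $\xi'$ is a second-order representation.

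It remains to check that $E$ and $D$ are linear-time translations, which yields the desired linear-time equivalence. The functional $D$ only queries $\psi$ at $\str a$ and strips a suffix, costing time linear in $\length{\psi(\str a)}+\length{\str a}$, hence $\bigo(L)$. For $E$ the one nontrivial ingredient is computing the pad length $m(\varphi)(\length{\str a})$, and this is precisely where I would invoke linear-time computability of $m$; combined with the bound $m(\varphi)(k)\le C\flength\varphi(k)+C$ recorded right after \Cref{def:regular}, it shows that both computing the exponent and writing the $m(\varphi)(\length{\str a})+1$ output symbols stay within $\bigo(L)$ on $\dom(\xi)$. The main obstacle to get right is keeping the code self-delimiting: $D$ must recover $\varphi$ with no access whatsoever to $m$, which is exactly why a fixed end-marker followed by trailing zeros is used rather than padding to a precomputed target length that the decoder would have to recompute. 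The verification that stripping after the last $\sdone$ inverts $E$ irrespective of the shape of $\varphi(\str a)$ (e.g.\ when $\varphi(\str a)$ itself ends in $\sdzero$ or $\sdone$, or when the exponent is $0$) is then routine and I would leave it to a short case check.
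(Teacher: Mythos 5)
Your proof is correct and follows essentially the same route as the paper: the same witness $m(\varphi)(n)=\length{\varphi(\sdone^n)}$ for the easy direction, and for the converse a linear-time padding to a length-monotone name via a self-delimiting code, inverted by a linear-time stripping translation. The only difference is cosmetic: you delimit with a single $\sdone$ end-marker followed by trailing $\sdzero$s, whereas the paper doubles each digit and pads with $\sdzero\sdzero$ pairs; both codes work and both stay within $\bigo(L)$ using the bound $m(\varphi)(n)\leq C\flength{\varphi}(n)+C$.
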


		\begin{proof}
			Let $\xi$ be a second-order representation.
			Set $m(\varphi)(n):= \length{\varphi(\sdone^n)}$.
			This function is obviously linear-time computable and since any element of $\dom(\xi)$ is length-monotone and therefore fulfills $\flength{\varphi}(n) = \length{\varphi(\sdone^n)}$, it is not only an upper bound to the length function, but restricts to the length function on $\dom(\xi)$.

			On the other hand assume that $\xi'$ is a regular representation.
			Let $m:\B\to\omega^\omega$ be the linear time computable upper bound on the length function.
			Define a padding function as follows:
			For a string $\str a$ denote by $\tilde{\str a}$ the string where each $\sdzero$ in is replaced by $\sdzero\sdone$ and each $\sdone$ is replaced by $\sdone\sdone$.
			Set
			\[ \mathrm{pad}(\varphi)(\str a) := \widetilde{\varphi(\str a)}\sdzero^{2\max\{m(\varphi)(\length{\str a}) - \length{\varphi(\str a)},0\}}. \]
			Since $m(\varphi)$ is increasing and $m$ is an upper bound to the length function, $\mathrm{pad}(\varphi)$ is a length-monotone function, whenever $\varphi$ is an element of $\dom(\xi')$.
			Since $m$ is linear time computable, the mapping $\mathrm{pad}$ is linear time computable.
			Thus the representation $\xi'$ can be translated to the second-order representation $\xi^{\mathrm{pad}}$ in linear time, where $\varphi$ is a $\xi^{\mathrm{pad}}$-name of an element if and only if $\varphi=\mathrm{pad}(\psi)$ for a $\xi'$-name $\psi$ of the element.

			A linear time translation in the opposite direction is easily written down by first removing pairs of $\sdzero$s from the return-values of a string function and then undoing the encoding of single digits by pairs of digits.
		\end{proof}

	\subsection{Completeness}
		Usually bounded time computability is a more restrictive notion than computability.
		However, for total functions on complete spaces one might expect any computable function to be computable in bounded time.
		This is for instance true for the real numbers represented as in \Cref{def:standard representation of the reals}.

		\begin{lemma}
			Let $A\subseteq \B$ be closed.
			Any computable total function $F:A\to \B$ is computable in bounded time on $A$.
			\begin{proof}
				Let $M^?$ be an oracle Machine computing the total function.
				Consider the function $\Time_{M^?}:A\times \Sigma^*\to \omega$, where $\Time_{M^?}(\varphi,\str a)$ is the number of steps in the run $M^\varphi(\str a)$.
				This function is continuous.
				To see this let $(\varphi_n,\str a_n)\subseteq A$ be a sequence converging to $(\varphi,\str a)\in A$.
				Since $\Sigma^*$ is equipped with the discrete topology, there exists some $N$ such that for all $n \geq N$ it holds that $\str a_n = \str a$.
				Furthermore, the run $M^\varphi(\str a)$ takes a finite number $k$ of steps.
				Let $\tilde N\geq N$ be such that the initial segments of length $k$ of the $\varphi_n$ coincide whenever $n\geq \tilde N$.
				Since $M^?$ is a deterministic machine, the runs $M^\varphi(\str a)$ and $M^{\varphi_n}(\str a_n)$ coincide whenever $n$ is bigger than $\tilde N$.
				Thus, the sequence $\Time_{M^?}(\varphi_n,\str a_n)$ is eventually constant and converges in $\omega$.
				
				Since $A$ is closed, $K_l\cap A$ is compact and since $\Time_{M^?}$ is continuous, the number
				\[ T(l,n) := \max\big\{\Time_{M^?}(\varphi,\str a) \mid\varphi\in K_l\text{ and }\length{\str a} \leq n\big\} \]
				exists as the maximum of a continuous function on a compact set.
				By the definition of $\Time_{M^?}$ the machine $M^?$ runs in time $T$.
				Thus, the function is computable in bounded time.
			\end{proof}
		\end{lemma}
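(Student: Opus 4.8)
The plan is to extract from computability of $F$ a concrete oracle machine and then bound its running time by a maximum taken over a compact set. First I would fix an oracle Turing machine $M^?$ with $M^\varphi(\str a) = F(\varphi)(\str a)$ for every $\varphi\in A$; since $F$ is total on $A$ and each value $F(\varphi)$ is itself a total string function, this computation halts for all $\varphi\in A$ and all input strings $\str a$. I then introduce the step-counting function $\Time_{M^?}\colon A\times\Sigma^*\to\omega$ sending $(\varphi,\str a)$ to the number of steps the run $M^\varphi(\str a)$ takes; totality on $A$ is exactly what guarantees it is well defined.

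The heart of the argument is continuity of $\Time_{M^?}$, where $A$ carries the subspace topology from $\B$, while $\Sigma^*$ and $\omega$ are discrete. Given a sequence $(\varphi_m,\str a_m)\to(\varphi,\str a)$ in $A\times\Sigma^*$, discreteness of $\Sigma^*$ forces $\str a_m=\str a$ from some index on. The halting run $M^\varphi(\str a)$ takes finitely many, say $k$, steps, so during it the machine writes only finitely many query strings and reads at most the first $k$ symbols of each answer. Convergence in the product topology then yields an index past which every $\varphi_m$ agrees with $\varphi$ on all these finitely many query strings; since $M^?$ is deterministic, the runs $M^{\varphi_m}(\str a_m)$ and $M^\varphi(\str a)$ become literally identical, so $\Time_{M^?}(\varphi_m,\str a_m)$ is eventually equal to $k$. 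As $\omega$ is discrete, eventual constancy is precisely convergence, giving continuity.

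With continuity in hand the remainder is routine. For each $l\in\omega^\omega$ the set $K_l$ is compact, hence $K_l\cap A$ is compact as a closed subset of a compact set, and there are only finitely many strings $\str a$ with $\length{\str a}\le n$. Therefore
\[
    T(l,n) := \max\big\{\Time_{M^?}(\varphi,\str a)\,\big|\,\varphi\in K_l\cap A\text{ and }\length{\str a}\le n\big\}
\]
is the maximum of a continuous function over a compact domain and hence exists in $\omega$. By construction any $\varphi\in A$ lies in $K_{\flength\varphi}\cap A$, so $\Time_{M^?}(\varphi,\str a)\le T(\flength\varphi,\length{\str a})$, which is exactly the assertion that $M^?$ runs in time $T$ on $A$; thus $F$ is computable in bounded time on $A$. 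Note that no monotonicity of $T$ is needed, since bounded-time computability only asks for the existence of some running time.

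The step I expect to be the main obstacle is the continuity claim, and specifically making precise that a finite deterministic computation depends on the oracle only through finitely many of its values. The delicate point is that the number $k$ of steps, and with it the set of queried strings, itself depends on $\varphi$, so one must first fix the limit run $M^\varphi(\str a)$ to obtain a uniform finite bound \emph{before} invoking product-topology convergence of the $\varphi_m$; everything afterwards (compactness of $K_l\cap A$ and existence of the maximum) is standard.
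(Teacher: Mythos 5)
Your proposal is correct and follows essentially the same route as the paper: the same step-counting function $\Time_{M^?}$, the same sequential continuity argument exploiting that a halting deterministic run queries the oracle only finitely often, and the same maximum over the compact set $K_l\cap A$ to define the time bound. The only (harmless) difference is that you phrase the oracle-dependence in terms of the finitely many queried strings rather than the paper's ``initial segments of length $k$'', which is if anything slightly more precise.
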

		In particular any computable total function $F:\B\to\B$ is computable in bounded time.

		A condition that guarantees that each computable function has a total computable realizer is the following:
		Recall that a subset $A$ of the Baire space is called \demph{co-recursively enumerable closed} or \demph{co-r.e.\ closed}, if there exists an oracle Turing machine $M^?$ such that the run $M^\varphi(\varepsilon)$ terminates if and only if $\varphi\notin A$.
		\begin{definition}
			We call a representation \demph{complete}, if its domain is closed, and \demph{computably complete}, if its domain is co-r.e.\ closed.
		\end{definition}
		All of the representations we construct are complete.
		Typical examples of representations that are not complete are the \lq padded\rq\ counterexample representations constructed by Kawamura and Pauly to prove that a straight-forward generalization of admissibility to the polynomial-time framework does not provide a reasonable notion \cite[Section 7]{MR3219039}.

		\begin{lemma}
			Let the representation of $\XX$ be complete.
			Then any computable function with domain $\XX$ has a total computable realizer.
		\end{lemma}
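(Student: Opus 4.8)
The plan is to start from a realizer and repair it into a total one without disturbing its behaviour on names. So let $f$ be a computable function whose domain is all of $\XX$, and let $M^?$ be an oracle machine computing some realizer $F$ of $f$. Since $f$ is total on $\XX$ and $F$ realizes it, for every $\varphi\in\dom(\xi_{\XX})$ and every input string $\str a$ the run $M^\varphi(\str a)$ terminates and outputs a digit of a valid $\xi_{\YY}$-name of $f(\xi_{\XX}(\varphi))$. The only defect is that $M^?$ may diverge on oracles outside $\dom(\xi_{\XX})$; but such oracles are not names, so on them a realizer is allowed to output anything whatsoever. Hence the entire task reduces to the following: convert $M^?$ into a machine $G^?$ that halts on \emph{every} oracle, while coinciding with $M^?$ on $\dom(\xi_{\XX})$.

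I would construct $G^?$ by dovetailing two searches on input $(\psi,\str a)$: first, the plain simulation of $M^\psi(\str a)$; second, a search for a finite piece of evidence witnessing that $\psi\notin\dom(\xi_{\XX})$. Run both in parallel and commit to whichever terminates first: if the simulation halts, copy its output $M^\psi(\str a)$; if the rejection search succeeds first, emit a fixed default value, say $\varepsilon$. For $\psi\in\dom(\xi_{\XX})$ the rejection search never succeeds (there is no witness of non-membership) and the simulation always halts, so $G^\psi(\str a)=M^\psi(\str a)$ remains a correct name of $f(\xi_{\XX}(\psi))$; in particular $G^?$ still realizes $f$. For $\psi\notin\dom(\xi_{\XX})$ the output is irrelevant, and we only need termination, which holds because at least one of the two searches succeeds. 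Thus $G^?$ is total and computable, and $G$ is the desired total computable realizer.

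The load-bearing step is the rejection search, i.e.\ the requirement that $\psi\notin\dom(\xi_{\XX})$ be semi-decidable so that the parallel search terminates on every non-name. This is exactly the point at which completeness is used: the domain is closed, so its complement is open, and membership in that open complement can be recognised from a finite initial part of the oracle; a dovetailed search over such finite parts then terminates on every $\psi$ outside the domain. I expect this to be the main obstacle to make watertight, since one must verify that the rejection search (i) \emph{never} fires on a genuine name, which follows from membership-evidence living only in the open complement, and (ii) \emph{always} fires on a non-name, so that together with the halting set of $M^?$ the two searches cover all of $\B$. Once the recognisability of non-membership afforded by the closed domain is set up carefully, the rest is the routine dovetailing bookkeeping sketched above.
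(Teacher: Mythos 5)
Your construction is the same dovetailing the paper uses, and the bookkeeping part is fine. The genuine problem is precisely the step you flag as load-bearing: you claim that because $\dom(\xi_{\XX})$ is closed, its complement is open and hence non-membership ``can be recognised from a finite initial part of the oracle'' by a dovetailed search. Openness of the complement only tells you that every non-name has \emph{some} finite initial segment all of whose extensions are non-names; it does not tell you that a machine can enumerate or recognise which finite segments have this property. That is an effectivity condition on the closed set, not a topological one. Concretely, fix a set $S\subseteq\NN$ that is not computably enumerable and consider a representation whose domain is $\{\varphi\in\B\mid \varphi(\varepsilon)\neq n\text{ for all }n\in S\}$. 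This domain is closed, but a rejection search that fires on exactly the non-names would semi-decide membership in $S$, which is impossible. So under the hypothesis of completeness alone your rejection search is not computable, and the parallel composition need not halt on non-names; the totalization strategy cannot be carried out.

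What the argument actually needs is that $\dom(\xi_{\XX})$ be co-r.e.\ closed, i.e.\ the paper's notion of \emph{computable} completeness: then there is by definition a machine $\tilde M^?$ such that $\tilde M^\varphi(\varepsilon)$ terminates exactly when $\varphi$ is not a name, and your dovetailing goes through verbatim. This is in fact what the paper's own proof does --- it invokes ``a machine that witnesses the computable closedness of the domain'' --- so the stated hypothesis of mere completeness is not what either proof uses. (Note also that in the subsequent theorem only the computably complete case requires a total realizer; the merely complete case only needs a time bound on $\dom(\xi_{\XX})$, which follows from the earlier lemma about closed subsets of $\B$ without any totalization.) To repair your writeup, either strengthen the hypothesis to computable completeness, or find a genuinely different extension argument; the search-based approach cannot be salvaged from closedness alone.
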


		\begin{proof}
			Let $M^?$ be a machine that computes a realizer of the function $f$.
			Let $\tilde M^?$ be a machine that witnesses the computable closedness of the domain of $\xi$.
			Construct an oracle Turing machine $N^?$ as follows:
			When given access to an oracle $\varphi$ and an input string $\str a$, it dovetails simulations of $M^\varphi(\str a)$ and $\tilde M^\varphi(\varepsilon)$.
			If the simulation of $M^\varphi(\str a)$ terminates first, the machine returns the return value of this computation.
			If the simulation of $\tilde M^\varphi(\varepsilon)$ terminates first, the machine stops and returns the empty string.
			On one hand, since $M^?$ computes a realizer of a function, $M^\varphi(\str a)$ terminates whenever $\varphi$ is an element of $\dom(\xi)$.
			On the other hand if $\varphi\notin \dom(\xi)$ then $\tilde M^\varphi(\varepsilon)$ is guaranteed to terminate.
			Thus $N^?$ indeed computes a total function on the Baire space.
			It is clear from its definition that this function is still a realizer of $f$.
		\end{proof}

		The following is the combination of the previous two lemmas:

		\begin{theorem}\label{resu:completeness as properness}
			Let $f$ be a computable function with domain $\XX$.
			If the representation of $\XX$ is complete, then $f$ is computable in time bounded on $\XX$.
			If it is computably complete, then $f$ is computable in bounded time.
		\end{theorem}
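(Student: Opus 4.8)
The plan is to combine the two preceding lemmas, handling the two assertions of the theorem separately according to which notion of completeness is in force. No new machinery should be required: the content lies entirely in matching each hypothesis to the appropriate lemma.

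For the first assertion I would argue directly from the first lemma. Since $f$ is total with domain $\XX$, every realizer $F$ of $f$ must halt on each $\xi_{\XX}$-name and output a name of the corresponding value; hence $F$ is a total computable function on the set $A:=\dom(\xi_{\XX})$. Completeness of the representation says precisely that $A$ is closed. Applying the first lemma to this closed $A$ then yields that $F$ is computable in bounded time on $A=\dom(\xi_{\XX})$, which is exactly the statement that $f$ is computable in time bounded on $\XX$.

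For the second assertion the target is stronger: a running-time bound valid on all of $\B$, not merely on the domain. Here I would first invoke the second lemma, which—using the co-r.e.\ refuting machine supplied by computable completeness—produces a realizer $N$ of $f$ that is total on the entire Baire space. Since $N$ is then a computable total functional $\B\to\B$, the remark following the first lemma (every computable total functional on $\B$ is computable in bounded time, the relevant compact set being $K_l$ itself) immediately furnishes a running time for $N$ on all of $\B$. Thus $f$ is computable in bounded time.

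The one genuinely delicate point, and the reason the two completeness notions are separated, is the gap between \emph{bounded time on $\XX$} and \emph{bounded time}. On a merely closed domain the realizer need not halt off $\dom(\xi_{\XX})$, so compactness of $K_l\cap\dom(\xi_{\XX})$ bounds the running time only on names; it is the co-r.e.\ hypothesis that permits dovetailing the original computation against the refuting machine to obtain a genuinely total realizer, after which compactness of $K_l$ alone closes the argument. I expect no calculation beyond this bookkeeping.
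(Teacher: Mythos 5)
Your proposal is correct and follows essentially the same route as the paper, which presents this theorem precisely as the combination of the two preceding lemmas: the first (closedness of $\dom(\xi_{\XX})$, hence compactness of $K_l\cap\dom(\xi_{\XX})$ and a maximum of the continuous time function) yields the bound on $\XX$, while the second (dovetailing the realizer against the co-r.e.\ refuting machine to obtain a total realizer on $\B$) reduces the second assertion to the remark that every computable total functional on $\B$ is computable in bounded time. Your closing observation about why the two completeness notions are separated matches the paper's intent exactly.
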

		An example of a situation where provably no reasonable complete representations exist is provided by Férée and Hoyrup:
		They prove that for a represented non $\sigma$-compact Polish space $\XX$ there is no representation of $C(\XX)$ such that the time complexity of the evaluation is well defined (\cite[Theorem 3.1]{higherorder}, a full proof can be found in the PhD thesis by Férée).
		By the above theorem, in this case there can not  exists a complete representation of $C(\XX)$ such that the evaluation is computable.
		This is not surprising, as in fact it is known that the descriptive complexity of the domain of representations increases with the number of applications of the function space construction \cite{MR3417080}.

		The following can be understood as the motivation for the names \lq complete\rq\ resp.\ \lq computably complete\rq.

		\begin{theorem}[Completenesses]\label{resu:completenesses}
			A Cauchy representation of a separable metric space is complete if and only if the space is complete.
			The Cauchy representation of a complete computable metric space is computably complete.
		\end{theorem}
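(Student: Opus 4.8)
The plan is to prove the two claims of Theorem~\ref{resu:completenesses} separately, working directly from the definition of the Cauchy representation $\xi_{\M}$ (\Cref{def:Cauchy representation}) and the topological characterization of completeness (closedness of the domain) versus the effective one (co-r.e.\ closedness of the domain).

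For the first claim I would establish the equivalence ``$\M$ complete $\iff$ $\dom(\xi_{\M})$ closed'' by relating Cauchy sequences in $\Mn$ to the structure of names. First recall that $\varphi\in\dom(\xi_{\M})$ exactly when $\varphi(n)\in\NN$ for all $n$ and the sequence $(r_{\varphi(n)})_n$ is a fast Cauchy sequence, i.e.\ $d(r_{\varphi(m)},r_{\varphi(n)})\le \frac{1}{m+1}+\frac{1}{n+1}$ and it converges to the named point. For the forward direction, assume $\Mn$ is complete and take a sequence $(\varphi_k)$ in $\dom(\xi_{\M})$ converging in the product topology to some $\varphi\in\B$. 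Convergence in $\B$ means that on each fixed input $n$ the value $\varphi_k(n)$ is eventually constant equal to $\varphi(n)$; from this I would argue that $\varphi$ still satisfies the defining Cauchy-name conditions, so $(r_{\varphi(n)})_n$ is a fast Cauchy sequence, which by completeness of $\Mn$ converges to some $x$, whence $\varphi\in\xi_{\M}^{-1}(x)\subseteq\dom(\xi_{\M})$; thus the domain is closed. For the converse, suppose $\dom(\xi_{\M})$ is closed and let $(x_m)$ be a Cauchy sequence in $\Mn$; using density of $(r_i)$ I would select indices building a string function $\varphi$ that is a ``candidate'' name, realize it as a limit of genuine names of the $x_m$, and conclude by closedness that $\varphi$ is a name, so its named point is the limit of $(x_m)$, giving completeness of $\Mn$.

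For the second claim I would show that when $\M$ is a \emph{complete computable} metric space, the complement of $\dom(\xi_{\M})$ is recursively enumerable, i.e.\ I must build an oracle machine $M^?$ that halts on input $\varepsilon$ with oracle $\varphi$ precisely when $\varphi\notin\dom(\xi_{\M})$. The failure of $\varphi$ to be a name can happen in two detectable ways: either some $\varphi(n)$ fails to encode a natural number, which is decidable by inspecting the finite string $\varphi(n)$; or the fast-Cauchy condition is violated, i.e.\ there exist $m,n$ with $d(r_{\varphi(m)},r_{\varphi(n)})>\frac{1}{m+1}+\frac{1}{n+1}$. The key point is that this latter inequality is \emph{verifiable} using the computability of the discrete metric $\tilde d$ from \Cref{eq:the discrete metric}: I would dovetail over all pairs $(m,n)$, compute sufficiently precise approximations to $\tilde d(\varphi(m),\varphi(n))$, and halt as soon as an approximation certifies that the strict inequality holds. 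Completeness of $\Mn$ is what guarantees that these two enumerable failure modes are \emph{exhaustive}: if neither the natural-number condition nor the fast-Cauchy condition ever fails, then $(r_{\varphi(n)})_n$ is a genuine fast Cauchy sequence, which by completeness converges, so $\varphi$ really is a name. Hence the machine halts if and only if $\varphi\notin\dom(\xi_{\M})$, witnessing co-r.e.\ closedness.

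The main obstacle I anticipate is the verifiability argument in the second claim, specifically making the semidecision of the Cauchy-failure condition precise. The inequality $d(r_{\varphi(m)},r_{\varphi(n)})>\frac{1}{m+1}+\frac{1}{n+1}$ is strict, so whenever it genuinely holds there is a positive gap that some finite-precision approximation to $\tilde d$ will eventually certify; the computability of $\tilde d$ (returning a name of $d(r_i,r_j)$ in the sense of \Cref{def:standard representation of the reals}) supplies such approximations. I must therefore be careful to dovetail the search over all pairs $(m,n)$ \emph{and} over increasing precision requests, so that any true strict violation is detected in finite time, while no false positive can occur because approximations respect the given error bounds. I would also need to confirm that completeness of the space is used in exactly the right place, namely to rule out the pathological case where $(r_{\varphi(n)})$ is fast-Cauchy yet fails to converge in $\Mn$ --- this is precisely where a non-complete space would admit a $\varphi$ that is not a name but against which the machine never halts, which is why the hypothesis is indispensable and dovetails cleanly with the first claim.
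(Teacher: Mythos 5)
Your proposal is correct and follows essentially the same route as the paper: closedness of the domain is related to completeness via eventual stabilization of name values in the product topology, and computable completeness is obtained by semideciding violations of the fast-Cauchy condition $d(r_{\varphi(i)},r_{\varphi(j)})\leq \frac1{i+1}+\frac1{j+1}$ using the computable discrete metric, with completeness of the space guaranteeing that this test is exhaustive. The only cosmetic difference is that in the forward direction you argue via the fast-Cauchy property of $(r_{\varphi(n)})$ for the limit function $\varphi$, whereas the paper argues via the Cauchy property of the sequence of named points $(x_n)$; these are interchangeable.
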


		\begin{proof}
			Let $\M=(\Mn,d,(r_i))$ be the metric space with dense sequence.
			First assume that $\M$ is complete and let $\varphi_n$ be a sequence of names of elements $x_n$ that converges to an element $\varphi$ of the Baire space.
			To see that $\varphi$ is a name, first note that the sequence $(x_n)$ is a Cauchy sequence:
			Given $n$ choose $N$ such that for all $k\geq N$ it holds that $\varphi_k(2n+1) = \varphi(2n+1)$.
			And therefore for $k,l\geq N$
			\[ d(x_k,x_l) \leq d(x_k,r_{\varphi_k(2n+1)})+d(r_{\varphi_l(2n+1)},x_l) \leq \frac1{n+1}. \]
			Since the metric space is complete and $x_n$ is a Cauchy sequence, there exists a limit $x$ of the Cauchy sequence.
			That $\varphi$ is a name of $x$ follows from taking the limit $m\to \infty$ in
			\[ d(x,r_{\varphi(n)}) \leq d(x,x_m) + d(x_m,r_{\varphi(n)}). \]
			Note that $\varphi(n) = \varphi_m(n)$ for $m$ big enough and $d(x_m,r_{\varphi_m(n)})\leq \frac1{n+1}$.

			Now assume that the domain of the Cauchy representation is closed and let $x_n$ be a sequence of elements in $\M$.
			From the Cauchy property it follows that it is possible to find a convergent sequence of names $\varphi_n$.
			Since the domain of the representation is closed the limit $\varphi$ is the name of some $x\in\Mn$, and since the Cauchy representation is continuous, $x$ is the limit of the sequence $x_n$.
			Therefore $\M$ is complete.

			Finally assume that $\M$ is a computable metric space and let $\xi$ be the Cauchy representation.
			To see that the domain of the Cauchy representation is co-r.e.\ first prove that
			\[ \varphi \in\dom(\xi)\quad \Leftrightarrow \quad \forall i,j\in\NN :d(r_{\varphi(i)},r_{\varphi(j)})\leq \frac1{i+1}+\frac1{j+1}. \]
			For the first implication assume that $\varphi$ is a name of some element $x$ in the Cauchy representation.
			Thus, using the triangle inequality and the symmetry of the metric see that for all $i,j\in\NN$
			\[ d(r_{\varphi(i)},r_{\varphi(j)})\leq d(r_{\varphi(i)},x) + d(x,r_{\varphi(j)}) \leq \frac1{i+1}+\frac1{j+1}. \]
			For the other implication assume that $\varphi$ is such that the right hand side of the equivalence is fulfilled.
			It follows that $(r_{\varphi(i)})$ is a Cauchy sequence.
			Since $\M$ was assumed to be complete, this Cauchy sequence converges to some element $x$ and for any $i,j\in\NN$
			\[ d(x,r_{\varphi(i)}) \leq d(x,r_{\varphi(j)}) + d(r_{\varphi(j)},r_{\varphi(i)}) \leq d(x,r_{\varphi(j)}) + \frac1{j+1}+\frac1{i+1}. \]
			Taking the limit $j\to \infty$ in this inequality proves that $d(x,r_{\varphi(i)})\leq \frac1{i+1}$ for all $i\in\NN$, i.e.\ that $\varphi$ is a name of $x$ in the Cauchy representation.
			
			Thus, a machine that correctly identifies string functions that are not names in the Cauchy representation can be constructed by searching for values $i$ and $j$ such that $d(r_{\varphi(i)},r_{\varphi(j)}) > \frac1{i+1}+\frac1{j+1}$.
			This search can be done by a machine since the discrete metric is computable.
		\end{proof}

		Since completeness is not preserved under any kind of equivalence, the proof of the above theorem is repeated several times throughout the rest of the paper.

\section{Construction of standard representations}\label{sec:construction of standard representations}
	
	This chapter specifies families of representations for which it is possible to prove that the estimate obtained from \Cref{resu:from complexity to metric entropy} is close to optimal.
	This requires two steps.
	The first step is to eliminate the inherent computational difficulty of the metric, as this can not be reflected in the metric structure of the space.
	Recall from \Cref{ex:computable metric spaces continued} that the bound on the metric entropy obtained from the Cauchy representation by \Cref{resu:from complexity to metric entropy} improves with the running time of the discrete metric $\tilde d$.
	Within the framework of representations it can be simulated that the discrete metric has the lowest possible running time by considering a representation that additionally provides the discrete metric as oracle.
	\begin{definition}\label{def:the relativized cauchy representation}
		Let $\M=(\Mn,d,(r_i))$ be a complete separable metric space with dense sequence.
		Define the \demph{relativized Cauchy representation} $\xi_{\M}^r$ as follows:
		$\varphi\in\B$ is a $\xi_{\M}^r$-name of $x\in \Mn$ if and only if for all $n\in\NN$ the string $\varphi(\sdzero n)$ encodes a non-negative integer $i$ with $d(x,r_i)\leq \frac1{n+1}$ and for all $k,m,n\in\NN$ 
		\[ \abs{d(r_{k},r_m)-\frac{\varphi(\sdone\langle k,m,n\rangle)}{n+1}}\leq \frac1{n+1}.  \]
	\end{definition}
	That is: Any name comes with an oracle for the discrete metric $\tilde d$.
	Note that $\M$ was not assumed to be a computable metric space.
	In particular the discrete metric may not be computable.
	If it is incomputable then $\xi^r_{\M}$ does not have any computable string functions in its domain.
	In any case, $\xi^r_{\M}$ renders the metric computable in time $\bigo(T)$ for $T(l,n):=\max\{l(n+2),n\}$ and the bound obtained by \Cref{resu:from complexity to metric entropy} is $\size{\xi^r_{\M}(K_l)}\leq C l(n+2)^2+C$.
	This is reasonably close to the bound $\size{\xi_{\M}(K_l)} \leq l(n)$ that came from the concrete structure of the Cauchy representation (cf. \Cref{ex:a family of compact sets}).
	This bound itself, however, may still be far off, as the one point space as example shows.

	\Cref{resu:completenesses} remains valid if the Cauchy representation is replaced by the relativized Cauchy representation:
	\begin{proposition}[Relativized completenesses]\label{resu:relativized completenesses}
		A relativized Cauchy representation of a separable metric space is complete if and only if the space is complete.
		The relativized Cauchy representation of a computable complete metric space is computably complete.
	\end{proposition}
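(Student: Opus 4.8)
The plan is to mirror the three-part proof of \Cref{resu:completenesses}, exploiting the fact that the domain of the relativized Cauchy representation $\xi^r_{\M}$ splits into two independent pieces. The inputs of the form $\sdzero n$ carry exactly a Cauchy name in the sense of \Cref{def:Cauchy representation} (only the precision index $n$ is hidden behind a $\sdzero$), while the inputs of the form $\sdone\langle k,m,n\rangle$ impose a condition on $\varphi$ that refers only to the fixed discrete metric $\tilde d(k,m)=d(r_k,r_m)$ and not to the represented point $x$. Writing $D\subseteq\B$ for the set of $\varphi$ satisfying this second (oracle) condition, note that $D$ is an intersection over all triples $(k,m,n)$ of conditions each depending on the single coordinate $\varphi(\sdone\langle k,m,n\rangle)$; hence $D$ is closed, and, when $\tilde d$ is computable, its complement is recursively enumerable. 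This observation is what lets the argument of \Cref{resu:completenesses} go through essentially unchanged.

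For the first assertion I would argue both implications as before. If $\M$ is complete and $\varphi_n\to\varphi$ with each $\varphi_n$ a $\xi^r_{\M}$-name of $x_n$, then $\varphi\in D$ because $D$ is closed, so the oracle condition survives the limit; and the $\sdzero$-part makes $(x_n)$ a Cauchy sequence by the same triangle-inequality estimate as in \Cref{resu:completenesses} (now reading $\varphi_k(\sdzero(2n+1))$ in place of $\varphi_k(2n+1)$), whose limit $x$ is then represented by $\varphi$. Conversely, if $\dom(\xi^r_{\M})$ is closed, I take a Cauchy sequence $(x_n)$ and build names $\varphi_n$ for it that converge in $\B$: I fix once and for all a single function $g$ with $g(k,m,n)$ a valid $\frac1{n+1}$-approximation of $\tilde d(k,m)$ (such integers exist for any real value, even if $\tilde d$ is incomputable) and use $g$ for the $\sdone$-part of every $\varphi_n$, while choosing the $\sdzero$-part to stabilize coordinatewise exactly as for the ordinary Cauchy representation. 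Closedness then yields a limit name $\varphi$, and continuity of $\xi^r_{\M}$, which induces the metric topology since the $\sdzero$-part alone delivers arbitrarily fine approximations, forces $x_n\to\xi^r_{\M}(\varphi)$, so $\M$ is complete.

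For the second assertion, assume $\M$ is a complete computable metric space. As in \Cref{resu:completenesses}, completeness lets me replace ``the $\sdzero$-part is a Cauchy name'' by the coordinatewise-checkable Cauchy consistency $\forall i,j:\tilde d(\varphi(\sdzero i),\varphi(\sdzero j))\le\frac1{i+1}+\frac1{j+1}$, together with the requirement that each $\varphi(\sdzero n)$ decode to a nonnegative integer. Hence $\varphi\notin\dom(\xi^r_{\M})$ holds if and only if some $\varphi(\sdzero n)$ fails to encode an integer, or some pair $i,j$ violates Cauchy consistency, or some triple $k,m,n$ violates the oracle condition. Each of these is semi-decidable because $\tilde d$ is computable, so the relevant strict inequalities between dyadic approximations can be detected, and a machine dovetailing the three searches halts precisely off the domain, witnessing co-r.e.\ closedness.

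I do not expect a genuine obstacle: the only thing that must be verified beyond \Cref{resu:completenesses} is that adjoining the oracle component $D$ preserves closedness and co-r.e.-ness, and this is immediate because $D$ is a coordinatewise condition that makes no reference to the represented point. The one place demanding a little care is the reverse direction of the first assertion, where I must produce a common oracle part $g$ valid for all approximating names without assuming computability of $\tilde d$; this is harmless, since a real number always admits rational approximations to any prescribed precision.
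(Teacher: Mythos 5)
Your proof is correct and follows essentially the same route as the paper: the paper likewise reduces everything to the proof of \Cref{resu:completenesses}, adding only the observation that the oracle-for-the-discrete-metric component is a coordinatewise non-strict condition (hence closed, and co-r.e.\ when $\tilde d$ is computable). Your write-up is in fact slightly more explicit than the paper's about isolating the set $D$ and about constructing a common oracle part $g$ in the converse direction, but the underlying argument is the same.
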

	\begin{proof}
		If the relativized Cauchy representation is complete, it follows that the Cauchy representation with respect to the same dense sequence is complete, thus completeness of the metric space follows from \Cref{resu:completenesses}.

		Now assume that the space is complete and let $\varphi_n$ be a sequence of names that converges to some $\varphi\in\B$.
		The proof of \Cref{resu:completenesses} can be copied to see that the values $\varphi(\sdzero n)$ encode indices of approximations to an element.
		That the values of $\varphi(\sdone\langle k,m,n\rangle)$ are valid approximations of the discrete metric follows from the use of non-strict inequality.

		Finally assume that the metric space is computably complete.
		A machine that terminates on the empty string if and only if the oracle is not a name can be described as follows:
		It dovetails the procedure described in the proof of \Cref{resu:completenesses} and a search for incompatibilities between the values of the discrete metric and values of the names when they are interpreted as an oracle for the discrete metric.
	\end{proof}

	Note that while the relativized Cauchy representation is a good starting point, for many applications from complexity theory it is not appropriate.
	For instance the standard representation of Lipschitz one functions that map zero to itself does not polynomial-time translate to a Cauchy representation.
	The examples right before the conclusion illustrate that the constructions below include these as special cases.
	It is not clear how to tackle the problem in the most general case of a separable metric space, thus we present two results for restricted classes of spaces:
	First compact metric spaces and then separable Banach spaces.

	\subsection{Compact metric spaces}

		The basic idea for compact metric spaces is to use a refinement of the relativized Cauchy representation for a sequence that is well-behaved in the following sense:
		\begin{definition}\label{def:uniformly dense}
			A sequence $(r_i)$ in a compact metric space $(\Nn,d)$ is called \demph{uniformly dense}, if both of the following hold:
			\begin{itemize}
			 	\item[(c):] $(r_i)$ has the \underline{c}overing property:
			 	The closed $2^{-n}$-balls around the first $2^{\size{\Nn}(n)+\lceil\lb(n+1)\rceil}$ elements cover $\Nn$.
			 	\item[(s):] $(r_i)$ has the \underline{s}panning property:
			 	For any $k\leq \size{\Nn}(n-1)$ there are at least $\lceil2^{k-1}\rceil$ elements of pairwise distance strictly more than $2^{-n}$ within $r_0,\ldots, r_{2^{k}-1}$.
			\end{itemize}
		\end{definition}
		Examples for uniformly dense sequences are the standard enumeration of the dyadic numbers in the unit interval (cf. \Cref{ex:the unit interval}) or standard enumerations of the piece-wise continuous functions with dyadic breakpoints in the Lipschitz one functions that map zero to zero.

		Under the assumption of compactness, uniformly dense sequences do always exists:
		\begin{lemma}[Uniformly dense sequences]\label{resu:uniformly dense}
			In every infinite compact metric space there is a uniformly dense sequence.
		\end{lemma}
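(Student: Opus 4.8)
The plan is to build the sequence from a hierarchy of maximal packings, which yields the spanning property automatically, and then to weave in minimal covers to secure the covering property. Throughout I use that a compact space has finite metric entropy, so $\size\Nn(n)<\infty$ for every $n$, that $\size\Nn$ is non-decreasing, and that infiniteness of $\Nn$ keeps the entropy unbounded so that the construction never runs out of points. I also record the two-sided estimate for a \emph{maximal} set of points of pairwise distance strictly larger than $2^{-n}$: by \Cref{resu:spanning bounds} such a set has at most $2^{\size\Nn(n+1)}$ elements, and by maximality its closed $2^{-n}$-balls cover $\Nn$, so it has at least $2^{\size\Nn(n)}$ elements.

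First I would construct nested maximal packings $A_0\subseteq A_1\subseteq\cdots$, where $A_n$ is obtained by extending $A_{n-1}$ to a maximal set of pairwise distance strictly above $2^{-n}$ (possible since $A_{n-1}$ already has pairwise distances above $2^{-(n-1)}>2^{-n}$). Listing the points level by level — first $A_0$, then $A_1\setminus A_0$, and so on — gives a candidate enumeration. The spanning property (s) is then immediate: for $k\le\size\Nn(n-1)$ one has $2^{k}\le 2^{\size\Nn(n-1)}\le\#A_{n-1}$, so the first $2^{k}$ terms lie inside $A_{n-1}$ and hence are pairwise of distance above $2^{-(n-1)}>2^{-n}$; in particular all $2^k$ of them, more than the required $\lceil 2^{k-1}\rceil$, are pairwise $2^{-n}$-separated.

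The covering property (c) is where the work lies. The level-$n$ packing $A_n$ does cover $\Nn$ by closed $2^{-n}$-balls, but $\#A_n$ may be as large as $2^{\size\Nn(n+1)}$, far beyond the allotted $N_n:=2^{\size\Nn(n)+\lceil\lb(n+1)\rceil}$ when the entropy grows quickly. To repair this I would additionally fix, for each $n$, a \emph{minimal} $2^{-n}$-cover of cardinality $2^{\size\Nn(n)}$ and insert its points into the enumeration before index $N_n$. That there is room for all of these at once follows from monotonicity of the entropy: the covers for levels $0,\dots,n$ together occupy at most $\sum_{j\le n}2^{\size\Nn(j)}\le (n+1)\,2^{\size\Nn(n)}\le 2^{\lceil\lb(n+1)\rceil}2^{\size\Nn(n)}=N_n$ positions, so the slack $2^{\lceil\lb(n+1)\rceil}\ge n+1$ is exactly what the definition provides.

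The main obstacle is that both demands must be met by a single ordering: the inserted cover points need not be separated, so they threaten the separated initial segments that (s) relies on, the danger being sharpest at coarse scales where $N_n$ carries almost no slack (at level $0$ none at all, since $N_0=2^{\size\Nn(0)}$). The resolution I would pursue is to choose the minimal covers to be as well-separated as possible — at each scale taking the cover to be a maximal packing whenever the covering and packing numbers agree, and otherwise exploiting the $2^{\lceil\lb(n+1)\rceil}$ slack to keep the genuinely non-separated cover points a minority of every short prefix — and then to schedule each cover's completion as late inside its window as possible, filling the spanning-critical early positions with packing points. Turning this into one explicit enumeration and checking, prefix by prefix, that a $2^{-n}$-separated majority survives in every initial segment of length at most $2^{\size\Nn(n-1)}$ while each minimal cover still lands within $N_n$, is the heart of the argument.
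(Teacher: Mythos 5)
Your proposal is built from the same two ingredients as the paper's proof --- maximal packings listed level by level, with \Cref{resu:spanning bounds} controlling their cardinality --- and your nested variant even yields a cleaner spanning argument than the paper's (whole initial segments separated, rather than half). But the proof is not complete: the covering property is exactly the part you defer, and by your own account the interleaving of minimal covers, the scheduling ``as late inside its window as possible,'' and the prefix-by-prefix check that a separated majority survives constitute ``the heart of the argument.'' That heart is missing, and it is not clear it can be supplied as sketched: at level $0$ the window $N_0=2^{\size{\Nn}(0)}$ must be filled entirely by a minimal $2^0$-cover, whose points need not be separated at all, and your fallback (``take the cover to be a maximal packing whenever the covering and packing numbers agree'') simply does not apply when they disagree. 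So the tension you correctly identify between (c) and (s) is acknowledged but not resolved.

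The idea you are missing is that the two demands need not be met by two different point sets at all: one should take the level-$n$ packing one scale \emph{coarser}. The paper lets $I_n$ be a maximal set of pairwise distance strictly more than $2^{-n+1}$ (not $2^{-n}$); such a set witnesses a spanning bound at index $n$, so \Cref{resu:spanning bounds} gives $\#I_n\leq 2^{\size{\Nn}(n)}$ --- the entropy at level $n$, not $n+1$ --- while maximality still makes the closed small balls around $I_n$ cover $\Nn$. The packing is thus simultaneously the cover, there is nothing to interleave, and the enumeration obtained by concatenating the $I_n$ satisfies (c) by the very computation you already perform for your inserted covers, namely $\sum_{m\leq n}\#I_m\leq(n+1)2^{\size{\Nn}(n)}\leq 2^{\size{\Nn}(n)+\lceil\lb(n+1)\rceil}$. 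The price is paid on the spanning side: the paper's $I_n$ are not nested, so an initial segment mixes levels, and one argues greedily (at most one element of $I_{m+1}$ lies in a closed $2^{-n}$-ball around a given element of $I_m$) that at least half of the first $2^k$ elements survive --- which is precisely why condition (s) only asks for $\lceil 2^{k-1}\rceil$ separated elements rather than $2^k$. Your nesting trick is compatible with this coarser scale and would remove even that loss; what cannot be salvaged is the attempt to keep the packings at separation $2^{-n}$ (where their cardinality is only bounded by $2^{\size{\Nn}(n+1)}$) and then to patch the covering property by a separate family of minimal covers.
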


		\begin{proof}
			Let $(\Nn,d)$ be the compact metric spaces.
			Let $I_n$ be a maximal set of elements $y_{i,n}$ of pairwise distance strictly more than $2^{-n+1}$.
			From the maximality it can be seen that the closed $2^{-n}$-balls around the $y_{i,n}$ cover $\Nn$ and therefore $\#I_n\leq 2^{\size{\Nn}(n)}$.
			Let $(r_i)$ be the sequence that arises by writing the tuples after one another.

			To verify that $(r_i)$ has the covering property it suffices to note that all elements of $I_n$ are listed within the first
			\[ \sum_{m=0}^n \# I_m \leq \sum_{m=0}^n 2^{\size{\Nn}(m)} \leq (n+1)2^{\size{\Nn}(n)} \leq 2^{\size{\Nn(n)}+\lceil\lb(n+1)\rceil} \]
			elements.
			
			The spanning property follows by induction:
			For a fixed $m< n-1$ at most one $y_{j,m+1}$ can lie in the closed $2^{-n}$-ball around $y_{i,m}$.
			Thus, starting from the beginning and always only adding those elements that are far enough away from the ones previously chosen, one always ends up with at least $2^{k-1}$ elements.
		\end{proof}

		Using this, the main theorem of this section can be proven:

		\begin{theorem}[Representing compact spaces]\label{resu:representing compact spaces}
			Let $(\Nn,d)$ be an infinite compact metric space.
			Let $\ell:\omega\to \omega$ be a function such that there exists a time-constructible and monotone function $S:\omega^\omega\times\omega \to \omega$ such that
			\begin{equation} \label{eq:condition time} \ell(n)S(\ell,n) \geq \size{\Nn}(n) + \lceil \lb(n+1)\rceil. \end{equation}
			Then there exists an admissible, regular, complete representation $\xi$ of length $\ell$ such that the metric is computable in time $\bigo(T)$ for
			\[ T(l,n) := l(n+2)\cdot S(l,n+2) \]
			and for all strictly monotone $l:\omega\to \omega$ and $n\in\omega$
			\[ l(n)\cdot S(l,n)\leq \size \Nn(n-1) \Rightarrow l(n)\cdot S(l,n) \leq \size{\xi(K_l)}(n+1)+1. \]
		\end{theorem}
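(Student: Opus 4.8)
The plan is to build $\xi$ as a spread-out, prefix-refining variant of the relativized Cauchy representation (\Cref{def:the relativized cauchy representation}) anchored at a uniformly dense sequence $(r_i)$, which \Cref{resu:uniformly dense} supplies. I would arrange the approximating data in nested form: the precision-$n$ part of a name refines its precision-$(n-1)$ part, and it is distributed over several query strings of length at most $n$, each carrying at most $\ell(n)$ symbols. The covering property (c) of $(r_i)$ guarantees, for every $x$ and every $n$, an approximating index below $2^{\size\Nn(n)+\lceil\lb(n+1)\rceil}$; since this exponent is at most $\ell(n)S(\ell,n)$ by \eqref{eq:condition time}, the index can be laid out in at most $S(\ell,n)$ blocks of length $\le\ell(n)$, so $\ell$ is a length of $\xi$. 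Each name additionally carries the discrete-metric oracle of \Cref{def:the relativized cauchy representation}. Completeness follows as in \Cref{resu:completenesses}/\Cref{resu:relativized completenesses} once all defining clauses are phrased with non-strict inequalities; admissibility follows because $\xi$ inter-translates with the Cauchy representation; and regularity follows because the block layout lets one read off a monotone, linear-time-computable majorant of $\flength\varphi$.

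For the running time I would, on a precision request $\sdone^n$ and oracles $\varphi,\psi$ for $x,y$, recover the precision-$(n+2)$ indices $i,j$ by fetching their blocks---using time-constructibility of $S$ to compute the number $S(\flength\varphi,n+2)$ of blocks to read---then query the oracle for a $2^{-(n+2)}$-approximation of $\tilde d(i,j)=d(r_i,r_j)$ and output it. The triangle inequality $\abs{d(x,y)-\tilde d(i,j)}\le d(x,r_i)+d(y,r_j)\le 2^{-(n+1)}$ yields the required accuracy, and the fetching dominates the cost, giving running time $\bigo(\ell(n+2)S(\ell,n+2))$, i.e.\ $\bigo(T)$ for $T(l,n)=l(n+2)S(l,n+2)$.

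The substantive point is the lower bound on $\size{\xi(K_l)}$, which I would handle through \Cref{resu:spanning bounds}. Fix a strictly monotone $l$ and $n$ with $k:=l(n)S(l,n)\le\size\Nn(n-1)$. The spanning property (s) provides $2^{k-1}$ elements among $r_0,\dots,r_{2^{k}-1}$ of pairwise distance strictly above $2^{-n}=2^{-(n+1)+1}$. Each such $r_i$ has index below $2^{k}$, hence a \emph{total} description of at most $k=l(n)S(l,n)$ symbols; spreading these symbols over the precision levels up to $n$---whose cumulative block budget is at least $k$---and leaving the finer levels to repeat the exact index produces a $\xi$-name $\varphi_i$ with $\flength{\varphi_i}\le l$, so $\varphi_i\in K_l$. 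Here the nested layout is essential: the coarse approximations needed at the lower precisions are prefixes of the same $\le k$ symbols and therefore cost nothing extra. The $2^{k-1}$ names so obtained witness a spanning bound of value $k-1$ at $n+1$, and \Cref{resu:spanning bounds} turns this into $l(n)S(l,n)-1\le\size{\xi(K_l)}(n+1)$, as required.

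The hard part will be exactly this interplay of budgets across precisions. A flat encoding, which stores a fresh precision-$m$ index at each level, fails for the separated points: at a low precision $m<n$ one would need a $2^{-m}$-approximation of index below $2^{l(m)S(l,m)}$, whereas the only guaranteed small-radius cover uses the first $2^{\size\Nn(m)+\lceil\lb(m+1)\rceil}$ elements, and for the small $l$ at hand $l(m)S(l,m)$ may lie far below $\size\Nn(m)$. This is what forces the nested, prefix-refining layout, in which a short full description automatically supplies all its coarse truncations within budget. Getting this layout precise---fixing the injective assignment of (precision, block)-pairs to query strings of bounded length, reserving disjoint strings for the discrete-metric oracle so that its short answers (bounded because a compact space has finite diameter) do not inflate $\flength\varphi$ above $\ell$, and keeping the block count and positions computable inside the time budget---is where the real work sits; the entropy estimates themselves drop out of properties (c) and (s) once the layout is fixed.
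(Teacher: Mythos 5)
Your construction, the verification that $\ell$ is a length, and the timing analysis coincide with the paper's proof: there too a name splits the index of a $\frac1{n+1}$-approximating element into $S(\flength\varphi,\length n)+1$ blocks of length below $\flength\varphi(\length n)$, carries an oracle for the discrete metric and a clause making its own length readable, and the metric is computed by fetching the blocks (using time-constructibility of $S$) and then querying the metric oracle once. Up to the lower bound you are reproducing the paper's argument, except that the paper stores a \emph{fresh} index at every precision level --- exactly the ``flat'' layout you reject --- and disposes of the lower bound in one sentence by invoking the spanning property.

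The lower bound is where your proposal has a genuine gap. Your diagnosis of the flat layout is correct: one must check that the points $r_0,\dots,r_{2^k-1}$ with $k=l(n)S(l,n)$ possess names of length $\le l$, and at a coarse precision $2^{-j}$ the budget $l(j)S(l,j)$ may fall below $\size{\Nn}(j)+\lceil\lb(j+1)\rceil$, so neither the exact index nor a covering index need fit. But the nested, prefix-refining layout does not escape this obstruction, and the ``interplay of budgets'' you defer as bookkeeping is precisely where the argument breaks. In \emph{any} layout, two names that agree on all precision-level-$j$ data determine points within $2^{-j+2}$ of one another, so the elements with names of length $\le l$ meet at most $2^{O(l(j)S(l,j))}$ sets of diameter $2^{-j+1}$. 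On the other hand, the spanning property applied at scale $j$ (not only at scale $n$) already places about $2^{\min\{k,\size{\Nn}(j-1)\}}$ points of pairwise distance $>2^{-j}$ inside your candidate set $\{r_0,\dots,r_{2^k-1}\}$. Whenever $l(j)S(l,j)$ is much smaller than $\min\{k,\size{\Nn}(j-1)\}$ for some $j<n$ --- which the hypotheses $l$ strictly monotone and $l(n)S(l,n)\le\size{\Nn}(n-1)$ do not exclude, e.g.\ for rapidly growing $\size{\Nn}$ and slowly growing $l$ --- a pigeonhole argument forces two of these witnesses to share their level-$j$ data while being more than $2^{-j+2}$ apart, so not all of them can lie in $\xi(K_l)$, no matter how the $\le k$ symbols are distributed or nested across precision levels. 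To close the argument you would need either an additional hypothesis tying $l(j)S(l,j)$ to $\size{\Nn}(j)$ for all $j<n$ (as \eqref{eq:condition time} provides for $\ell$, in which case the paper's flat layout already suffices and no nesting is needed), or a witness set other than the one the spanning property hands you. As written, the lower-bound part of your proof does not go through.
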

		
		\begin{proof}
			Let $(r_i)_{i\in\NN}$ be a uniformly dense sequence like the one constructed in \Cref{resu:uniformly dense}.
			Define $\xi$ as follows:
			A string function $\varphi$ is a name of an element $x\in \Nn$ if and only if the following conditions hold:
			\begin{itemize}
				\item[(l)] $\varphi$ provides its \underline{l}ength: For all $n\in\NN$ it holds that $\length{\varphi(\sdzero^n)} = \flength{\varphi}(n)$.
				\item[(a)] $\varphi$ provides indices of \underline{a}pproximations:
				For all $n\in\NN$ the strings $\varphi(\sdzero\langle j,n\rangle)$ where $j$ reaches from $0$ to $S(\flength\varphi,\length{n})$ have length less than $\flength\varphi(\length{n})$ and their concatenation is a non-negative integer $i$ with $d(x,r_i) \leq \frac1{n+1}$.
				\item[(o)] $\varphi$ provides an \underline{o}racle for the metric: For any $i,j,n\in\NN$ it holds that $\varphi(\sdone\langle i,j,n\rangle)\in\ZZ$ and
				\[ \abs{d(r_i,r_j)-\frac{\varphi(\sdone \langle i,j,n\rangle)}{n+1}}\leq \frac1{n+1}. \]
			\end{itemize}

			This defines a second-order representation:
			For any two distinct elements $x,y\in \Nn$ there is an $n$ such that $2^{-n}<d(x,y)$.
			Thus, if $r_k$ resp.\ $r_m$ are $2^{-n-1}$-approximations of $x$ resp.\ $y$, then $k\neq m$.
			Now assume that $\varphi$ and $\psi$ fulfill the conditions to be names of $x$ resp.\ $y$.
			Then the indices from (a) must differ.
			Thus, $\xi$ is single-valued.
			The choice of the sequence $(r_i)_{i\in\NN}$ and \eqref{eq:condition time} make sure that condition (a) can be fulfilled by a function of length $\ell$, thereby leaving enough freedom in the choice of the function to make it also fulfill (m).
			Thus, $\xi$ has length $\ell$ and is in particular surjective.

			It is left to provide an appropriate algorithm for computing the metric.
			When given a name $\langle \varphi,\psi\rangle$ of some element $(x,y)\in (\Nn,\xi)\times(\Nn,\xi)$ as oracle and a precision requirement $n$ this algorithm proceeds as follows:
			From the time-constructibility of $S$ it follows that $N:=S(\flength{\langle\varphi,\psi\rangle},\length{n}+2)$
			can be computed in time $\bigo(T)$.
			Next the machine queries the oracle $N$ times for $\sdzero\langle j,4n+3\rangle$, with the values of $j$ going from $0$ to $N$ to obtain indices $i$ and $k$ of $\frac1{4n+3}$-approximations to $x$ and $y$.
			This takes time less than $\bigo(N \cdot \flength{\langle\varphi,\psi\rangle}(\length{n}+2))$.
			Finally, the machine queries the oracle input $\sdone\langle i,k,2n+1\rangle$ for an approximation of $d(r_i,r_k)$.
			Using the triangle inequality one verifies that the result leads to a valid dyadic approximation to $d(x,y)$ being written on the output tape.

			The lower bound on the metric entropy follows from the spanning property of the uniformly dense sequence (see \cref{def:uniformly dense}).
			The admissibility follows since it is possible to translate back and forth between the representation $\xi$ and the relativized Cauchy representation in time $\bigo(S)$.
			For the completeness of the representation use that compactness of a metric space implies completeness and that the time constructibility of $S$ implies continuity.
			The latter can be used to prove that the images of a converting sequence of names forms a Cauchy sequence and to repeat the argument from the proof of \Cref{resu:completenesses} that the limit of a convergent sequence of names is a name of the limit.
		\end{proof}

		The representation depends heavily on the choice of the uniformly dense sequence.
		Note that there is no computability condition on the compact metric space.
		In particular there is no guarantee that the representation has any computable names at all.

	\subsection{Separable Banach spaces}

		The goal of this chapter is to obtain a similar result as in the previous section for the case where the metric space is not compact but instead a Banach space.
		The construction is inspired by approximation theory:
		The representation is chosen in such a way that the images of the sets  $K_l$ are similar to full approximation sets as introduced in \cite{lorentz1966} (cf. also \Cref{sec:full approximation sets}).

		\begin{definition}\label{def:fundamental system}
			Let $\XX=(X,\norm\cdot)$ be a Banach space.
			A sequence $(e_i)$ in $\XX$ is called a \demph{fundamental system} if $\norm{e_i}=1$ for all $i$ and the linear span of the $e_i$ is dense in $\XX$.
			That is: For each $x\in\XX$ and $n\in\NN$ there exists an $N\in\NN$ and some real numbers $\lambda_1,\ldots,\lambda_N$ such that
			\[ \norm{\sum_{i=1}^N\lambda_i e_i - x} \leq 2^{n}. \]
		\end{definition}
		A fundamental system should be regarded as the Banach space equivalent of a dense sequence in a compact metric space.
		Any separable Banach space has a fundamental system as a dense sequence can be normalized.

		The following notion can be understood as a stronger version of the equivalent of being a uniformly dense sequence in a compact metric space:
		\begin{definition}\label{def:schauder basis}
			Let $\XX$ be a Banach space.
			A fundamental system $(e_i)_{i\in\NN}$ is called a \demph{Schauder basis}, if for each $x\in \XX$ there exists exactly one sequence $(\lambda_i)_{i\in\NN}$ of real numbers such that
			\[ x = \lim_{n\to \infty} \sum_{i=0}^n \lambda_i e_i. \]
		\end{definition}
		However, in contrast to the situation for compact metric spaces there is no way to construct a Schauder basis in an arbitrary separable Banach space.
		Quite to the contrary of \Cref{resu:uniformly dense}, that constructs a uniformly dense sequence in any compact metric space, a famous example by Enflo construct a separable Banach space that does not have any Schauder basis \cite{MR0402468}.

		Recall that the norm of a linear continuous functional $f$ on a Banach space $\XX$ is defined by
		\[ \norm f := \sup_{\norm x\leq 1}\{\abs{f(x)}\}. \]
		We need the following well-known result for Schauder bases that follows directly from the Uniform Boundedness Principle:
		\begin{theorem}\label{resu:uniform boundedness principle}
			Let $(e_i)_{i\in\NN}$ be a Schauder basis in a Banach space $\XX$.
			Then there are continuous linear functionals $(f_i)_{i\in\NN}$ such that for all $x$ in $\XX$ it holds that
			\[ x = \lim_{n\to \infty} \sum_{i=0}^n f_i(x) e_i. \]
			Furthermore, there is a constant $C$ such that
			\[ \forall i\in\NN:\norm{f_i} \leq C. \]
		\end{theorem}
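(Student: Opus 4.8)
The plan is to produce the $f_i$ as the coordinate functionals of the Schauder basis and to extract the uniform bound $C$ from the uniform boundedness of the partial-sum projections. By the defining property of a Schauder basis, every $x\in\XX$ has a unique expansion $x=\sum_{i=0}^\infty \lambda_i(x)e_i$; this makes the assignments $f_i(x):=\lambda_i(x)$ and $P_n(x):=\sum_{i=0}^n \lambda_i(x)e_i$ well defined, and uniqueness of the expansion forces each of them to be linear. Once I know that $K:=\sup_n\norm{P_n}<\infty$, everything else is immediate: setting $P_{-1}:=0$ gives $f_i(x)e_i=(P_i-P_{i-1})(x)$, so from $\norm{e_i}=1$ I get $\abs{f_i(x)}=\norm{(P_i-P_{i-1})(x)}\leq 2K\norm x$, whence each $f_i$ is continuous with $\norm{f_i}\leq 2K=:C$, and the convergence $x=\lim_n P_n x$ is exactly the basis expansion. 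So the entire content is the finiteness of the basis constant $K$.

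To obtain $K<\infty$ I would introduce the auxiliary norm $\norm x':=\sup_n\norm{P_n x}$. It is finite for every $x$ because the sequence $(P_n x)_n$ converges (to $x$) and convergent sequences are bounded; it is readily checked to be a norm, and since $P_n x\to x$, continuity of $\norm\cdot$ gives $\norm x=\lim_n\norm{P_n x}\leq\sup_n\norm{P_n x}=\norm x'$. The \textbf{main obstacle} is to show that $(\XX,\norm\cdot')$ is itself a Banach space. For this I would take a $\norm\cdot'$-Cauchy sequence $(x^{(k)})_k$, note that it is in particular $\norm\cdot$-Cauchy (because $\norm\cdot\leq\norm\cdot'$) with some $\norm\cdot$-limit $x$, and then use that the estimate $\norm{P_n(x^{(k)}-x^{(j)})}\leq\norm{x^{(k)}-x^{(j)}}'$ is uniform in $n$ to produce $\norm\cdot$-limits $y_n:=\lim_k P_n x^{(k)}$ that are controlled uniformly in $n$. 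The remaining step is to verify that the $y_n$ are precisely the partial sums of a basis expansion of $x$, so that $x\in\XX$ with $P_n x=y_n$ and $\norm{x^{(k)}-x}'=\sup_n\norm{P_n x^{(k)}-y_n}\to 0$.

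With completeness in hand, the identity map $(\XX,\norm\cdot')\to(\XX,\norm\cdot)$ is a continuous linear bijection between Banach spaces, continuity being the inequality $\norm x\leq\norm x'$. By the Bounded Inverse Theorem its inverse is bounded, so there is a constant $M$ with $\norm x'\leq M\norm x$ for all $x$; in particular $\norm{P_n x}\leq\norm x'\leq M\norm x$, that is $\norm{P_n}\leq M$ for every $n$, giving $K\leq M$. (Alternatively, once each $P_n$ is known to be bounded, one finishes via the Uniform Boundedness Principle: the pointwise bound $\sup_n\norm{P_n x}=\norm x'<\infty$ forces $\sup_n\norm{P_n}<\infty$.) Feeding $K\leq M$ into the first paragraph completes the proof with $C=2M$.
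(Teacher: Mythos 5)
Your proof is correct. The paper does not actually prove this statement: it is quoted as a well-known fact that ``follows directly from the Uniform Boundedness Principle'', so there is no in-paper argument to compare yours against line by line. What you have written is the complete standard proof, and it is in fact more careful than the paper's attribution: the Uniform Boundedness Principle alone does not suffice, because one must first know that each partial-sum projection $P_n$ is individually bounded, and that is exactly where the real work lies. Your route --- completeness of the auxiliary norm $\norm{x}'=\sup_n\norm{P_nx}$ followed by the Bounded Inverse Theorem --- delivers the uniform bound $\sup_n\norm{P_n}<\infty$ in one stroke, and your parenthetical correctly isolates the only point at which the Uniform Boundedness Principle could be invoked instead. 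The one step you leave as a sketch, namely that the limits $y_n=\lim_k P_nx^{(k)}$ are the partial sums of an expansion of the $\norm{\cdot}$-limit $x$, does go through: $y_n-y_{n-1}=\lim_k f_n(x^{(k)})e_n$ lies in the closed one-dimensional span of $e_n$, so $y_n=\sum_{i=0}^{n}\mu_ie_i$ for some scalars $\mu_i$, and an $\varepsilon/3$-argument using the uniformity in $n$ of the convergence $P_nx^{(k)}\to y_n$ gives $y_n\to x$, whence $\mu_i=f_i(x)$ by uniqueness of the expansion and $\norm{x^{(k)}-x}'\to 0$. Your final bound $\norm{f_i}\leq 2M$ uses the normalization $\norm{e_i}=1$, which the paper builds into its definition of a fundamental system, so that step is legitimate as well.
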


		The consequence we need is that there is a common lower bound on the pairwise distance of the elements of a Schauder basis.
		\begin{corollary}\label{resu:riezs property of schauder bases}
			Let $(e_i)_{i\in\NN}$ be a Schauder basis in a Banach space.
			Then there exists a constant $\alpha>0$ such that for all non-negative integers $i\neq j$
			\[ \norm{e_i-e_j}> \alpha. \]
		\end{corollary}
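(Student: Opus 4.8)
The plan is to exploit the coordinate functionals $(f_i)_{i\in\NN}$ furnished by \Cref{resu:uniform boundedness principle} together with their uniform bound $C$. The crucial observation is that these functionals are \emph{biorthogonal} to the basis, i.e.\ that $f_i(e_j)$ equals $1$ when $i=j$ and $0$ otherwise. First I would establish this biorthogonality. Applying the expansion from \Cref{resu:uniform boundedness principle} to the vector $e_j$ gives $e_j=\lim_{n\to\infty}\sum_{k=0}^{n}f_k(e_j)e_k$, while trivially $e_j=\lim_{n\to\infty}\sum_{k=0}^{n}\delta_{jk}e_k$. Since a Schauder basis admits \emph{exactly one} such representing sequence of coefficients (\Cref{def:schauder basis}), the two coefficient sequences must agree, forcing $f_k(e_j)=\delta_{jk}$ for all $k$.

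With biorthogonality in hand, the remainder is a one-line application of the functional bound. For any pair of indices $i\neq j$ I would compute $f_i(e_i-e_j)=f_i(e_i)-f_i(e_j)=1-0=1$, and then estimate
\[ 1=\abs{f_i(e_i-e_j)}\leq \norm{f_i}\cdot\norm{e_i-e_j}\leq C\,\norm{e_i-e_j}, \]
using the uniform bound $\norm{f_i}\leq C$. This yields $\norm{e_i-e_j}\geq 1/C$. Note that $C>0$ is automatic: evaluating $1=f_i(e_i)\leq\norm{f_i}\cdot\norm{e_i}=\norm{f_i}$ shows each $\norm{f_i}\geq 1$, so in particular $C\geq 1$. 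To obtain the \emph{strict} inequality demanded by the statement it then suffices to set $\alpha:=\tfrac{1}{2C}$, whereupon $\norm{e_i-e_j}\geq 1/C>\alpha$ for all $i\neq j$, as required.

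The only genuinely substantive step — though it is a mild one — is the biorthogonality claim, since it is the single place where the defining \emph{uniqueness} property of a Schauder basis is used rather than merely the density of the span of a fundamental system. Everything after that is elementary, relying solely on the continuity and uniform norm bound of the coordinate functionals supplied by \Cref{resu:uniform boundedness principle}.
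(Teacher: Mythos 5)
Your proposal is correct and follows essentially the same route as the paper: both proofs use the coordinate functionals from \Cref{resu:uniform boundedness principle}, the fact that $f_i(e_i-e_j)=1$, and the uniform bound $\norm{f_i}\leq C$ to conclude $\norm{e_i-e_j}\geq 1/C$, then shrink the constant slightly (the paper takes $\alpha=\tfrac1{C+1}$, you take $\tfrac1{2C}$) to get strictness. Your explicit verification of biorthogonality via the uniqueness clause of \Cref{def:schauder basis} is a detail the paper leaves implicit, but it is the same argument.
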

		\begin{proof}
			Let $f_i$ be the family of continuous coordinate functionals from the previous theorem and $C$ the constant bounding their norms.
			Set $\alpha:= \frac 1{C+1}$.
			To see that $\alpha$ is as required note that since $(e_i)$ is a Schauder basis $e_i-e_j\neq 0$ for $i\neq j$.
			Therefore the vector can be normalized and it holds that
			\[ C+1 > \norm {f_i} \geq \abs{f_i\left(\frac{e_i-e_j}{\norm{e_i-e_j}}\right)} = \frac1{\norm{e_i-e_j}}. \]
			The assertion follows by taking reciprocals on both sides.
		\end{proof}

		We are now prepared to prove the main theorem of this section.
		\begin{theorem}[Representing Banach spaces]\label{resu:representing banach spaces}
			Let $\XX$ be an infinite dimensional separable Banach space and let $S:\omega^\omega\times \omega \to\omega$ be monotone and time-constructible such that for all $l\in\omega^\omega$ there exists an $l'\in\omega^\omega$ such that $S(l',\cdot)\geq l$.
			Then there exists an admissible, regular, complete representation $\xi$ of $\XX$ such that addition is computable in time $\bigo(l(n+1))$, the scalar multiplication in time $\bigo(l(n))$ and the norm in time $\bigo(T)$ for
			\[ T(l,n):=l(n+\lceil\lb(S(l,n+1)+1)+1\rceil)\cdot S(l,n+1) \]
			If the space $\XX$ allows for a Schauder basis, then $\xi$ can be chosen such that additionally there is a constant $C$ such that for all strictly increasing $l:\omega\to\omega$ and $n\in\omega$
			\[ S(l,n) \leq \size{\xi(K_l)}(n+C). \]
		\end{theorem}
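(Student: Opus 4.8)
The plan is to mirror the construction from the compact case (\Cref{resu:representing compact spaces}), adapting the roles of "covering" and "spanning" to the Banach setting, where density is furnished by a fundamental system rather than by compactness. First I would fix a fundamental system $(e_i)$ of $\XX$ (which exists by separability) and, when a Schauder basis is available, take $(e_i)$ to be that basis so that \Cref{resu:riezs property of schauder bases} applies. I would then design $\xi$ so that a name $\varphi$ of a vector $x$ encodes three kinds of data, organized exactly as in the proof of \Cref{resu:representing compact spaces}: a length-declaration block (so that the representation is second-order, hence regular), a block of coefficients $\lambda_1,\dots,\lambda_N$ of a finite linear combination $\sum_i \lambda_i e_i$ that $2^{-n}$-approximates $x$ (with the number $N$ of terms governed by $S(\flength\varphi,\length n)$, so that the approximation-block fits within length controlled by $\ell$), and an oracle block providing dyadic approximations to the values $\norm{\sum_i \lambda_i e_i}$ needed to evaluate the norm. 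The clause $S(l',\cdot)\geq l$ is what guarantees that for every target length $l$ there are enough coefficient-slots available to realize the required approximation depth.

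Next I would verify the three computability claims. Addition and scalar multiplication act coefficientwise on the approximation blocks: to produce a $2^{-n}$-approximation of $x+y$ one reads $2^{-(n+1)}$-approximations of $x$ and of $y$ and adds the coefficient vectors, which explains the offset $n+1$ in the bound $\bigo(l(n+1))$; scalar multiplication rescales a single approximation, giving $\bigo(l(n))$. For the norm I would follow the metric-computation argument of \Cref{resu:representing compact spaces} almost verbatim: using time-constructibility of $S$, compute $N := S(\flength\varphi, n+1)$ in time $\bigo(T)$, query the $N$ coefficient-slots of a sufficiently fine approximation, and then query the oracle block for the norm of the resulting finite combination. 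The precision bookkeeping produces the stated bound $T(l,n)=l(n+\lceil\lb(S(l,n+1)+1)+1\rceil)\cdot S(l,n+1)$, where the logarithmic shift in the first argument accounts for reading the $S(l,n+1)+1$ coefficient-indices at full length $\flength\varphi$ evaluated at $\length n$ plus the address-overhead of the tuple function.

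For admissibility, regularity, and completeness I would reuse the machinery already developed: regularity is immediate since the length-declaration clause forces $\dom(\xi)\subseteq\Reg$, so $\xi$ is a second-order representation and hence regular by \Cref{resu:regularity vs. second-order}. Admissibility follows by writing linear-time-bounded translations between $\xi$ and the relativized Cauchy representation $\xi^r_{\XX}$ (the coefficient data reconstruct Cauchy approximations and vice versa), and completeness follows as in \Cref{resu:relativized completenesses} from closedness of the defining conditions, using continuity of $S$ (a consequence of time-constructibility) to show that the limit of a convergent sequence of names is again a name. The genuinely new part—and the main obstacle—is the \emph{lower} bound $S(l,n)\leq\size{\xi(K_l)}(n+C)$, which is only asserted under the Schauder-basis hypothesis. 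Here I would exploit \Cref{resu:riezs property of schauder bases}: the constant $\alpha>0$ bounding pairwise distances of basis vectors from below lets me manufacture, inside $\xi(K_l)$, a large family of vectors whose pairwise norm-distances exceed $2^{-n+1}$, turning the available coefficient-freedom (of size governed by $S(l,n)$) into a spanning bound in the sense of \Cref{def:spanning bound}, whereupon \Cref{resu:spanning bounds} delivers the metric-entropy lower bound up to the additive shift $C$. The delicate point I expect to fight with is ensuring that these well-separated combinations actually lie in $K_l$ (i.e. that their coefficient encodings respect the length budget $l$) while simultaneously remaining $\alpha$-separated in norm after the scaling that brings them to the correct radius $2^{-n}$; reconciling the Riesz-type separation constant with the length budget is where the constant $C$ is consumed and where the argument is least routine.
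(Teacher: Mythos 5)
Your proposal follows essentially the same route as the paper: the same three-block name format (length declaration, $S$-governed coefficient block, oracle for norms of finite rational combinations), the same algorithms and time bookkeeping, admissibility via translation to the relativized Cauchy representation, completeness via continuity of $S$, and the lower bound obtained by scaling Schauder basis vectors and invoking \Cref{resu:riezs property of schauder bases} together with \Cref{resu:spanning bounds}, with $C=-\lceil\lb(\alpha)\rceil$ absorbing the separation constant. The only small inaccuracy is the claim that the length-declaration clause forces $\dom(\xi)\subseteq\Reg$ — it only makes the length function accessible, so $\xi$ is regular rather than literally second-order, which is all the theorem asserts anyway.
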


		\begin{proof}
			Since the space $\XX$ is separable, there exists a fundamental system $(e_i)_{i\in\NN}$.
			If the space $\XX$ allows a Schauder basis, choose $(e_i)_{i\in\NN}$ as a Schauder basis. 
			Define $\xi$ as follows:
			A string function $\varphi$ is a name of $x\in \XX$ if and only if all of the following conditions hold:
			\begin{itemize}
				\item[(l)] $\varphi$ provides its \underline{l}ength: For all $n\in\NN$ it holds that $\length{\varphi(\sdzero^n)} = \flength{\varphi}(n)$.
			 	\item[(a)] $\varphi$ encodes linear combinations that \underline{a}pproximate $x$:
			 	For all $n\in\NN$ there exists a linear combination of the first $S(\flength\varphi,\length n)$ vectors $e_i$ that approximates $x$ with precision $\frac1{n+1}$ and whenever $m\in\NN$ is bigger than $(S(\flength\varphi,\length n +1)+1)(n+1)$ it holds that
				\[ \norm{\sum\nolimits_{i=0}^{S(\flength\varphi,\length n)}\frac{\varphi(\sdzero \langle i,n,m\rangle)}{m+1} e_i - x}\leq \frac2{n+1}. \]
				\item[(o)] $\varphi$ provides an \underline{o}racle for the norm:
				I.e. for all $n,m,N\in\NN$ and integers $z_0,\ldots,z_N\in\ZZ$
				\[ \abs{\norm{\sum\nolimits_{i=0}^N\frac{z_i}{m+1} e_i}-\frac{\varphi(\sdone\langle\langle z_0,\ldots,z_N\rangle,N,n,m\rangle)}{n+1}}\leq \frac1{n+1}. \]
			\end{itemize}
			
			This defines a representation due to the assumptions:
			To see that each element has a name note that due to the assumptions about the pairing function the requirements (l), (a) and (o) do not interfere with one another:
			The growth condition on $S$ and $e_i$ being a fundamental system guarantee that it is always possible to find a string function long enough to fulfill (a).
			This function can always be modified to a longer one fulfilling (o).
			Since $S$ is monotone, the first condition will still be fulfilled by this function.
			Afterwards its value on $\sdzero^n$ can be changed to have the length as length.
			To see that a name uniquely determines an element note that for a given precision requirement $2^{n}-1$ it suffices to choose $m$ in (a) bigger than $(S(\flength\varphi,\length n +1)+1)(n+1)$ to guarantee that there exist rational coefficients.
			These coefficients determine $x$ up to precision $2^{-n}$ and this works for any $n$.
			Therefore $x$ is uniquely determined from any of its names and $\xi$ is a representation.

			The straight-forward algorithms for the vector space operations are easily seen to run in the specified times.
			The norm can be computed as follows:
			Given a name $\varphi$ of some $x\in\XX$ as oracle and a precision requirement $n$, first compute $S(\flength\varphi,\length n+1)$.
			Since $S$ is time-constructible this is possible in time $\bigo(T)$.
			Next, query the given name $\varphi$ of some $x\in\XX$ to obtain integers $z_0,\ldots,z_{S(l,n+1)}$ such that
			\[ \norm{\sum\nolimits_{i=0}^{S(l,\length n)} \frac{z_i}{(S(\flength\varphi,\length n+1)+1)(n+1)}e_i-x}\leq \frac1{n+1}. \]
			To write each query, time linear in $\length n+\lceil\lb(S(\flength\varphi,\length n+1)+1)\rceil$ is needed.
			Since the length of each of the $S(\flength\varphi,\length n+1)$ answers is bounded by $\flength\varphi(\length n+\lceil\lb(S(\flength\varphi,\length n+1)+1)\rceil)$, the final query to the oracle for the metric can be asked in time $\bigo(T)$.
			This query obtains an approximation to the norm of this linear combination with precision $\frac 1{n+1}$.
			Writing the query can also be done in time $\bigo(T)$ and results in a valid return value for the norm.

			Let $(q_i)$ be a standard enumeration of the dyadic numbers.
			It is possible to translate back and forth between this representation and the relativized Cauchy representation with respect to the dense sequence
			\[ r_{\langle N,\langle i_1,\ldots,i_N\rangle\rangle} := \sum_{i=1}^N q_{j_i} e_i. \]
			The time needed to do this depends on $S$ but is bounded since $S$ is time-constructible.
			Therefore $\xi$ is admissible.
			To prove its completeness, use that the time-constructibility of $S$ implies continuity to prove that the images of a convergent sequence of names forms a Cauchy sequence.
			Since the space is complete, this sequence converges.
			Use the continuity of $S$ again to prove that the limit of the names is a name of the limit.

			To prove the lower bound on the size fix some $n$ and note that for all $i\in\NN$ with $\length i\leq S(l,\sdone^n)$ the elements $2^{-n}e_i$ have names of length $l$.
			Here the strict monotonicity guarantees that (o) can be fulfilled.
			By \Cref{resu:riezs property of schauder bases} there exists a constant $\alpha>0$ such that these elements are of pairwise distance more than $2^{-n}\alpha$.
			Set $C:= -\lceil\lb(\alpha)\rceil$, then the elements are of pairwise distance more than $2^{-n-C}$.
			Thus,
			\[ n\mapsto \begin{cases} 0 & \text{if }n<C \\ S(l,n-C) &\text{otherwise}\end{cases} \]
			is a spanning bound of $\xi(K_l)$ and therefore an upper bound of the metric entropy by \Cref{resu:spanning bounds}.
		\end{proof}

	\subsection{Computability issues}

		The last sections completely neglected considering computability issues of the metric by always providing its values via an oracle in each name.
		However, in practice computable metric spaces are very important.
		Under the assumption that the sequence of a compact computable metric space is uniformly dense it is straight forward to see that representation defined in the proof of \Cref{resu:representing compact spaces} is computably complete.

		\begin{corollary}
			Additionally to the assumptions of \Cref{resu:representing compact spaces} assume that $(r_i)_{i\in\NN}$ is a uniformly dense sequence such that $\M:=(\Mn,d,(r_i))$ is a computable metric space.
			Then there exists a representation $\xi$ as in the theorem  and which is additionally computably complete.
		\end{corollary}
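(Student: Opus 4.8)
The plan is to show that the domain of the representation $\xi$ built in the proof of \Cref{resu:representing compact spaces} is co-r.e.\ closed, i.e.\ to describe an oracle machine $M^?$ whose run $M^\varphi(\varepsilon)$ terminates exactly for those $\varphi$ that violate at least one of the defining conditions (l), (a), (o). Since non-membership in $\dom(\xi)$ is the disjunction of the negations of these three conditions, it suffices to give, for each condition, a semi-decision procedure for its failure and then dovetail the three procedures: the combined machine halts iff some condition fails, while on a genuine name none of the three halts. This exhibits $\dom(\xi)$ as co-r.e.\ closed and hence $\xi$ as computably complete.

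Conditions (l) and (o) are the easy ones. For (l), I would search over all $n$ and all strings $\str a$ with $\length{\str a}\leq n$ for a witness $\length{\varphi(\str a)}>\length{\varphi(\sdzero^n)}$; such a witness exists iff (l) fails, and for each $n$ this is a finite enumeration. For (o), I would enumerate all triples $i,j,n$ and halt as soon as $\varphi(\sdone\langle i,j,n\rangle)$ fails to encode an integer, or the strict inequality $\abs{d(r_i,r_j)-\frac{\varphi(\sdone\langle i,j,n\rangle)}{n+1}}>\frac1{n+1}$ is confirmed. Here I use that $\M$ is a computable metric space, so $\tilde d(i,j)=d(r_i,r_j)$ is computable and a strict violation of a non-strict bound can be certified from a sufficiently good approximation in finite time.

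Condition (a) is the only delicate one, because it refers to the represented element $x$, which is not directly available from $\varphi$. The idea, exactly as in the third part of the proof of \Cref{resu:completenesses}, is to replace the reference to $x$ by a mutual-consistency condition on the approximating indices. From $\varphi$ one extracts, for each $n$, the index $i_n$ obtained by concatenating the blocks $\varphi(\sdzero\langle j,n\rangle)$ for $j=0,\dots,S(\flength\varphi,\length n)$; this uses $\flength\varphi$, accessible on names via the map $m\mapsto\length{\varphi(\sdzero^m)}$, together with $S(\flength\varphi,\length n)$, computable by time-constructibility of $S$. I would then claim that there is an $x$ satisfying (a) if and only if every $i_n$ is a well-formed non-negative integer meeting the length constraint and, for all $n,m$, $d(r_{i_n},r_{i_m})\leq\frac1{n+1}+\frac1{m+1}$. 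The forward implication is the triangle inequality applied to $d(x,r_{i_n})\leq\frac1{n+1}$; the converse uses that the $r_{i_n}$ then form a Cauchy sequence whose limit exists because a compact metric space is complete, and this limit satisfies the approximation bounds after passing to the limit, precisely the argument in \Cref{resu:completenesses}. Failure of the reformulated condition — an ill-formed $i_n$, a violated length bound, or a certified strict inequality $d(r_{i_n},r_{i_m})>\frac1{n+1}+\frac1{m+1}$ — is again semi-decidable using the computable $\tilde d$.

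The main obstacle is to ensure that the (a)-procedure never halts on a genuine name. Reading $\flength\varphi$ off as $m\mapsto\length{\varphi(\sdzero^m)}$ is only correct when (l) holds; but on a name (l) does hold, so the extracted $i_n$ are the intended ones and the consistency inequalities are non-strict, hence never certified as strict by the approximation search. If instead (l) fails, then $\varphi\notin\dom(\xi)$ and the (l)-procedure already supplies a halting witness, so dovetailing still yields the correct behaviour. Collecting the three procedures gives the machine $M^?$ that witnesses co-r.e.\ closedness of $\dom(\xi)$, and computable completeness of $\xi$ follows.
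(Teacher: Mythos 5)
Your proposal is correct and follows essentially the route the paper intends: the paper leaves this corollary as a ``straightforward'' adaptation of the proofs of \Cref{resu:completenesses} and \Cref{resu:relativized completenesses}, namely dovetailing a search for violations of the well-formedness conditions with a search for a certified strict violation of the Cauchy-consistency inequalities $d(r_{i_n},r_{i_m})\leq\frac1{n+1}+\frac1{m+1}$, using computability of the discrete metric and completeness of the compact space. Your additional care with condition (l) --- noting that the extraction of $\flength\varphi$ via $\sdzero^m$ is only guaranteed on names, but that a failure of (l) is independently semi-decided --- correctly closes the one point the paper glosses over.
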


		This result is unsatisfactory.
		It would be desirable to be able to construct the uniformly dense subsequence in a computable way from an dense sequence such that the corresponding discrete metric is computable.
		That is: It would be nice to have an effective version of \Cref{resu:uniformly dense}.
		However, the authors failed to produce such a result.
		As already mentioned at the beginning of the previous section the existence of a Schauder basis does not follow from the separability of a Banach space due to an example by Enflo.
		A version of this example provided by Bosserhoff proves that computability together with the existence of a Schauder basis does not guarantee the existence of a computable Schauder basis:
		\begin{theorem}[\cite{MR2534355}]
			There exists a computable Banach space that has a Schauder basis but does not have a computable Schauder basis.
		\end{theorem}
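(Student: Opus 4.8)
The plan is to exhibit the space as a computable $c_0$- or $\ell^1$-direct sum of finite-dimensional blocks whose \emph{internal geometry} secretly encodes a non-computable parameter. Write $\XX = \big(\bigoplus_{n\in\NN} F_n\big)_{c_0}$ with each $F_n$ finite dimensional and presented through a fixed computable frame, so that the norm of a rational combination of frame vectors is computable; this makes $\XX$ a computable Banach space. Since a concatenation of bases of the blocks $F_n$ is always a Schauder basis of such a direct sum, the classical half --- that $\XX$ \emph{has} a Schauder basis --- is automatic from the construction and requires no further argument.

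The work is in arranging that no \emph{computable} sequence can serve as a basis with effective coefficient functionals. First I would fix a non-computable c.e.\ set $A\subseteq\NN$ and, by a Specker-type staged construction, twist the geometry inside the blocks $F_n$ (for instance by choosing certain angles or gluing vectors whose limiting values depend on whether $n\in A$) while keeping every finite-stage perturbation summable, so that the completed norm stays computable. The twists are chosen so that recovering the true block-adapted basis directions in the limit is equivalent to deciding $A$.

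Next I would diagonalize against all potential computable bases. Enumerate the computable sequences $(b^{(e)}_i)_{i}$ over indices $e$; a genuine computable Schauder basis would, by \Cref{resu:uniform boundedness principle}, come with uniformly bounded coefficient functionals $f^{(e)}_i$ and effectively convergent partial sums $\sum_{i\le k} f^{(e)}_i(x)\,b^{(e)}_i \to x$. The requirement $R_e$ demands that candidate $e$ fail one of these effectivity conditions. I would meet $R_e$ by showing that effective expansions of the standard frame vectors of the blocks against $(b^{(e)}_i)$ would let one read off, in the limit, the encoded angles and hence compute $A$, contradicting the non-computability of $A$. Since this holds for every $e$, no computable Schauder basis exists.

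The main obstacle is the simultaneous satisfaction of all three demands: the blocks must carry enough non-computable information, robustly across all candidate enumerations, to defeat \emph{every} computable basis rather than just the one natural basis; yet the perturbations must remain norm-summable so that $\XX$ is legitimately a computable Banach space, and the block-wise finite dimensionality must be preserved so that a (non-computable) basis classically survives. Balancing the size and persistence of the geometric twists against the boundedness of the coefficient functionals guaranteed by \Cref{resu:uniform boundedness principle} --- so that computability is defeated while completeness and the basis property are retained --- is the delicate core of the construction, and is what distinguishes it from a one-line Specker diagonalization.
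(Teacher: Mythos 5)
The paper does not actually prove this statement --- it is imported from Bosserhoff's work as a black box --- so your proposal has to stand on its own, and it has a gap at its core. The claim that ``a concatenation of bases of the blocks $F_n$ is always a Schauder basis'' of $\big(\bigoplus_n F_n\big)_{c_0}$ is false: a partial-sum projection that stops in the middle of the $n$-th block has norm comparable to the basis constant of the chosen basis of $F_n$, so the concatenation is a Schauder basis only if these block basis constants are uniformly bounded. That every $c_0$- or $\ell^p$-sum of finite-dimensional spaces has \emph{some} Schauder basis is a genuine theorem of Johnson, Rosenthal and Zippin, and the basis it produces is not block-diagonal when the optimal basis constants of the $F_n$ are unbounded. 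This is not a repairable detail but the crux of the whole construction: if you keep the block basis constants uniformly under control --- which your picture of blocks twisted by ``angles'' implicitly does, since blocks of bounded dimension have uniformly bounded optimal basis constants by John's theorem --- then in each block one can \emph{computably} produce a basis with uniformly bounded basis constant (an approximate Auerbach basis, found by approximately maximizing a determinant over rational points near the unit sphere of the computable norm), and the concatenation of these is a computable Schauder basis of $\XX$. So under exactly the hypotheses that make your ``classical half'' automatic, the ``computable half'' necessarily fails. Bosserhoff's example --- which the paper explicitly describes as a version of Enflo's --- must instead use finite-dimensional blocks with \emph{unbounded} basis constants (Enflo/Szarek-type spaces), which makes the classical existence of a basis the nontrivial step rather than the free one, and makes the non-existence of a computable basis follow from the impossibility of computably locating and avoiding the bad blocks.

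The diagonalization step is also not effective as described. \Cref{resu:uniform boundedness principle} guarantees that a Schauder basis admits \emph{some} finite bound $C$ on its coefficient functionals, but gives no way to compute $C$ from a program for the candidate sequence; hence ``reading off the encoded angles in the limit and thereby computing $A$'' is not a legitimate Turing reduction without an additional argument that supplies an effective bound. The correct shape of the contradiction is non-uniform: a computable basis would have a finite (possibly non-computable) basis constant, and from the mere finiteness of that constant one derives a uniform bound on quantities that the construction forces to be unbounded along a sequence of blocks whose location cannot be computed.
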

		If a computable Schauder basis exists, however, the representation constructed as in \Cref{resu:representing banach spaces} is computably complete.

	\section{Conclusion}
		
		In \Cref{resu:from complexity to metric entropy} there is still some space for improvements.
		It might be possible to remove the square in the bound of the metric entropy by further reducing the size bound from \Cref{resu:oracle restriction} by choosing the communication function $L$ cleverer:
		Instead of saving initial segments of length of the value of the runtime, one could explicitly follow the computation and only save those entries that are touched by the reading head.
		However, we consider the payoff to be small in comparison to the additional technical complications encountered in the proof.

		The question whether it is possible to computably construct a uniformly dense sequence from some additional information on the compact metric space, like a bound on the metric entropy or a spanning bound was left open.
		This is unsatisfactory, it would be desirable to either have a computable version of \Cref{resu:uniformly dense} or a counterexample as can be constructed for Banach spaces.
		The authors attempted only the former, but it seems like the covering property of a uniformly dense sequence is out of grasp of any computable procedure as it is notoriously difficult to verify it by a machine.

		In the solution theory of partial differential equations the finite element methods provide a powerful tool.
		The formulation of these results within computable analysis is an active field of research \cite{MR2275411}.
		Part of what makes finite element methods so valuable is that they can not only be applied to prove existence of solutions but also work well for implementations and provide fast algorithms for approximating the solutions.
		We consider a formulation of these methods within  real complexity theory our long time goal and hope that the content of this paper is a first step towards achieving this.

		Completeness may or may not lead to the right notion of bounded time admissibility.
		Even if it does, this does not end the quest for a good notion of polynomial-time admissibility.

		Results of the kind presented in this paper often allow refinements that replace polynomial time bounds by logarithmic space bounds \cite{Kawamura:2016:CTC:2933575.2935311}.
		However, space-bounded computation in presence of oracles is a tricky field.
		Classes of sub-linear space restrictions have been defined for real complexity theory \cite{MR3259646}.
		The results of this paper require classes with arbitrary space bounds and a considerably more elaborate machinery than introduced in this paper.

		We believe that relaxing the condition of being a second-order representation to regularity is an important step.
		We furthermore believe that arbitrary representations should be studied.
		One reason for doing so is that one of the most popular attempts to implement the ideas of real complexity theory, namely \texttt{iRRAM} \cite{iRRAM}, features polynomial-time evaluation of functions, while it seems impossible to efficiently find a modulus of uniform continuity.
		The minimality result for the standard representation of continuous functions on the unit interval proves that such a behavior can not be modeled with second-order representations.
		A representation such that it takes exponential time to compute a modulus of uniform continuity, but evaluation is possible in polynomial time, however, is not difficult to write down.
		Note, that this can not simply be achieved by dropping the condition (l) in the proof of \Cref{resu:representing banach spaces} as this leads to the loss of polynomial-time evaluation.

		As mentioned before many of the results from this paper were produced in an attempt to unify two approaches to real complexity theory.
		It should be mentioned that there exist still different frameworks, for instance the model of analog computation has recently made huge advancements \cite{DBLP:journals/corr/BournezGP16c}.
		It might be a good idea to look for interconnections.
\appendix{

\section{Relations to other work and examples}\label{sec:appl}

	Let $\M=(\Mn,d,(r_i))$ be a complete computable metric space.
	Recall from \Cref{ex:a family of compact sets} that it was possible to write down the sets $\xi_{\M}(K_l)$ of elements that have a short name in the Cauchy representation explicitly:
	\[ \xi_{\M}(K_l) = \bigcap_{n\in\NN}\bigcup_{i=0}^{2^{l(n)}-1} B_{2^{-n}}(r_i). \]
	From this, for non-decreasing $l$, the bound $\size{\xi_{\M}(K_l)}\leq l$ followed.
	If $\Mn$ is compact and the sequence $(r_i)$ is assumed to be uniformly dense in the sense of \Cref{def:uniformly dense} (actually it suffices that it has the spanning property (s)), then it is possible to also specify a lower bound: $\size{\xi_{\M}(K_l)}(n) \geq l(n)-1$ whenever $l(n) \leq \size{Y}(n-1)$.

	If we step up to the relativized Cauchy representation from \Cref{def:the relativized cauchy representation}, the same relations remain valid as long as the metric space is bounded and $l$ is strictly increasing with $l(0)\geq \lceil\lb(\diam(\Mn))\rceil$.
	If these additional assumptions are not made, then depending on the choice of the sequence $(r_i)$ there may only exist long oracles for the discrete metric.

	It is possible to give a similar description for the sets $\xi(K_l)$, where $\xi$ is the representation from the proof of \Cref{resu:representing compact spaces}:

	\begin{lemma}\label{resu:short name sets in compact spaces}
		Let $(Y,d)$ be an infinite compact metric space and $\ell$ and $S$ fulfill the assumptions of \Cref{resu:representing compact spaces} and let $\xi$ be the representation from the proof.
		Whenever $l$ is strictly increasing such that $l(0)\geq \lceil \lb(\diam(Y))\rceil$, then
		\[ \xi(K_l) = \bigcap_{n\in\NN}\bigcup_{i=0}^{2^{l(n)S(l,n)}-1} B_{2^{-n}}(r_i). \]
	\end{lemma}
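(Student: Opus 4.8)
The plan is to prove both inclusions by reading off, respectively planting, approximation indices in condition~(a) of the representation $\xi$ from the proof of \Cref{resu:representing compact spaces}, exactly as in the analogous computation for the Cauchy representation in \Cref{ex:a family of compact sets}. The crucial preliminary observation is the reindexing between the radius $2^{-n}$ appearing in the intersection and the precision parameter used in~(a): querying a name at precision parameter $2^n-1$ requests a $\frac{1}{2^n}=2^{-n}$-approximation, and since $2^n-1$ is coded by the string $\sdzero^n$ of length $n$, the length budget available at this query is governed by $\flength\varphi(n)$ and by $S(\flength\varphi,n)$. This is precisely what turns the factor $l(n)$ of \Cref{ex:a family of compact sets} into the factor $l(n)S(l,n)$ here, the block structure of~(a) replacing the single index of the Cauchy representation.

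For the inclusion $\subseteq$, let $x\in\xi(K_l)$ and fix a name $\varphi$ of $x$ with $\flength\varphi\leq l$. Applying~(a) at precision parameter $2^n-1$ produces an index $i$ with $d(x,r_i)\leq 2^{-n}$ that is the concatenation of the $S(\flength\varphi,n)+1$ strings $\varphi(\sdzero\langle j,2^n-1\rangle)$, each of length strictly below $\flength\varphi(n)$. Reading these constraints together with $\flength\varphi\leq l$ and the monotonicity of $S$ bounds the number of binary digits of $i$ by $l(n)S(l,n)$, whence $i<2^{l(n)S(l,n)}$ and $x\in B_{2^{-n}}(r_i)$. As $n$ was arbitrary, $x$ lies in the right-hand side.

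For the inclusion $\supseteq$, let $x$ belong to the right-hand side, so that for each $n$ there is an index $i_n<2^{l(n)S(l,n)}$ with $d(x,r_{i_n})\leq 2^{-n}$. I would assemble a name $\varphi$ by distributing the $l(n)S(l,n)$ binary digits of $i_n$ across the $S(l,n)+1$ blocks of length at most $l(n)$ placed at the inputs $\sdzero\langle j,2^n-1\rangle$, thereby satisfying~(a); by filling the inputs $\sdone\langle i,j,p\rangle$ with dyadic $\frac1{p+1}$-approximations of $d(r_i,r_j)$ to satisfy~(o); and finally by padding the values $\varphi(\sdzero^n)$ so that~(l) holds. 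For precision parameters $p$ not of the form $2^n-1$ one simply reuses the index $i_n$ for the least $n$ with $2^{-n}\leq\frac1{p+1}$; since $\length p\geq n$ for such $p$, the available budget only grows and the index still fits. What remains is to verify that the resulting $\varphi$ satisfies $\flength\varphi\leq l$, i.e.\ that no planted value is too long for the length of its input.

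This length verification is the main obstacle, and it is exactly where the two hypotheses on $l$ enter. The spreading built into the pairing function $\langle\cdot\rangle$ ensures that an input of small length $N$ only carries blocks belonging to small precision parameters and small indices $i,j$, so that the approximation blocks of~(a) respect the bound $l(N)$ with room to spare; the strict monotonicity of $l$ supplies the necessary slack. For the oracle part~(o), a $\frac1{p+1}$-approximation of $d(r_i,r_j)\leq\diam(Y)$ is a dyadic $\frac{z}{p+1}$ with $\length z\in\bigo(\lb(p)+\lb(\diam(Y)))$; because the input $\sdone\langle i,j,p\rangle$ has length at least of the order of $\lb(p)$ and $l(0)\geq\lceil\lb(\diam(Y))\rceil$, the value $z$ fits under $l$ evaluated at the input length. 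Once $\flength\varphi\leq l$ is established, $\varphi\in K_l$ is a name of $x$, which closes the second inclusion and hence the proof.
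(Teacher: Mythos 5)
Your argument is essentially correct and is exactly the route the paper itself takes for the parallel Banach-space statement (\Cref{resu:short name sets in banach spaces}): one inclusion by reading the index off condition (a) at precision parameter $2^n-1$, the other by planting indices, filling (o) with $\frac1{p+1}$-approximations of the discrete metric (which is where $l(0)\geq\lceil\lb(\diam(Y))\rceil$ and the strict monotonicity of $l$ are consumed), and padding $\varphi(\sdzero^n)$ to force $\flength\varphi=l$; the paper states the present lemma without any proof, so this is the only available comparison. The one step you assert rather than verify is the digit count: condition (a) literally provides $S(l,n)+1$ blocks each of length at most $l(n)-1$, i.e.\ total capacity $(S(l,n)+1)(l(n)-1)$, which coincides with the exponent $l(n)S(l,n)$ only up to lower-order terms (and the discrepancy points in opposite directions in the two inclusions depending on whether $l(n)\lessgtr S(l,n)+1$) — this is a bookkeeping mismatch inherited from the paper's own formulation of (a) rather than a flaw in your approach, but a careful write-up should either fix the reading of (a) so the capacity is exactly $l(n)S(l,n)$ or state the equality with correspondingly adjusted exponents.
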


	An upper bound on the metric entropy of $\xi(K_l)$ can be extracted from this like in \Cref{ex:a family of compact sets}: $\size{\xi(K_l)}(n)\leq \mu(n)+\lceil\lb(n+1)\rceil$.
	On the other hand, the spanning property (s) of a uniformly dense sequence (cf. \Cref{def:uniformly dense}) is equivalent being able to specify a lower bound on the set $\xi(K_l)$ whenever $l(n)\cdot S(l,n)\leq \size Y(n-1)$.
	Here, the additional assumption about $l$ is reasonable as it is necessary to guarantee that the candidate for the bound is not bigger than the compact space $Y$ itself.
	Without the requirement that the sequence is uniformly dense it is impossible to specify a lower bound on the metric entropy and in non-compact spaces there does not seem to be any sensible notion of uniform density.

	If a compact metric space is represented as in \Cref{resu:representing compact spaces}, then the sets $\xi(K_l)$ have an explicit description.
	The bounds on the size of these sets break down when compactness fails.
	Thus, for non-compact spaces the representation $\xi$ has to be changed in a way that $\xi(K_l)$ is a different set.
	In the case of a Banach space, families of compact sets that are candidates for the sets $\xi(K_l)$ have been investigated in approximation theory for a long time \cite{lorentz1966}. 
	
	\subsection{Full approximation sets}\label{sec:full approximation sets}

		This section compares the results of this paper to the results from approximation theory that inspired them in the first place \cite{lorentz1966}.
		For the convenience of the reader all relevant definitions and results from approximation theory are repeated here and then compared to our definitions.

		\begin{definition}[\cite{lorentz1966}]
			For a subset $A$ of a metric space $\Mn$ and $\varepsilon>0$ a real number, denote the minimal number of sets of diameter $2\varepsilon$ needed to cover $A$ by $N_{\varepsilon}(A)$.
			And define the \demph{entropy} of $A$ by
			\[ H_\varepsilon(A) := \log(N_\varepsilon(A)). \]
		\end{definition}
		Our notion of metric entropy $\size A$ from \Cref{def:metric entropy and size} uses closed balls instead of sets of diameter $2\varepsilon$.
		Since balls of radius $2^{-n}$ are sets of diameter $2\cdot2^{-n}$ it holds that $\lceil \lb(N_{2^{-n}}(A))\rceil\leq \size A(n)$.
		Any set of diameter $2\cdot2^{-n-1}$ is contained in the closed $2^{-n}$-ball around any of its elements.
		Therefore it also holds that $\size A(n)\leq \lceil \lb(N_{2^{-n-1}}(A))\rceil$.
		Note, that the $\log$ in the definition of $H_\varepsilon(A)$ denotes the natural logarithm, while our definition uses a power of two.
		It follows that
		\[ \lceil \lb(e) H_{2^{-n}}(A)\rceil \leq \size A(n)\leq\lceil\lb(e) H_{2^{-n-1}}(A)\rceil. \]

		\begin{definition}[\cite{lorentz1966}]
			Let $\XX$ be a Banach space, let $\Phi=(e_i)$ be a fundamental sequence, that is a sequence whose linear span is dense in $\XX$.
			Let $\Delta:= (\delta_i)$ be a non-increasing zero sequence of positive real numbers.
			Define the \demph{full approximation set} $A(\Delta,\Phi)$ by
			\[ A(\Delta,\Phi) := \big\{f\in\XX\mid\forall n\in\NN\exists \lambda_1\ldots\lambda_n\in\RR:\norm{\sum\nolimits_{i=1}^n \lambda_i e_i - f}\leq \delta_n\big\}. \]
		\end{definition}

		Note that a fundamental sequence is different from a fundamental system as introduced in \Cref{def:fundamental system} as it is not required that $\norm{e_i}=1$.
		Lorentz provides upper and lower bounds for the entropy of the approximation sets:
		\begin{theorem}[Theorem 2 from \cite{lorentz1966}]\label{resu:full approximation sets}
			Let $\XX$ be a Banach space and let $A(\Delta,\Phi)$ be a full approximation set.
			Define a sequence $N_i$ by $N_0:= 0$ and  $N_i:=\min\{k\mid \delta_k\leq 2^{-i}\}$ for $i\neq 0$.
			Set $\Delta N_i:= N_{i+1}-N_i$.
			Let $\varepsilon$ be from the interval $(0,4)$ and set $j:= \lceil2-\lb(\varepsilon)\rceil$.
			Then
			\[ \log(2)\sum_{i=1}^{j-3} N_i \leq H_{\varepsilon}(A(\Delta,\Phi)) \]
			and
			\[ H_\varepsilon(A(\Delta,\Phi)) \leq \log(2)\sum_{i=1}^jN_i + \sum_{i=0}^{j-1} N_i \log \frac{N_j}{\Delta N_i} + N_1\log\delta_0 + N_j\log 9. \]
		\end{theorem}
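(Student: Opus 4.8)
The plan is to prove both inequalities by exploiting the \emph{shell decomposition} of the full approximation set induced by the breakpoints $N_i$. Write $E_m := \Span\{e_1,\ldots,e_m\}$. Unwinding the definition, $f\in A(\Delta,\Phi)$ means exactly $\dist(f,E_n)\leq\delta_n$ for every $n$, and by the choice $N_i=\min\{k\mid\delta_k\leq 2^{-i}\}$ this gives $\dist(f,E_{N_i})\leq\delta_{N_i}\leq 2^{-i}$ while $\delta_{N_i-1}>2^{-i}$. I would first pass to a finite-dimensional picture: since $\varepsilon\in(0,4)$ and $j=\lceil 2-\lb\varepsilon\rceil$ force $\varepsilon$ to be comparable to $2^{-j+2}$, the tail $\dist(f,E_{N_j})\leq 2^{-j}$ is already negligible against $\varepsilon$, so it suffices to estimate the $\varepsilon$-entropy of the body $A(\Delta,\Phi)\cap E_{N_j}$, cut out by the constraints $\dist(\cdot,E_{N_i})\leq\delta_{N_i}$ for $i=1,\ldots,j$. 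The \emph{shell} between levels $i-1$ and $i$ carries $\Delta N_{i-1}=N_i-N_{i-1}$ new coordinate directions, along which an element of the body may move by an amount comparable to $\delta_{N_{i-1}}\sim 2^{-(i-1)}$.

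For the \textbf{upper bound} I would construct an explicit net as a product of nets on the shells. On shell $i$ one covers a ball of radius $\sim 2^{-(i-1)}$ in dimension $\Delta N_{i-1}$ at target scale $\sim 2^{-j}$; the standard finite-dimensional volume estimate yields a net of logarithmic cardinality at most $\Delta N_{i-1}\bigl((j-i)+c\bigr)$ for an absolute constant $c$, with a sharper bound $\Delta N_{i-1}\,\lb(N_j/\Delta N_{i-1})$ when the per-shell scale is chosen optimally. Concatenating one choice per shell covers $A(\Delta,\Phi)\cap E_{N_j}$, and by the triangle inequality the accumulated covering radius is $\sum_i 2^{-j}\lesssim\varepsilon$. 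Summing the per-shell counts and applying an Abel summation turns $\sum_{i=1}^{j}\Delta N_{i-1}(j-i)$ into $\sum_{i=1}^{j-1}N_i$, producing the leading term $\log(2)\sum_{i=1}^{j}N_i$; the refined volume constants and the residual endpoint contributions aggregate into the correction terms $\sum_{i=0}^{j-1}N_i\log\frac{N_j}{\Delta N_i}$, $N_1\log\delta_0$ and $N_j\log 9$.

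For the \textbf{lower bound} I would pack each shell separately using the quotient norms that the fundamental sequence still provides. For each level $i$ the quotient $E_{N_i}/E_{N_{i-1}}$ is a normed space of dimension $\Delta N_{i-1}$, and a volume argument furnishes $\sim 2^{\Delta N_{i-1}(j-i-c)}$ points that are pairwise separated in the quotient norm at radius $\sim 2^{-i}$; lifting representatives scaled to decay geometrically in $i$ keeps every constructed sum inside $A(\Delta,\Phi)\cap E_{N_{j-3}}$, since the unapproximated tail at any truncation level is then dominated by the corresponding $\delta_n$. Taking products over the shells $i=1,\ldots,j-3$ produces roughly $2^{\sum_{i=1}^{j-3}N_i}$ candidates, and two distinct ones first differ in a coarsest shell $i_0$: applying the bounded quotient map of that shell annihilates the common coarse part, so the norm of the difference is bounded below by the shell-$i_0$ quotient separation, while the finer shells contribute a strictly smaller geometric remainder that cannot cancel it. Abel summation as above identifies the exponent with $\sum_{i=1}^{j-3}N_i$, giving $\log(2)\sum_{i=1}^{j-3}N_i\leq H_\varepsilon(A(\Delta,\Phi))$.

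The \textbf{main obstacle} is precisely this last point. Because $\Phi$ is only a fundamental sequence and not a Schauder basis or an orthonormal system, the triangle inequality controls covering radii cleanly from above—making the upper bound essentially a bookkeeping exercise—but it works \emph{against} separation from below: a single perturbed coordinate need not increase the norm, and the superposition of many shells can in principle nearly cancel. The delicate part is therefore to certify that the separation contributed by the dominant shell survives the addition of the remaining shells; this is exactly what forces the geometric scaling $2^{-i}$ across shells, the use of bounded quotient functionals rather than coordinate functionals, and the loss of a fixed number of levels (hence $j-3$ instead of $j$), and it is the only step where the absence of a biorthogonal system must be compensated with care.
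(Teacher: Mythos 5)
The paper does not prove this statement---it imports it verbatim as Theorem~2 of Lorentz's survey---so the comparison here is with the classical argument. Your upper bound is in the right spirit: decomposing a near-best approximant from $E_{N_j}$ into shell increments, covering each increment at a scale $\eta_i$ chosen so that $\sum_i\eta_i\lesssim\varepsilon$, and Abel-summing is exactly how the stated bound (including the $N_i\log(N_j/\Delta N_i)$ and $N_j\log 9$ corrections) is obtained.

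The lower bound, however, has a genuine gap, and it sits exactly at the point you yourself flag as the main obstacle. To make shell $i$ contribute $2^{\Delta N_{i-1}(j-i)}$ configurations you must pack a ball of radius $\sim 2^{-i}$ (the shell radius forced by membership in $A(\Delta,\Phi)$) at separation $\sim 2^{-j}$; a separation of order $2^{-i}$, as you write, yields only $2^{O(\Delta N_{i-1})}$ points per shell and hence a total of $O(N_{j-3})$ rather than $\sum_{i=1}^{j-3}N_i$ --- these differ by a factor of order $j$, which is the whole content of the theorem. But with separation $2^{-j}$ the ``coarsest differing shell'' argument collapses: if two configurations first differ at shell $i_0$, the quotient map modulo $E_{N_{i_0-1}}$ does \emph{not} annihilate the finer shells (they live in $E_{N_i}\supsetneq E_{N_{i_0-1}}$ for $i>i_0$), and their contribution to the difference has norm up to $\sum_{i>i_0}2\cdot 2^{-i}\sim 2^{-i_0}$, which swamps the $2^{-j}$ quotient separation at shell $i_0$. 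No choice of geometric scaling or loss of finitely many levels repairs this. The classical resolution avoids constructing separated points altogether: one shows $A(\Delta,\Phi)\cap E_{N_m}$ contains a body $P$ with $\dist(f,E_{N_i})\leq c2^{-i}$ for all $i\leq m$, lower-bounds $\mathrm{vol}(P)$ by Fubini along the flag $E_{N_1}\subset\cdots\subset E_{N_m}$ as a product of quotient-ball volumes times $\prod_i(c2^{-i})^{\Delta N_{i-1}}$, and upper-bounds the volume of any set of diameter $2\varepsilon$ by $\varepsilon^{N_m}\mathrm{vol}(B_{E_{N_m}})\leq\varepsilon^{N_m}\prod_i\mathrm{vol}(B_{E_{N_i}/E_{N_{i-1}}})$; the unknown quotient-ball volumes cancel in the ratio, which is precisely what substitutes for the missing biorthogonal system. (Note also that the argument needs the $e_i$ to be linearly independent, which the paper's definition of a fundamental sequence does not literally require but which your quotient dimensions, and Lorentz's statement, presuppose.)
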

		Thus, these full approximation sets are compact and it is possible to specify both upper and lower bounds of their size.
		The representation $\xi$ from \Cref{resu:representing banach spaces} is deliberately chosen such that the sets $\xi(K_l)$ of elements that have names of length $l$ are almost full approximation sets:

		\begin{lemma}\label{resu:short name sets in banach spaces}
			Let $\XX$ be a Banach space, $\Phi=(e_i)$ a Schauder basis, $C$ the constant from \Cref{resu:uniform boundedness principle} and set $c:=\lceil\lb(C)\rceil$.
			Let $S$ be as in \Cref{resu:representing banach spaces}, let $\xi$ be the representation from the proof of \Cref{resu:representing banach spaces} and for a strictly increasing $l\in\omega^\omega$ such that $l(0)\geq c$ denote by $\Delta_l:= (\delta_{l,i})_i$ the sequence
			\[ \delta_{l,i} := \begin{cases}2^{l(0)-c} &\text{if } i<S(l,0)\\ 2^{-n}& \text{whenever }S(l,n)\leq i<S(l,n+1). \end{cases} \]
			And let $\Delta'_l$ be defined in the same way but replace $2^{l(0)-c}$ by $2^{l(3(S(l,1)+1)+1)S(l,0)}$.
			Then
			\[ A(\Delta_l,\Phi) \subseteq \xi(K_l) \subseteq A(\Delta_l',\Phi). \]
		\end{lemma}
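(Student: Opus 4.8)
The plan is to establish the two inclusions separately, in each case unwinding the definition of the full approximation set and matching its approximation requirements against the three defining conditions (l), (a), (o) of the representation $\xi$ from the proof of \Cref{resu:representing banach spaces}. The recurring technical bridge between the two ``accuracy scales'' is the elementary bound $2^{\length n}\geq n+1$, valid for all $n\in\NN$: it reconciles the accuracy $\tfrac1{n+1}$ appearing in condition (a) with the dyadic accuracies $2^{-n}=\delta_{l,S(l,n)}$ built into the sequences $\Delta_l$ and $\Delta'_l$, and it is what forces the index shifts in the definition of $\delta_{l,i}$.

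For the inclusion $A(\Delta_l,\Phi)\subseteq\xi(K_l)$ I would take $f\in A(\Delta_l,\Phi)$ and build a $\xi$-name $\varphi$ of $f$ with $\flength\varphi\leq l$. Taking the empty combination in the defining condition of $A(\Delta_l,\Phi)$ gives $\norm f\leq\delta_{l,0}=2^{l(0)-c}$. Using the coordinate functionals $f_i$ from \Cref{resu:uniform boundedness principle}, which satisfy $\norm{f_i}\leq C\leq 2^c$ together with $f_i(e_i)=1$ and $f_i(e_j)=0$ for $i\neq j$, any combination $\sum\lambda_i e_i$ approximating $f$ to within such a coarse accuracy obeys $\abs{\lambda_i}=\abs{f_i(\sum\lambda_j e_j)}\leq C\norm{\sum\lambda_j e_j-f}+C\norm f\leq 2^{l(0)+1}$; this is exactly where $c=\lceil\lb C\rceil$ and the shift by $-c$ enter. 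I would then encode, for each precision $n$, the combination of the first $S(l,\length n)$ vectors that $f$ supplies at accuracy $\delta_{l,S(l,\length n)}=2^{-\length n}\leq\tfrac1{n+1}$, rounding its coefficients to denominator $m+1$ for $m$ above the threshold in (a) so that the extra rounding error stays below $\tfrac1{n+1}$ (yielding the $\tfrac2{n+1}$ there). Since the numerators have length $O(l(0))$ and each query $\sdzero\langle i,n,m\rangle$ has length at least $\length m$, strict monotonicity of $l$ forces $l(0)+1+\length m\leq l(\text{query length})$, so each numerator fits; conditions (l) and (o) are then accommodated by padding exactly as in the proof of \Cref{resu:representing banach spaces}, giving $\flength\varphi\leq l$ and hence $f\in\xi(K_l)$.

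Conversely, for $\xi(K_l)\subseteq A(\Delta'_l,\Phi)$ I would take $x$ with a name $\varphi$, $\flength\varphi\leq l$, and read approximations directly off condition (a). For an index $N$ with $S(l,n)\leq N<S(l,n+1)$ and $n\geq 1$, applying (a) with parameter $n'=2^n-1$ (so $\length{n'}=n$) produces a combination of $S(\flength\varphi,n)\leq S(l,n)\leq N$ vectors approximating $x$ to within $\tfrac1{n'+1}=2^{-n}=\delta'_{l,N}$, and padding with zero coefficients realises exactly $N$ vectors; the case $n=0$ is identical. The only remaining regime is $N<S(l,0)$, where $\delta'_{l,N}=2^{l(3(S(l,1)+1)+1)S(l,0)}$ and, via the empty combination, it suffices to bound $\norm x$ from above. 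Here I would invoke the rational approximation in (a) at $n=0$: for the least admissible $m>S(\flength\varphi,1)+1$ one has $\norm x\leq 2+\sum_{i=0}^{S(l,0)}\tfrac{\abs{z_i}}{m+1}$ with $z_i\in\ZZ$ of length at most $l$ evaluated at the length of $\sdzero\langle i,0,m\rangle$; bounding that query length crudely by its components ($\length i,\length m\leq S(l,1)+1$) gives a length at most $3(S(l,1)+1)+1$, so $\abs{z_i}\leq 2^{l(3(S(l,1)+1)+1)}$ and the stated $\delta'_{l,0}$ dominates $\norm x$ comfortably.

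The main obstacle is precisely the bookkeeping in this last regime: one must identify which query $\sdzero\langle i,n,m\rangle$ governs the coarsest approximation, track how condition (l) turns $\flength\varphi\leq l$ into a length bound on the numerators $z_i$, and pick the denominator $m$ so that both the approximation error and the resulting exponent come out as claimed. The deliberately loose estimate $\length(\cdot)\leq(\cdot)$ is what produces the clean but generous exponent $l(3(S(l,1)+1)+1)S(l,0)$; since $A(\Delta'_l,\Phi)$ only enlarges as $\delta'_{l,0}$ grows, no tightness is needed here, merely a valid upper bound on $\norm x$.
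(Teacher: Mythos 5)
Your proposal is correct and follows essentially the same route as the paper: the inclusion $\xi(K_l)\subseteq A(\Delta_l',\Phi)$ by reading approximations and a norm bound directly off condition (a) (which the paper dismisses as straightforward but you work out in detail), and the inclusion $A(\Delta_l,\Phi)\subseteq\xi(K_l)$ by bounding the coefficients via the coordinate functionals of \Cref{resu:uniform boundedness principle} with $\norm{f_i}\leq 2^c$ and using strict monotonicity of $l$ to fit the encodings, exactly as in the paper. The only cosmetic difference is that the paper bounds the canonical coefficients $\abs{f_i(x)}\leq\norm{f_i}\norm{x}\leq 2^{l(0)}$ directly, while you bound the coefficients of an approximating combination, which amounts to the same estimate.
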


		\begin{proof}
			First note that due to the monotonicity and the growth condition on $S$ from the statement of \cref{resu:representing banach spaces}, the function $S(l,\cdot)$ is non-decreasing and unbounded.
			Therefore $\delta_{l,i}$ is well-defined, and a non-increasing zero sequence as required.

			The first inclusion $\xi(K_l)\subseteq A(\Delta_l',\Phi)$ follows straightforwardly from comparing the definition of the representation $\xi$ from the proof of \Cref{resu:representing banach spaces} with the definition of full approximation sets and the sequence $\Delta_l$.
			
			To prove the second inclusion, namely $A(\Delta_l,\Phi) \subseteq \xi(K_l)$ assume that $x\in A(\Delta_l,\Phi)$.
			By the definition of the full approximation sets and the sequence elements $\delta_{l,i}$ for $i\geq S(l,0)$, there exists appropriate real numbers such that the first condition of (a) from the proof of \Cref{resu:representing banach spaces} can be fulfilled by a function of length $l$.
			Recall that we used $f_i$ to denote the coordinate functionals corresponding to the Schauder basis $e_i$ and that $c$ was chosen such that $\norm{f_i}\leq 2^c$.
			Therefore,
			\[ \abs{f_i(x)} \leq \norm{f_i} \norm x \leq 2^{l(0)} \]
			and since $l$ is strictly increasing there is enough space to encode approximations of $f_i(x)$ in a function of length $l$.
			That is: the second condition of (a) in the proof of \Cref{resu:representing banach spaces} can additionally be fulfilled.
			The conditions from (l) and (o) from the proof of \Cref{resu:representing banach spaces} can easily be fulfilled by modifying this name without changing its length.
			That the oracle for the norm does not need to increase the length is guaranteed by the normalization of the $(e_i)$.
		\end{proof}

		Let us apply Lorentz's \Cref{resu:full approximation sets} to the sets $\xi(K_l)$ and translate the result back to our definitions.
		The notations introduced in in \Cref{resu:full approximation sets} simplify considerably in this context.
		Indeed, if $\Delta=\Delta_l$ and $\varepsilon=2^{-n}$, then
		\[ N_i = S(l,i) \quad\text{and}\quad j=n+2. \]
		Also note that $\lb(e)\cdot \log(2) = 1$ and recall that
		\[ \lceil \lb(e) H_{2^{-n}}(A)\rceil \leq \size A(n)\leq\lceil\lb(e) H_{2^{-n-1}}(A)\rceil. \]

		\begin{theorem}\label{resu:full approximation name sets}
			Let $\XX$ be a Banach space with Schauder basis $(e_i)$, and $S$, $l$ and $c$ like in \Cref{resu:short name sets in banach spaces}.
			Then
			\begin{align*}
				\size{\xi(K_l)}(n) & \leq \sum_{i=1}^{n+3} S(l,i) + \sum_{i=0}^{n+2}S(l,i) \log\left(\frac{S(l,n+3)}{S(l,i+1)-S(l,i)}+2\right) + {} \\
				&\quad {}+ \lb(e)\cdot S(l,1)\cdot S(l,0)\cdot l(3(S(l,1)+1)+1)+{}\\&\quad{} + \log(7)\cdot S(l,n+3)
			\end{align*}
			and
			\[ \sum_{i=1}^{n-1} S(l,i) \leq \size{\xi(K_l)}(n) \]
		\end{theorem}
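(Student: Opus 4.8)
The plan is to obtain both inequalities by feeding the sandwiching of \Cref{resu:short name sets in banach spaces} into Lorentz's \Cref{resu:full approximation sets}. That lemma gives full approximation sets with $A(\Delta_l,\Phi)\subseteq \xi(K_l)\subseteq A(\Delta_l',\Phi)$, and its proof already verifies that $\Delta_l$ is a non-increasing zero sequence of positive reals, a property that visibly persists for $\Delta_l'$, so \Cref{resu:full approximation sets} applies to both sets. Since covering a set by closed balls simultaneously covers every subset, the metric entropy of \Cref{def:metric entropy and size} is monotone under inclusion, so $\size{A(\Delta_l,\Phi)}\leq \size{\xi(K_l)}\leq \size{A(\Delta_l',\Phi)}$. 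It therefore suffices to bound the two outer entropies, and I would do this by substituting the identities recorded just before the statement, namely $N_i=S(l,i)$ together with $j=n+2$ for $\varepsilon=2^{-n}$ and $j=n+3$ for $\varepsilon=2^{-n-1}$.

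For the lower bound I would apply the lower half of \Cref{resu:full approximation sets} to $A(\Delta_l,\Phi)$ at $\varepsilon=2^{-n}$, where $j-3=n-1$, obtaining $\log(2)\sum_{i=1}^{n-1}S(l,i)\leq H_{2^{-n}}(A(\Delta_l,\Phi))$. Passing to metric entropy through the left inequality $\lceil\lb(e)H_{2^{-n}}(A)\rceil\leq\size A(n)$ and clearing constants with $\lb(e)\cdot\log(2)=1$ yields $\sum_{i=1}^{n-1}S(l,i)\leq\size{A(\Delta_l,\Phi)}(n)\leq\size{\xi(K_l)}(n)$, which is exactly the asserted lower bound.

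For the upper bound I would run the symmetric computation on $A(\Delta_l',\Phi)$, but at $\varepsilon=2^{-n-1}$ so that the right inequality $\size A(n)\leq\lceil\lb(e)H_{2^{-n-1}}(A)\rceil$ is available; this is what raises the index to $j=n+3$ and makes the sums in the statement reach $n+3$. Substituting $N_i=S(l,i)$, $N_j=S(l,n+3)$ and $\Delta N_i=S(l,i+1)-S(l,i)$ into the four summands of Lorentz's upper estimate reproduces the first, second and last displayed terms, while the third term $\lb(e)\cdot S(l,1)\cdot S(l,0)\cdot l(3(S(l,1)+1)+1)$ comes from $N_1\log\delta_0$, using that the sole difference between $\Delta_l'$ and $\Delta_l$ is the value $\delta_0=2^{l(3(S(l,1)+1)+1)S(l,0)}$, whose logarithm is again simplified by $\lb(e)\cdot\log(2)=1$.

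The substitutions are routine; the care is concentrated in two spots, and I expect the degenerate coordinate blocks to be the main obstacle. The summand $N_i\log(N_j/\Delta N_i)$ is only meaningful when $\Delta N_i=S(l,i+1)-S(l,i)\geq 1$, and this can fail because monotonicity of $S$ forces $S(l,\cdot)$ to be merely non-decreasing, not strictly increasing; I plan to handle the indices with $\Delta N_i=0$ by the convention that their contribution is the empty-block value $0$, and for the remaining indices to over-estimate $\log(N_j/\Delta N_i)\leq\log(N_j/\Delta N_i+2)$, which is exactly the robustified logarithm appearing in the statement and keeps the bound valid. The second point is purely bookkeeping: translating the natural-logarithm entropy $H_\varepsilon$ into the base-two metric entropy and reconciling the outer ceilings together with the residual factors of $\lb(e)$ and $\log(2)$, all of which only loosen constants and leave the shape of the bound intact.
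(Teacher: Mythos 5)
Your proposal follows exactly the route the paper intends (and only sketches in the two paragraphs preceding the theorem): sandwich $\xi(K_l)$ between the full approximation sets of \Cref{resu:short name sets in banach spaces}, use monotonicity of the metric entropy of \Cref{def:metric entropy and size} under inclusion, apply Lorentz's \Cref{resu:full approximation sets} with $N_i=S(l,i)$ and $j=n+2$ resp.\ $j=n+3$, and convert $H_\varepsilon$ into $\size{\cdot}$ via the stated inequalities. Your handling of the degenerate blocks with $\Delta N_i=0$ and of the $\delta_0$-term is the right way to make that sketch precise; the only residual friction is that the paper's final constants (e.g.\ $\log(7)$ where the substitution literally yields $\lb(e)\cdot\log 9=\lb(9)$) do not fall out of the computation exactly, which is an imprecision of the theorem's statement rather than of your argument.
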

		The lower bound improves the lower bound given in \Cref{resu:representing banach spaces} considerably:
		Most prominently, the dependency on $c$ is completely removed (only the condition $l(0)\geq c$ remains).
		But also the bound from \Cref{resu:representing banach spaces} only mentions one of the summands from the sum in the lower bound form the theorem above (i.e. $S(l,n-c)$).

		The upper bounds are more difficult to compare.
		This is partly because we never stated the upper bound explicitly but it has to be obtained from combining \Cref{resu:representing banach spaces,resu:from complexity to metric entropy}.
		Also, these results do not state what the constants are.
		However, a superficial comparison indicates that the bound from \Cref{resu:full approximation name sets} is superior as it does not contain squares of $S(l,n+4)$ and states the constants explicitly.

	\subsection{Compact spaces}

		Let us start with a couple of examples of well-known representations that can be recovered using the constructions presented in \Cref{sec:construction of standard representations}.
		First for the construction for compact metric spaces from \Cref{resu:representing compact spaces}:
		\begin{example}[The unit interval]\label{ex:the unit interval}
			Let $[0,1]$ denote the computable metric space $([0,1],\abs\cdot,(q_i))$, where the sequence $(q_i)$ is given by
			\[ q_0:=0, \quad q_1:= 1 \quad\text{and}\quad q_i:=\frac{2\big(i-2^{\lfloor\lb(i-1)\rfloor}\big)-1}{2^{\lceil\lb(i)\rceil}}\text{ for }i\geq2. \]
			I.e.
			\[ (q_i) = \left(0,1,\frac12,\frac14,\frac34,\frac18,\frac38,\frac58,\frac78,\frac1{16},\frac3{16}\ldots\right). \]
			It is easy to check that $\size{[0,1]}(n) =\max\{n-2,0\}$ and that the sequence $(q_i)$ is uniformly dense in the sense of \Cref{def:uniformly dense}.
			Actually, $(q_i)$ fulfills even stronger conditions than (c) and (s) from this definition:
			(c) can be replaced by \lq The closed $2^{-n}$-balls around the first first $2^{\size{[0,1]}(n)+1}$ elements cover $[0,1]$\rq\ and (s) can be improved to \lq For all $n$ the first $2^{n+1}-1$ elements are of pairwise distance more than $2^{-n}$\rq.
			Let $\xi_{[0,1]}$ be the representation constructed in the proof of \Cref{resu:representing compact spaces} with the choices
			\[ \ell(n) := n\quad\text{and}\quad S(l,n):=1. \]
			These do not fulfill the assumptions of the theorem but due to the improved uniform density of the sequence $(q_i)$ the construction still works and returns a representation which is equivalent to the range restriction of the representation of the reals from \Cref{def:standard representation of the reals}.

			We could also have used \Cref{resu:representing compact spaces} directly by choosing the functions as $\ell(n) := \size{[0,1]}(n) + \lceil\lb(n+1)\rceil$ and $S(l,n) := 1$.
			This leads to an polynomial-time equivalent representation.
		\end{example}

		\begin{example}[Separable metric spaces]
			If $\M=(\Mn,d,(r_i))$ is a separable metric space with a uniformly dense sequence, using $(r_i)$ as uniformly dense sequence and setting $S(l,n):=1$ and $\ell(n) := \size{\Mn}(n)+\lceil\lb(n+1)\rceil$ produces the relativized Cauchy representation $\xi^r_{\M}$ from \Cref{def:the relativized cauchy representation}.
			This still works if the sequence is only dense, but in this case the lower bound for the size of the sets $\xi^r_{\M}(K_l)$ may fail.
		\end{example}

		As a last example of a compact space consider a space of Lipschitz functions \cite{MR1952428}.
		Since the procedure is similar to what we do in the next section we skip the details:

		\begin{example}\label{ex:continuous functions}
			The standard enumeration of piecewise linear Lipschitz one functions with dyadic breakpoints is uniformly dense in the set of Lipschitz one functions that map $0$ to itself.
			Using the construction from \Cref{resu:representing compact spaces} for this sequence and setting $S(l,n) := 2^{\max\{l(n),n\}}$ and $\ell(n):=n$ or $S(l,n):=2^{l(n)}+\lceil \lb(n+1)\rceil$ and $\ell(n):=1$ produces the range restriction of the standard representation of the continuous functions as introduced in \cite{Kawamura:2012:CTO:2189778.2189780} (see also \Cref{def:the standard representation of continuous functions}).
			I.e. the two representations are polynomial time equivalent.
		\end{example}
		
	\subsection{Continuous functions on the unit interval}
		
		This chapter demonstrates how to reconstruct the standard representation of the continuous functions on the unit interval (as introduced in \cite{Kawamura:2012:CTO:2189778.2189780}) using the construction from the proof of \Cref{resu:representing banach spaces}.
		This chapter relies on the framework of second-order representations introduced by Kawamura and Cook (see \Cref{sec:regularity} for a brief description).
		For the convenience of the reader the relevant definitions from the above source are repeated.
		In the following the space $C([0,1])$ of real-valued continuous functions on the unit interval is always considered a Banach space equipped with the supremum norm denoted by $\norm\cdot_\infty$.

		A central notion for the standard representation is the modulus of continuity.
		\begin{definition}[\cite{Kawamura:2012:CTO:2189778.2189780}]\label{def:moduli of continuity}
			A non-decreasing function $\mu:\omega\to\omega$ is called a \demph{modulus of continuity} of $f\in C([0,1])$ if for all $x,y\in[0,1]$ and $n\in\omega$ it holds that
			\[ \abs{x-y}\leq 2^{-\mu(n)} \quad\Rightarrow\quad |f(x)-f(y)| \leq 2^{-n} \]
		\end{definition}
		To be exact, this notion of a modulus should be called modulus of uniform continuity.
		Since functions on compact sets are uniformly continuous, any function from $C([0,1])$ has a modulus of continuity.

		\begin{definition}[\cite{Kawamura:2012:CTO:2189778.2189780}]
			\label{def:the standard representation of continuous functions}
			Make $C([0,1])$ a represented space by equipping it with the \demph{standard second-order representation $\delta_{\square}$} defined as follows:
			A length-monotone string function $\varphi$ is a $\delta_{\square}$-name of $f$ if and only if $\varphi = \langle\mu,\psi\rangle$ for some $\mu,\psi\in\reg$ such that both of the following are fulfilled:
			\begin{itemize}
			 	\item[(m)] $n\mapsto \length{\mu(\sdone^n)}$ a \underline{m}odulus of continuity of $f$.
			 	\item[(d)] $\varphi$ provides approximations of the values of $f$ on \underline{d}yadic arguments.
			 	I.e. for all $n,m\in\omega$ and $r\in\NN$ such that $r\leq 2^m$ it holds that $\varphi(\langle \sdzero^n,r,\sdone\sdzero^m\rangle)=\langle q,\sdone \sdzero^k\rangle$ for some $q\in\NN$ and $k\in\omega$ and
			 	\[ \abs{f\left(r\cdot 2^{-m}\right) - q\cdot2^{-k}}\leq 2^{-n}. \]
			\end{itemize}
		\end{definition}
		Note that this definition differs subtly from the definitions used earlier in this paper:
		We chose to use integers in binary instead of unary, but have already remarked that this does not make a difference.
		Additionally we have always chosen what is called $k$ in the above definition to be equal to what is called $n$ in the above definition.
		This is impossible if one wants to work within the realm of second-order representations: it must be possible to find length-monotone names, so it has to be possible to choose the return values arbitrary big.
		This does not make a difference up to polynomial-time translatability, as we can always just round the dyadic number to have denominator $2^{-n}$.
		In the other direction the fraction can be extended for the return value to be long enough as soon as we have access to an upper bound of the length function (which is the case if the representation is regular).
		Finally, considering names as pairs of a modulus of continuity and a string function encoding discrete information is up to polynomial-time equivalence the same as just using the function encoding the discrete information and requiring its length to be a modulus of continuity.

		We need the following important result about the representation $\delta_{\square}$:
		\begin{theorem}[Lemma 4.9 from \cite{Kawamura:2012:CTO:2189778.2189780}]\label{resu:minimality of the standard representation}
			For a second-order representation $\delta$ of $C([0,1])$ the following are equivalent:
			\begin{itemize}
				\item The evaluation operator $\eval:C([0,1])\times [0,1] \to \RR, (f,x)\mapsto f(x)$ is polynomial time computable. 
				\item $\delta$ can be translated to $\delta_{\square}$ in polynomial-time.
			\end{itemize}
		\end{theorem}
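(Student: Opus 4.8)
The plan is to prove the two implications separately, first isolating the base fact that $\eval$ is polynomial-time computable with respect to $\delta_{\square}$ itself. For this, given a $\delta_{\square}$-name $\langle\mu,\psi\rangle$ of $f$, a $[0,1]$-name of $x$, and a precision requirement $\sdone^n$, I would read off $m:=\length{\mu(\sdone^{n+1})}$, which by condition (m) of \Cref{def:the standard representation of continuous functions} is the value at $n+1$ of a modulus of continuity of $f$; query the $[0,1]$-name for a dyadic $r\cdot 2^{-m}$ with $\abs{x-r\cdot 2^{-m}}\leq 2^{-m}$; query $\psi$ via condition (d) for a $2^{-(n+1)}$-approximation of $f(r\cdot 2^{-m})$; and output the sum. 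Since $\mu,\psi\in\reg$, both the query lengths and the answer lengths are bounded by $\flength{\langle\mu,\psi\rangle}$ at arguments linear in $n$, so the procedure runs in polynomial time. This already settles the implication from the second condition to the first in general: if $F$ is a polynomial-time translation from $\delta$ to $\delta_{\square}$, then composing $F$ with this evaluation algorithm yields a polynomial-time realizer of $\eval$ for $\delta$.

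For the converse, assume $\eval$ is computed by an oracle machine $M^?$ running in time bounded by a second-order polynomial $P$. The goal is to turn a $\delta$-name $\varphi$ of $f$ into a $\delta_{\square}$-name, that is, to produce both the value approximations (d) and a modulus of continuity (m). The values (d) are immediate: a dyadic point $r\cdot 2^{-m}$ has a canonical short name, so running $M^{\langle\varphi,\cdot\rangle}$ on that name at the required precision produces the demanded approximations of $f(r\cdot 2^{-m})$ in polynomial time, and rounding the output to the requested denominator is harmless up to polynomial-time equivalence as remarked after \Cref{def:the standard representation of continuous functions}.

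The modulus (m) is the heart of the argument, and the key idea is that the running time of $M^?$ is itself a modulus of continuity. Fixing $\varphi$ and a precision $n$, I would apply the communication analysis of \Cref{resu:oracle restriction} to the run $M^{\langle\varphi,\chi\rangle}(\sdone^{n+1})$, where $\chi$ ranges over names of points of $[0,1]$ drawn from a fixed family of controlled length: the machine can inspect $\chi$ only at queries of length at most $P(\flength{\langle\varphi,\chi\rangle},n+1)$, so its output depends on the argument only through a dyadic approximation to some precision $2^{-k(n)}$ with $k(n)$ bounded by this running time. Consequently, if $\abs{x-y}\leq 2^{-k(n)}$ then the computations of $f(x)$ and $f(y)$ to precision $2^{-(n+1)}$ return approximations of a common value, whence $\abs{f(x)-f(y)}\leq 2^{-n}$; this exhibits $k$ as a modulus of continuity. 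Because $\delta$ is a second-order representation, every name is length-monotone, so $\flength{\varphi}(j)=\length{\varphi(\sdone^j)}$ is obtainable with a single query; this lets me evaluate $P$ at $\flength{\varphi}$ in polynomial time and write out a length-monotone $\mu$ with $\length{\mu(\sdone^n)}\geq k(n)$. Pairing $\mu$ with the value oracle from (d) and checking length-monotonicity completes the polynomial-time translation.

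I expect the main obstacle to be the rigorous form of the indistinguishability step. One must argue, uniformly in $\varphi$ but independently of the particular point $x$, that the machine's output to precision $2^{-(n+1)}$ is determined by a dyadic approximation of the argument whose precision is controlled solely by the running time. Making the bound depend only on $\flength{\varphi}$ and $n$ — and not on the incidental length of the name of $x$ — forces me to feed $M^?$ argument-names of fixed, controlled length and then invoke the communication function of \Cref{resu:oracle restriction} to read off the maximal query length. The length-monotonicity of $\delta$-names is exactly what makes the resulting modulus both correct and cheap to write down, and is the place where the hypothesis that $\delta$ is a second-order representation is indispensable.
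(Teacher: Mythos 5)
The paper does not prove this statement at all: it is imported verbatim as Lemma~4.9 of \cite{Kawamura:2012:CTO:2189778.2189780}, so there is no internal proof to compare against. Your reconstruction follows what is in fact the argument of that source, and I find no gap in it. The easy direction is handled correctly (evaluation is polynomial-time with respect to $\delta_{\square}$ because the modulus from condition (m) tells the machine how finely to approximate the argument, and length-monotonicity keeps all queries and answers polynomially bounded), and the hard direction correctly identifies the running time of the evaluation machine as the source of the modulus of continuity. Your two supporting observations are exactly the ones that make the argument go through: first, that the argument $x$ can always be fed to $M^?$ via a name of fixed, controlled length, so that the time bound $P(\flength{\langle\varphi,\chi\rangle},n+1)$ collapses to a quantity depending only on $\flength\varphi$ and $n$; second, that length-monotonicity of $\delta$-names makes $\flength\varphi$ and hence the second-order polynomial $P$ accessible in polynomial time, which is precisely where the hypothesis that $\delta$ is a second-order representation enters.

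Two details deserve to be spelled out when you write this up. In the indistinguishability step you need that any two points $x,y$ with $\abs{x-y}\leq 2^{-k(n)}$ admit names (from the controlled-length family) that agree on \emph{every} string of length at most the query bound, not merely on the queries actually asked for one of them; for the representation of \Cref{def:standard representation of the reals} this holds because the intervals $[x-\tfrac1{m+1},x+\tfrac1{m+1}]$ and $[y-\tfrac1{m+1},y+\tfrac1{m+1}]$ share a point of the form $\tfrac{z}{m+1}$ whenever $\abs{x-y}\leq\tfrac1{m+1}$, so a common integer answer $z_m$ can be chosen for each relevant $m$ simultaneously. Also, since queries of length $j$ to such a name address precisions $\tfrac1{m+1}$ with $m<2^j$, the precision accessible within $T$ steps is roughly $2^{-T}$, so the resulting modulus in the sense of \Cref{def:moduli of continuity} is $\mu(n)\approx P(\ldots,n+1)+O(1)$ and stays polynomial; this bookkeeping is worth a sentence. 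Your conclusion that the machine returns ``approximations of a common value'' should read that it returns the \emph{same} output, which is a $2^{-(n+1)}$-approximation of both $f(x)$ and $f(y)$, whence $\abs{f(x)-f(y)}\leq 2^{-n}$. Finally, note that the full strength of \Cref{resu:oracle restriction} is not needed here; the elementary fact that a $T$-step computation writes queries of length at most $T$ and reads at most $T$ symbols of each answer already suffices.
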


		Let us introduce the Schauder basis to use in the construction from \Cref{resu:representing banach spaces}.

		\noindent\begin{minipage}{.5\textwidth}
		\begin{definition}[\cite{Faber1910}]\label{def:faber-schauder system}
			Let $(q_i)$ be the uniformly dense sequence in the unit interval from \Cref{ex:the unit interval} and choose the convention $\lceil\lb(0)\rceil=0$.
			The \demph{Faber-Schauder system} is the sequence $(e_i)$ of functions $e_i\in C([0,1])$ defined by
			\[ e_i(x) := \max\big\{1-2^{\lceil\lb(i)\rceil}\cdot\abs{x-q_i},0\big\} \]
		\end{definition}
		It is well-known that the Faber-Schauder system is a Schauder basis.
		Since it is instructive

		\vspace{.175cm}
		\end{minipage}
		\begin{minipage}{.4\textwidth}
			\centering
			\begin{tikzpicture}
				\draw[->] (0,-.1) -- (0,2.5);
				\draw[->] (-.1,0) -- (2.5,0);
				\draw[thick, color = red7] (0,0) -- (1,0) -- (1.25,2) -- (1.5,0)--(2,0);
				\draw[thick, color = red6] (0,0) -- (.5,0) -- (.75,2) -- (1,0)--(2,0);
				\draw[thick, color = red5] (0,0) -- (.25,2) -- (.5,0)--(2,0);
				\draw[thick, color = red4] (0,0) -- (1,0) -- (1.5,2)--(2,0);
				\draw[thick, color = red3] (0,0) -- (.5,2) -- (1,0)--(2,0);
				\draw[thick, color = red2] (0,0) -- (1,2) -- (2,0);
				\draw[thick, color = red1] (0,0) -- (2,2);
				\draw[thick, color = red0] (2,0) -- (0,2);
				\node at (2.5,.25) {\textcolor{red0}{$e_0$}};
				\node at (2.5,.5) {\textcolor{red1}{$e_1$}};
				\node at (2.5,.75) {\textcolor{red2}{$e_2$}};
				\node at (2.5,1) {\textcolor{red3}{$e_3$}};
				\node at (2.5,1.25) {\textcolor{red4}{$e_4$}};
				\node at (2.5,1.5) {\textcolor{red5}{$e_5$}};
				\node at (2.5,1.75) {\textcolor{red6}{$\vdots$}};

			\end{tikzpicture}
			\captionof{figure}{The functions $e_i$}
		\end{minipage}
		for the following, we repeat the proof of this fact:
		
		\begin{lemma}
			The Faber-Schauder system is a Schauder basis of the Banach space $C([0,1])$.
		\end{lemma}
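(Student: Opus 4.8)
The plan is to exhibit, for each $f\in C([0,1])$, its coefficient sequence explicitly and to read off both convergence and uniqueness from the interpolation structure of the system together with the uniform continuity of $f$. Write $k(i):=\lceil\lb(i)\rceil$. The geometric backbone, which I would establish first, consists of elementary facts about the points $q_i$ and the hats $e_i$: the function $e_i$ is the piecewise linear hat of height one with peak $e_i(q_i)=1$ and support $[q_i-2^{-k(i)},q_i+2^{-k(i)}]$, vanishing at the two endpoints, so its breakpoints lie on the grid $2^{-k(i)}\ZZ$. The sequence $(q_i)$ enumerates the dyadic grid level by level, with $q_0=0$, $q_1=1$ the endpoints and, for $2^{k-1}<i\le 2^{k}$, the point $q_i$ an odd multiple of $2^{-k}$. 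The key relation to prove is that $e_i(q_j)=0$ for every $j<i$: such a $q_j$ is a grid point of level at most $k(i)$ distinct from $q_i$, hence either lies outside the open support of $e_i$ or exactly at one of its two vanishing endpoints.

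Next I would define the coefficients recursively. Setting $S_{-1}:=0$ and, having defined $S_{i-1}:=\sum_{j=0}^{i-1}\lambda_j e_j$, putting $\lambda_i:=f(q_i)-S_{i-1}(q_i)$ and $S_i:=S_{i-1}+\lambda_i e_i$, a straightforward induction using $e_i(q_i)=1$ and $e_i(q_j)=0$ for $j<i$ shows that $S_i(q_j)=f(q_j)$ for all $j\le i$; that is, each partial sum interpolates $f$ at the points introduced so far. Since every $e_i$ with $i\le 2^{k}$ has breakpoints on the level-$k$ grid and $q_0,\ldots,q_{2^k}$ are exactly the level-$k$ dyadic points, the function $S_{2^k}$ is the piecewise linear interpolant $L_k f$ of $f$ on that grid.

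Convergence then follows from uniform continuity. Fixing a modulus of continuity $\mu$ of $f$ in the sense of \Cref{def:moduli of continuity}, I note that on each grid interval of length $2^{-k}$ the interpolant $L_kf$ is a convex combination of the two endpoint values of $f$, so $\norm{L_k f-f}_\infty\le 2^{-n}$ whenever $k\ge\mu(n)$, whence $\norm{L_k f-f}_\infty\to 0$. For the intermediate partial sums, for $2^{k}<n\le 2^{k+1}$ the hats $e_{2^k+1},\ldots,e_n$ are level-$(k+1)$ hats whose peaks lie at mutual distance at least $2^{-k}$; their open supports are pairwise disjoint and none contains another's peak, so $\lambda_i=f(q_i)-S_{2^k}(q_i)=f(q_i)-L_kf(q_i)$ and $|\lambda_i|\le\norm{L_kf-f}_\infty$ for each such $i$. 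Disjointness of the supports gives $\norm{S_n-S_{2^k}}_\infty\le\max_i|\lambda_i|\le\norm{L_kf-f}_\infty$, and combining the two estimates yields $\norm{S_n-f}_\infty\to0$, so $f=\lim_n\sum_{i=0}^n\lambda_i e_i$.

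For uniqueness, given any convergent expansion $f=\lim_n\sum_{i=0}^n\lambda'_i e_i$, I would evaluate at a fixed $q_j$: for large $n$ every hat $e_i$ with $i>2^{k(j)}$ vanishes at $q_j$, so $\sum_{i=0}^n\lambda'_i e_i(q_j)$ is eventually a fixed finite sum, which must equal $f(q_j)$ by pointwise convergence. Read in the order $j=0,1,2,\ldots$, these equations form a triangular system (with $e_i(q_i)=1$ and $e_i(q_j)=0$ for $j<i$) solved uniquely by $\lambda'_i=\lambda_i$. The step I expect to be the main obstacle is the first one: pinning down precisely which grid points fall inside the support of each $e_i$, so that the vanishing relations $e_i(q_j)=0$ for $j<i$ and the disjointness of the level-$(k+1)$ supports are rigorous, since the recursion, the identification $S_{2^k}=L_kf$, the convergence estimate and uniqueness all follow mechanically once this triangular structure is in hand.
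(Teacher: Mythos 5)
Your proposal is correct and follows essentially the same route as the paper: both rest on the triangular structure $e_i(q_j)=0$ for $j<i$, the interpolation identity at dyadic points, and the resulting unique determination of the coefficients (your recursive $\lambda_i=f(q_i)-S_{i-1}(q_i)$ coincides with the paper's explicit formula $\lambda_j=f(q_j)-\tfrac{f(q_{j_-})+f(q_{j_+})}{2}$). The only difference is that you spell out the convergence step — identifying $S_{2^k}$ with the piecewise linear interpolant and controlling the intermediate partial sums via the disjoint supports of the level-$(k+1)$ hats — which the paper merely asserts with ``can be seen to always be such that''.
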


		\begin{proof}
			First assume that $f$ is a function such that there are real numbers $\lambda_i$ that fulfill
			\[ f = \lim_{n\to \infty}\sum \lambda_i\cdot e_i. \]
			On one hand, the sequences $(e_i)$ and $(q_i)$ fulfill that $e_i(q_j) \neq 0$ implies $i\leq j$.
			This implies that
			\begin{equation}\label{eq:e}
				f(q_j) = \sum_{i=0}^{j} \lambda_i e_i(q_j)\tag{evl}
			\end{equation}
			\
			On the other hand, all but $q_0$ and $q_1$ are included in the interior of the interval.
			Denote the dyadic numbers that arise by rounding the last digit of $q_j$ up resp.\ down by $q_{j_+}$ resp.\ $q_{j_-}$.
			Since for all numbers $q_i$ with $i\geq j$ it holds that $e_i(q_j)=0$, the linear combination of the first $j$ elements has to reproduce the values of $f$ on the $q_i$ with $i\leq j$ exactly.
			Therefore it holds that 
			\begin{equation}\tag{lam}\label{eq:rec}
				\lambda_j =  f(q_j)-\frac{f(q_{j_-})+f(q_{j_+})}2.
			\end{equation}
			It follows that the sequence $(\lambda_i)$ is uniquely determined by the values of $f$ on dyadic numbers.
			For an arbitrary function $f$, the sequence defined above can be seen to always be such that
			\[ f = \lim_{n\to\infty} \sum_{i=0}^n\lambda_i e_i \]
			Therefore, $(e_i)$ is a Schauder basis.
		\end{proof}

		Note that $e_i(q_j)\neq 0$ also implies $\abs{q_i-q_j}<2^{-\lceil\lb(j)\rceil}$ and therefore that in the sum that defining $\lambda_{j+1}$ in the previous proof has at most $2\cdot\lceil\lb(j)\rceil$ summands are not zero.
		Furthermore, it can quickly be checked which are the ones that are not zero by evaluating $\abs{q_i-q_j}$.
		The last piece we need to see that the representation of $C([0,1])$ is polynomial-time equivalent to one constructed in the proof of \Cref{resu:representing banach spaces} is that $\lambda_i\cdot e_i$ has the function $n\mapsto n+\lceil\lb(i)\rceil+\lceil\lb(\abs{\lambda_i})\rceil$ as modulus of continuity.

		\begin{theorem}[Reconstruction of the standard representation]\label{resu:reconstruction of the standard representation}
			The standard second-order representation of the continuous functions on the unit interval is polynomial-time equivalent to the representation $\xi$ that is produced by the proof of \Cref{resu:representing banach spaces}, when choosing $(e_i)$ as the Faber-Schauder system and $S(l,n):=2^{\max\{l(n),n\}}$.
		\end{theorem}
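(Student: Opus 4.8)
The plan is to prove the two polynomial-time translations separately; throughout write $\xi$ for the representation produced by the proof of \Cref{resu:representing banach spaces} with $(e_i)$ the Faber--Schauder system and $S(l,n)=2^{\max\{l(n),n\}}$. For the translation from $\xi$ to $\delta_{\square}$ I would appeal to the minimality result \Cref{resu:minimality of the standard representation}, which reduces the task to showing that the evaluation operator $\eval$ is polynomial-time computable with respect to $\xi$. Since $\xi$ is regular but not literally length-monotone, I first pass to a linear-time equivalent second-order representation via \Cref{resu:regularity vs. second-order}; this preserves both polynomial-time computability of $\eval$ and polynomial-time translatability to $\delta_{\square}$, so the conclusion transfers back to $\xi$.

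To see that $\eval$ is polynomial-time, the decisive structural fact is that the Faber--Schauder functions at a fixed dyadic level have pairwise disjoint supports, so at any $x\in[0,1]$ at most $2\lceil\lb N\rceil$ of the first $N=S(l,\cdot)$ basis functions are nonzero, and the relevant indices are found by comparing $\abs{q_i-x}$, as noted above. To produce a $2^{-n}$-approximation of $f(x)$ I would use condition (a) of the $\xi$-name to obtain coefficients of a linear combination within $\frac1{n+1}$ of $f$, then evaluate only its $O(\lceil\lb N\rceil)$ nonzero summands at a dyadic approximation of $x$ whose precision compensates for the Lipschitz constant of that partial sum. Since $\lb S(l,\cdot)=\max\{l(\cdot),\cdot\}$ and the Lipschitz constants involved are $2^{O(\max\{l(\cdot),\cdot\})}$, every length that occurs is bounded by a second-order polynomial, so the whole computation runs in polynomial time and \Cref{resu:minimality of the standard representation} supplies the desired translation.

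For the opposite translation I would build the $\xi$-name from a $\delta_{\square}$-name $\langle\mu,\psi\rangle$ of $f$ query by query. The coefficients demanded by (a) are the Faber--Schauder coefficients, given by the second-difference formula \eqref{eq:rec}; each requires only three dyadic values of $f$ read off from $\psi$ together with a rounding to the requested denominator, hence is delivered in polynomial time, and \eqref{eq:e} confirms that the represented function is again $f$. The norm oracle of (o) for a finite combination $\sum_{i=0}^N \frac{z_i}{m+1}e_i$ is computed by evaluating this piecewise linear function at its $O(N)$ dyadic breakpoints and taking the maximal modulus, which is polynomial in the length of the query. Condition (l) is arranged by padding the value on $\sdzero^k$ so that $\length\varphi$ dominates the modulus of continuity $k\mapsto\length{\mu(\sdone^k)}$; by the piecewise linear interpolation error estimate this makes the first $S(\length\varphi,\length n)=2^{\max\{\length\varphi(\length n),\length n\}}$ functions resolve $f$ to the precision required in (a).

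The main obstacle is to make the exponential growth of $S$, the logarithmic support of the Faber--Schauder system, and the modulus of continuity fit together. The exponential factor $2^{\max\{l,n\}}$ in $S$ is exactly what lets the $2^{\mu(n)}$ coefficients needed to resolve $f$ to precision $2^{-n}$ be encoded in a name whose length only has to dominate $\mu$, while the logarithmic support is what keeps both evaluation and the per-query coefficient computation polynomial. Balancing these three quantities, together with the length bookkeeping required to realize (l) without destroying polynomial-time behaviour, is the delicate point; the coefficient recursion \eqref{eq:rec} and the piecewise linear norm computation are then routine.
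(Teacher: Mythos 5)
Your proposal follows essentially the same route as the paper's proof: for the forward direction it invokes \Cref{resu:regularity vs. second-order} and the minimality result \Cref{resu:minimality of the standard representation}, reducing to polynomial-time evaluation, which is then established exactly as in the paper via the logarithmic number of nonvanishing Faber--Schauder summands and a modulus-of-continuity/Lipschitz estimate for the partial sum; for the converse it recovers the coefficients from the second-difference formula \eqref{eq:rec} applied to the dyadic values stored in a $\delta_{\square}$-name, with the length of the name playing the role of the modulus of continuity. The only difference is that you spell out the norm oracle (o) via breakpoint evaluation where the paper simply calls (l) and (o) unproblematic; this is a correct elaboration, not a different argument.
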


		\begin{proof}
			To see that the representation $\xi$ can be translated to the representation $\delta_{\square}$ first note that $\xi$ as a regular representation is polynomial-time equivalent to a second-order representation by \Cref{resu:regularity vs. second-order}.
			Therefore using the minimality of $\delta_{\square}$ from \Cref{resu:minimality of the standard representation} it suffices to specify an polynomial-time algorithm for evaluation.

			This can be done using \Cref{eq:e}.
			Given a pair $\langle\varphi,\psi\rangle$ of a $\xi$-name $\varphi$ of a function $f$ and an $\RR$-name $\psi$ of a number $x$ as oracle and an integer $n$ as input a machine computing the evaluation may proceed as follows.
			It obtains values of $\length{\varphi}$ from $\varphi$ using condition (l) of the definition of the representation $\xi$: $\varphi(\sdzero^n)\geq \length{\varphi}(n)$.
			Due to the choice $S(l,n) = 2^{\max\{l(n),n\}}$, from condition (a) of the definition of $\xi$ it follows that a linear combination of the first $2^{\max\{\length{\varphi}(\length n),\length n\}}$ elements of $(e_i)$ approximates the function $f$ with precision $\frac1{n+1}$ in supremum norm.
			Note that if $\mu$ and $\nu$ are moduli of continuity of $f$ and $g$, then a modulus of continuity of $f+g$ is given by $n\mapsto \max\{\mu(n+1),\nu(n+1)\}$.
			The sum approximating $f$ has at most
			\[ 2\lb\big(2^{\max\{\flength\varphi(\length n),\length n\}}\big) = 2\max\{\flength\varphi(\length n),\length n\} \]
			summands.
			Furthermore, the machine can obtain rational numbers $\lambda_i$ that can be used as coefficients and $\lceil\lb(\abs{\lambda_i})\rceil$ is bounded in absolute value by $\flength{\varphi}(\flength\varphi(\length n+1) +\lceil\lb(n+1)\rceil)$ (compare to the part of the proof of \Cref{resu:representing banach spaces} that provides that a name determines an element).
			Therefore, a modulus of continuity of the sum can be specified as
			\[ \mu_n(m) := m+3\max\{\flength{\varphi}(\length n),\length n\} + \flength{\varphi}(\flength\varphi(\length n+1) +\lceil\lb(n+1)\rceil). \]
			Use the name $\psi$ of the input $x$ to obtain a rational approximation to $x$ with precision requirement $2^{-\mu_n(\length{n}+1)+1}$ and round this to a dyadic approximation of at least half the quality.
			Note that the sequence $(q_i)$ lists all dyadic numbers, therefore the approximation is of the form $q_j$ for some $j\in\NN$.
			The index $j$ can easily be found from an encoding of $q_j$.
			The machine now evaluates the sum
			\[ \sum_{i=0}^{2^{\max\{\flength\varphi(\length n),n\}}} \lambda_i e_i(q_j) \]
			By first listing the at most $\flength\varphi(\length n)$ values of $i$ such that the corresponding summand is not zero, then getting the $\lambda_i$ from $\varphi$ and evaluating $e_i(q_j)$.
			All of this can be done in polynomial time.

			To see that a $\xi$-name can be written down using the information encoded in a $\delta_{\square}$-name first note that using a modulus of continuity as $\flength{\varphi}$ fulfills the first requirement of condition (a) from the proof of \Cref{resu:representing banach spaces} and that (l) and (o) are unproblematic.
			To fulfill the second requirement of (a) valid coefficients $\lambda_i$ have to be specified.
			Those can be found from a $\delta_{\square}$-name by using the expression of $\lambda_j$ in terms of the values of $f$ on dyadic numbers from \eqref{eq:rec} and that $j_+$ and $j_-$ can be obtained from $j$.
		\end{proof}

	\subsection{Spaces of integrable functions}

		This section considers the spaces $\Lp([0,1])$.
		In this section it is always assumed that $1\leq p<\infty$.
		Recall that the space $\mathcal L^p(\RR)$ consists of functions $f:\RR\to \RR$ such that
		\[ \int_0^1 \abs{f}^p \dd\lambda < \infty \]
		and that $\norm f_p$ is defined to be the $p$-th root of the above quantity.
		Since functions that are equal to zero almost everywhere fulfill $\norm f_p=0$, this does not define a norm on $\mathcal L^p(\RR)$.
		It does define a norm on the space $\Lp(\RR)$ of  equivalence classes of such functions up to equality almost everywhere.
		Let $\Lp([0,1])$ denote the subspace of equivalence classes of functions whose support is contained in the unit interval.
		The mapping $\norm\cdot_p$ is well-defined on the equivalence classes (i.e. does not change under change of the representative) and makes this space a Banach space.
		In the following we sometimes refer to the elements of $\Lp([0,1])$ as functions, as this does rarely lead to confusion.

		\cite{arXiv:1612.06419} introduces representations for the spaces $\Lp([0,1])$ by imitating the construction of the standard representation of continuous functions:
		The point evaluations are replaced by integrals over dyadic intervals and the modulus of continuity by an integral modulus.
		For the convenience of the reader we repeat the necessary definitions and relevant results here.

		If $f \in \Lp([0,1])$ is the equivalence class of some function $g$, then let $\tau_h f\in\Lp(\RR)$ denote the equivalence class of the function $x\mapsto g(x+h)$.
		\begin{definition}[\cite{arXiv:1612.06419}]
			A function $\mu:\omega\to\omega$ is called an $\Lp$-modulus of $f\in\Lp([0,1])$ if for all $h>0$ and $n\in\NN$
			\[ \abs{h}\leq 2^{-\mu(n)} \quad \Rightarrow\quad \norm{f-\tau_h f}_p\leq 2^{-n}. \]
		\end{definition}

		The following definition is functionally identical to the definition from \cite{arXiv:1612.06419}.
		The formal differences are due to the use of an encoding of dyadic numbers in that source which this paper avoided.
		\begin{definition}[\cite{arXiv:1612.06419}]
			Define a second-order representation $\xi_p$ of $\Lp([0,1])$ as follows:
			A length monotone string function $\varphi$ be a name of $f\in\Lp([0,1])$ if and only if for any $k,l,m,n \in\NN$ it holds that $\varphi(\langle k,l,m,n\rangle)=\langle q,\sdzero^j \rangle$ for some $j\in\NN$ and $q\in\ZZ$ and
			\[ \abs{\int_{k\cdot 2^{-m}}^{l\cdot 2^{-m}} f \dd\lambda - q\cdot 2^{-j}}< 2^{-n} \]
			and $\flength \varphi$ is an $\Lp$-modulus of $f$.
		\end{definition}
		Note that in contrast to the most other examples the inequality in the definition is chosen to be strict.
		This convention was chosen in \cite{arXiv:1612.06419} to guarantee that the resulting representation is an open mapping.
		Whether or not the inequality is chosen to be strict does not make a difference up to polynomial-time equivalence.

		In \cite{arXiv:1612.06419} it is proven that this representation is computably equivalent to the Cauchy representation of $\Lp([0,1])$ if dyadic step functions are chosen as approximating sequence.
		To reconstruct this representation using \cref{resu:representing banach spaces}, it would therefore suggest  itself to use multiples of characteristic functions of dyadic intervals as a Schauder basis.
		However, these systems are typically not Schauder bases as it is easy to write a characteristic function as an infinite sum of characteristic functions of subintervals.
		We use the Haar system, which originates from \cite{MR1511592}, instead.
		However, since we reqiure the elements of a Schauder basis to have norm one we need different scaling for different values of $p$. \vspace{.15cm}

		\noindent\begin{minipage}{.6\textwidth}
		\begin{definition}[\cite{MR1511592}]
			Let $(q_i)$ be the uniformly dense sequence in the unit interval from \Cref{ex:the unit interval}.
			The \demph{$p$-normalized Haar system} is the sequence $(f_{i,p})$ of functions $f_{i,p}\in L^p([0,1])$ by $f_0 := 1$ and for $i=2,\ldots$ by
			\[ f_{i-1,p}(x) := \begin{cases}2^{\frac{\lceil\lb(i)\rceil-1}p} &\text{if } x\in [q_i-2^{-\lceil\lb(i)\rceil},q_i) \\ -2^{\frac{\lceil\lb(i)\rceil-1}p} &\text{if } x\in [q_i,q_i+2^{-\lceil\lb(i)\rceil}] \\ 0 &\text{otherwise.} \end{cases} \]
		\end{definition}
		The functions $f_{i,1}$ coincide with the almost everywhere defined derivative of the functions $e_{i+1}$ from the Faber-Schauder system from \Cref{def:faber-schauder system}.
		It is well-known that the Haar system is a Schauder ba-

		\vspace{.1cm}
		\end{minipage}\hspace{.5cm}
		\begin{minipage}{.25\textwidth}
			\hspace{-.25cm}
			\vspace{-.2cm}
			\begin{tikzpicture}
				\draw[->] (0,-2) -- (0,2);
				\draw[->] (-.1,0) -- (2.5,0);
				\draw[thick, color = red7,dotted] (1.75,0) -- (1.75,2);
				\draw[thick, color = red7] (1.5,2) -- (1.75,2);
				\draw[thick, color = red7,dotted] (1.75,2) -- (1.75,-2);
				\draw[thick, color = red7] (1.75,-2) -- (2,-2);
				\draw[thick, color = red6,dotted] (1,0) -- (1,2);
				\draw[thick, color = red6] (1,2) -- (1.25,2);
				\draw[thick, color = red6,dotted] (1.25,2) -- (1.25,-2);
				\draw[thick, color = red6] (1.25,-2) -- (1.5,-2);
				\draw[thick, color = red6,dotted] (1.5,-2) -- (1.5,0);
				\draw[thick, color = red5,dotted] (.5,0) -- (.5,2);
				\draw[thick, color = red5] (.5,2) -- (.75,2);
				\draw[thick, color = red5,dotted] (.75,2) -- (.75,-2);
				\draw[thick, color = red5] (.75,-2) -- (1,-2);
				\draw[thick, color = red5,dotted] (1,-2) -- (1,0);
				\draw[thick, color = red4] (0,2) -- (.25,2);
				\draw[thick, color = red4,dotted] (.25,2) -- (.25,-2);
				\draw[thick, color = red4] (.25,-2) -- (.5,-2);
				\draw[thick, color = red4,dotted] (.5,-2) -- (.5,0);
				\draw[thick, color = red4] (.5,0) -- (2,0);
				\draw[thick, color = red3] (0,0) -- (1,0);
				\draw[thick, color = red3,dotted] (1,0) -- (1,1);
				\draw[thick, color = red3] (1,1) -- (1.5,1);
				\draw[thick, color = red3,dotted] (1.5,1) -- (1.5,-1);
				\draw[thick, color = red3] (1.5,-1) -- (2,-1);
				\draw[thick, color = red2] (0,1) -- (.5,1);
				\draw[thick, color = red2,dotted] (.5,1) -- (.5,-1);
				\draw[thick, color = red2] (.5,-1) -- (1,-1);
				\draw[thick, color = red2,dotted] (1,-1) -- (1,0);
				\draw[thick, color = red2] (1,0) -- (2,0);
				\draw[thick, color = red1] (0,.5) -- (1,.5);
				\draw[thick, color = red1,dotted] (1,.5) -- (1,-.5);
				\draw[thick, color = red1] (1,-.5) -- (2,-.5);
				\node at (2.5,.5) {\textcolor{red1}{$f_1$}};
				\node at (2.5,.75) {\textcolor{red2}{$f_2$}};
				\node at (2.5,1) {\textcolor{red3}{$f_3$}};
				\node at (2.5,1.25) {\textcolor{red4}{$f_4$}};
				\node at (2.5,1.5) {\textcolor{red5}{$f_5$}};
				\node at (2.5,1.75) {\textcolor{red6}{$\vdots$}};
				\node at (-.1,.5) {1};
				\node at (-.1,1) {2};
				\node at (-.1,2) {4};
			\end{tikzpicture}
			
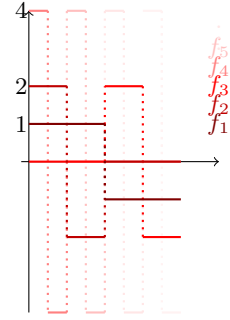
\captionof{figure}{The\newline functions $f_{i,1}$}
		\end{minipage}
		sis.
		Since it is instructive for the following, we repeat the proof of this fact:

		\begin{lemma}
			The $p$-normalized Haar system is a Schauder basis of the Banach space $\Lp([0,1])$.
		\end{lemma}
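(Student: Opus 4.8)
The plan is to use the standard characterization that a fundamental system is a Schauder basis precisely when the associated partial-sum operators are uniformly bounded on the space, and then to exploit the fact that, for the Haar system in the ordering induced by $(q_i)$, every partial sum is literally a conditional-expectation (averaging) operator and hence a contraction on $\Lp([0,1])$. First I would record the combinatorics of the supports: for $i\geq 2$ the function $f_{i-1,p}$ is supported on the dyadic interval of length $2^{-\lceil\lb(i)\rceil+1}$ centred at $q_i$, and the ordering lists these level by level, so that Haar functions at the same dyadic scale have pairwise disjoint supports. Consequently the linear span of $f_{0,p},\dots,f_{2^k-1,p}$ is exactly the space $V_k$ of functions that are constant on each dyadic interval $[j2^{-k},(j+1)2^{-k})$, $j=0,\dots,2^k-1$.

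The key step is to identify the partial-sum operators $S_N\colon f\mapsto\sum_{i=0}^{N-1}\lambda_i f_{i,p}$. When $N=2^k$ completes a full level, $S_N f$ is the unique element of $V_k$ having the same average as $f$ over every dyadic interval of length $2^{-k}$; that is, $S_{2^k}$ is the conditional expectation onto the $\sigma$-algebra generated by these intervals. For a general $N$ between two full levels only some of the level-$k$ intervals have been refined into their two halves, and since the relevant Haar supports are disjoint this simply means $S_N$ is the conditional expectation onto an intermediate dyadic partition of $[0,1]$. By Jensen's inequality every such averaging operator satisfies $\norm{S_N f}_p\leq\norm{f}_p$, so $\sup_N\norm{S_N}\leq 1$. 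This uniform contractivity is the heart of the matter, and it is exactly where the $p$-normalization is irrelevant: the operators $S_N$ are determined by averages, hence independent of how the basis functions are scaled, and it is Jensen that makes them contractions on every $\Lp$ simultaneously.

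With uniform boundedness in hand, convergence follows from density: the dyadic step functions are dense in $\Lp([0,1])$ for $1\leq p<\infty$, and for a step function $g\in V_k$ one has $S_N g=g$ for all $N\geq 2^k$. A three-epsilon argument using $\sup_N\norm{S_N}\leq1$ then yields $S_N f\to f$ in $\Lp$ for every $f$, so $f=\lim_{n\to\infty}\sum_{i=0}^n\lambda_i f_{i,p}$. Uniqueness of the coefficients is immediate, exactly as in the Faber-Schauder case: each $\lambda_i$ is a fixed continuous linear functional of $f$, obtained by comparing the averages of $f$ over the two halves of the support of $f_{i,p}$, so convergence of the series to $f$ forces these values. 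The main obstacle is thus purely the uniform-boundedness estimate; once the partial sums are recognized as conditional expectations over dyadic partitions, the remainder is the standard basis criterion together with density of dyadic step functions.
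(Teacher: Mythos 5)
Your proof is correct, but it takes a genuinely different route from the paper's. The paper's proof is organized around deriving the explicit biorthogonal formulas: it shows that the integral of $f_{i,p}$ over $[q_j-2^{-\lceil\lb(j)\rceil},q_j]$ vanishes unless $i\leq j$, deduces the recursion (lamp) expressing $\lambda_{j,p}$ through integrals of $f$ over two dyadic intervals, concludes uniqueness of the coefficients, and then merely \emph{asserts} that the resulting series converges to $f$ in $\Lp$. You instead prove the convergence half in full: you identify the partial-sum operators $S_N$ as conditional expectations onto intermediate dyadic partitions (which is legitimate here because the ordering induced by $(q_i)$ lists each level left to right, so the relevant supports are disjoint), obtain $\sup_N\norm{S_N}\leq 1$ from Jensen, and finish by density of dyadic step functions; uniqueness then follows from continuity of the coefficient functionals, morally the same observation as the paper's. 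Each approach buys something: the paper's computation produces the formulas (int) and (lamp) that are reused verbatim in the reconstruction of $\xip$ later in the appendix, while your argument supplies the uniform-boundedness estimate that the paper leaves implicit, makes the contractivity (and hence the basis constant $1$) explicit, and exhibits cleanly why the statement holds for all $1\leq p<\infty$ simultaneously and independently of the $p$-normalization. The one point worth making explicit if you write this up is the identification $S_Nf=E[f\mid\mathcal F_N]$ for the intermediate partitions, which requires checking that the partial sum preserves averages over each partition element and lies in the corresponding span; this is a short computation with the biorthogonal functionals you already describe.
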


		\begin{proof}
			First assume that $f\in\Lp([0,1])$ is a function such that there are real numbers $\lambda_{i,p}$ such that with respect to the $\Lp$-norm
			\[ f = \lim_{N\to \infty}\sum_{i=0}^N \lambda_{i,p}\cdot f_{i,p}. \]
			Note that the sequence $(f_{i,p})$ fulfills that for all $i,j$
			\[ \int_{q_j-2^{-\lceil\lb(j)\rceil}}^{q_{j}} f_{i,p}\dd\lambda \neq 0 \quad \Rightarrow \quad i\leq j, \]
			and that the above integral takes the value $2^{-\frac{1}p}$ for $i=j$.
			On one hand it follows that
			\begin{equation}\tag{int}\label{eq:i}
				\int_{q_j-2^{-\lceil\lb(j)\rceil}}^{q_{j}} f\dd\lambda = \sum_{i=0}^{j-1} \lambda_{i,p} \int_{q_j-2^{-\lceil\lb(j)\rceil}}^{q_{j}} f_{i,p}\dd\lambda.
			\end{equation}
			On the other hand let $q_{j_+}$ be the dyadic number that arise by rounding the last digit of $q_j$ up.
			Since the integrals of $f_{i,p}$ over $[q_j-2^{-\lceil\lb(j)\rceil},q_j]$ vanish whenever $i\geq j$, these integrals have to be assumed by the finite linear combination broken of at $j$.
			Furthermore note that the integral over the finite linear combination changes linearly between the $q_i$ with $i\leq j$.
			Therefore $\lambda_{j,p}$ can be expressed as:
			\begin{gather*}\tag{lamp}\label{eq:lamp}
				\lambda_{j,p} = 2^{\frac 1p}\left(\int_{q_{j}-2^{-\lceil \lb(j)\rceil+1}}^{q_{j}} f\dd\lambda- \frac12\int_{q_{j_+}-2^{-\lceil \lb(j_+)\rceil}+2}^{q_{j_+}}f\dd\lambda\right).
			\end{gather*}
			It follows that the sequence is unique.
			On the other hand for an arbitrary function $f\in\Lp([0,1])$, the sequence defined above is always such that
			\[ f = \lim_{n\to\infty} \sum_{i=0}^n\lambda_{i,p} f_{i,p} \]
			in $\Lp([0,1])$.
			Furthermore, $\norm{f_{i,p}}_p= 1$.
			Therefore, $(f_{i,p})$ is a Schauder basis in $\Lp([0,1])$.
		\end{proof}

		Note that $\int_{q_j-2^{-\lceil\lb(j)\rceil}}^{q_j} f_i \dd\lambda\neq 0$ also implies that $\abs{q_i-q_j}\leq 2^{-\lceil\lb(j)\rceil}$ and therefore the sum in \Cref{eq:i} does at most have $2\cdot \lceil\lb(j)\rceil$ summands that do not vanish.

		\begin{theorem}[Reconstruction of $\xip$]\label{resu:reconstruction of xip}
			Whenever $p$ is polynomial-time computable, the second-order representation $\xip$ of the space $\Lp([0,1])$ is poly\-nomial-time equivalent to the representation $\xi$ that is produced by the proof of \Cref{resu:representing banach spaces}, when choosing the Haar system as Schauder basis and $S(l,n):=2^{\max\{l(n),n\}}$.
		\end{theorem}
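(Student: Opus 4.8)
The plan is to exhibit polynomial-time translations in both directions between $\xi$ and $\xip$, following the template of the proof of \Cref{resu:reconstruction of the standard representation}. The roles played there by the point-value identity \eqref{eq:e} and the coefficient recursion \eqref{eq:rec} are now taken by the integral identity \eqref{eq:i} and the coefficient formula \eqref{eq:lamp}. The essential difference is that no minimality statement comparable to \Cref{resu:minimality of the standard representation} is available for $\xip$, so both translations have to be written down explicitly rather than reducing one direction to computability of an evaluation operator. As in the continuous case I would first invoke \Cref{resu:regularity vs. second-order} to treat $\xi$ as a second-order representation, so that $\flength\varphi$ is available in linear time from a name $\varphi$. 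Throughout, the decisive structural facts are: the first $2^M$ Haar functions $f_{0,p},\ldots,f_{2^M-1,p}$ span exactly the functions that are constant on the dyadic intervals of width $2^{-M}$; their partial sum applied to $f$ is the operator $A_M$ replacing $f$ by its mean value on each such interval, whose $\Lp$-approximation error $\norm{f-A_Mf}_p$ is controlled by the $\Lp$-modulus of $f$; and on any fixed dyadic interval all but $\bigo(\lceil\lb j\rceil)$ of the $f_{i,p}$ integrate to zero. The choice $S(l,n)=2^{\max\{l(n),n\}}$ is made precisely so that a name of length $l$ gives access to the first $2^{\max\{l(\length n),\length n\}}\geq 2^{l(\length n)}$ Haar functions, which is exactly the number the $\Lp$-modulus needs to guarantee the requested precision.

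For the direction from $\xip$ to $\xi$ I would, given a $\xip$-name $\varphi$ of $f$, read off the integrals $\int_{k2^{-m}}^{l2^{-m}}f\dd\lambda$ it provides and feed them into \eqref{eq:lamp} to obtain rational approximations to the Haar coefficients $\lambda_{i,p}$; since each coefficient is a fixed combination of two integrals over dyadic intervals, this is a polynomial-time computation and fulfils requirement (a) of the representation from the proof of \Cref{resu:representing banach spaces}. That these partial sums really approximate $f$ to the promised precision rests on the fact that $A_Mf$ is, up to a constant, the best dyadic step approximation, so that its error is bounded by the $\Lp$-modulus $\flength\varphi$ of the given name; since $S(\flength\varphi,\length n)=2^{\max\{\flength\varphi(\length n),\length n\}}\geq 2^{\flength\varphi(\length n)}$, enough Haar functions are available. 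Conditions (l) and (o) are unproblematic: (l) is a padding issue, and the norm oracle (o) only requires evaluating $\Lp$-norms of explicit dyadic step functions, which is possible in polynomial time because $p$ is assumed polynomial-time computable.

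For the direction from $\xi$ to $\xip$ I would, given a $\xi$-name, extract the coefficients $\lambda_{i,p}$ of the Haar approximation $g=\sum_{i=0}^{N-1}\lambda_{i,p}f_{i,p}$ with $N=S(\flength\varphi,\cdot)$ and then assemble the data of a $\xip$-name. The integral approximations are obtained from \eqref{eq:i} by integrating $g$ over the requested dyadic interval $A$; only $\bigo(\lceil\lb N\rceil)$ summands are nonzero and each $\int_A f_{i,p}\dd\lambda$ is explicit, while the error is controlled by H\"older's inequality, $\abs{\int_A(f-g)\dd\lambda}\leq\norm{f-g}_p\abs A^{1-\frac1p}\leq\norm{f-g}_p$. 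To supply the required $\Lp$-modulus I would combine translation invariance of the $\Lp$-norm, giving $\norm{f-\tau_h f}_p\leq 2\norm{f-g}_p+\norm{g-\tau_h g}_p$, with a directly computable $\Lp$-modulus of the step function $g$, whose jump contributions scale like $\abs h^{1/p}$; padding then turns this into a length-monotone name whose length is an $\Lp$-modulus of $f$.

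The main obstacle I expect is making the relationship between the $\Lp$-modulus and the approximation rate of the partial Haar sums quantitatively precise, with constants tight enough that the specific $S(l,n)=2^{\max\{l(n),n\}}$ supplies neither too few nor wastefully many basis functions; this is the $\Lp$-analogue of the clean identities \eqref{eq:e} and \eqref{eq:rec} and is where the geometry of the Haar system genuinely enters. A secondary but pervasive obstacle is keeping every step within second-order polynomial time even though $S$ is exponential in $l$ and $n$: this is only possible because on each dyadic interval the Haar expansion is sparse, so that the exponentially many coefficients are never all touched, and because the dependence on $p$, both in the normalisation $2^{(\lceil\lb i\rceil-1)/p}$ of the basis and in the computation of $\Lp$-norms and $\Lp$-moduli, stays polynomial precisely under the hypothesis that $p$ is polynomial-time computable.
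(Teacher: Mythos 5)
Your proposal is correct and follows essentially the same route as the paper: two explicit translations built from the integral identity \eqref{eq:i} and the coefficient formula \eqref{eq:lamp}, the sparsity of the Haar expansion on dyadic intervals, the fact that the first $2^M$ Haar functions span the dyadic step functions of width $2^{-M}$ (the paper isolates this as \Cref{resu:stepfunctions to haar basis}), and control of the approximation error through the $\Lp$-modulus, with polynomial-time computability of $p$ entering exactly where you say it does. The only cosmetic difference is that you identify the Haar partial sum with the dyadic conditional expectation, whereas the paper routes the error estimate in the $\xip\to\xi$ direction through the moving-average operator $A_m$ of \Cref{resu:continuitity of approximations,resu:approximation} followed by a sup-norm discretization --- both yield the same quantitative bound.
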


		\begin{proof}[That $\xi$ can be translated to $\xip$]
			It is necessary to compute the integrals over dyadic intervals.
			Given a $\xi$-name $\varphi$ of a function $f\in \Lp([0,1])$ and integers $k,l,m,n$ as input a machine computing the corresponding integral may proceed as follows.
			It obtains values of $\length{\varphi}$ from $\varphi$ using condition (l) of the definition of the representation $\xi$: $\length{\varphi(\sdzero^n)}\geq \flength{\varphi}(n)$.
			Due to the choice $S(l,n) = 2^{\max\{l(n),n\}}$, from condition (a) of the definition of $\xi$ it follows that a linear combination of the first $2^{\max\{\length{\varphi}(n),n\}}$ elements of $(f_{i,p})$ approximates the function $f$ with precision $\frac1{n+1}$ in $\Lp$-norm.
			Note that the interval $[k\cdot 2^{-m},l\cdot 2^{-m}]$ can be written as a union of $2\cdot m$ intervals of the form $[q_j-2^{-\lceil\lb(j)\rceil},q_j]$.
			The integrals over the latter can be read from the coefficients $\lambda_{j,p}$ via \Cref{eq:i}, where it is important that at most a logarithmic number of integrals over the $f_{i,p}$ is non-zero.
			Furthermore, from the polynomial-time computability of $p$ it follows that the integrals over $f_{i,p}$ are computable in polynomial time.

			It is also necessary to obtain the value of an $\Lp$-modulus.
			an $\Lp$-modulus $\mu_{i,p}$ of $f_{i,p}$ can be specified as $\mu_{i,p}(n) = \lceil p\rceil(n+1) + \lceil\lb(i)\rceil$.
			Also an $\Lp$-modulus of the sum of two functions is given by $n\mapsto \max\{\mu(n+1),\nu(n+1)\}$ if $\mu$ and $\nu$ are $\Lp$-moduli of the summands.
			The combination of these observation results in the bound
			\[ \lceil p\rceil(n+\lceil\lb(S(\flength{\varphi},\length n))\rceil+1) + S(\flength\varphi,\length n + \lceil\lb(S(\flength{\varphi},\length n))\rceil) \]
			of an $\Lp$-modulus of $f$ in $\length n-2$.
			The machine can obtain this bound from $\varphi$ and pad the return value accordingly.
		\end{proof}

		The other of the proof needs results from \cite{arXiv:1612.06419} and additionally the following technical lemma:
		\begin{lemma}\label{resu:stepfunctions to haar basis}
			For any $i,j$ there are rational constants $\lambda_{k,p}$ such that the characteristic function of the interval $[q_i,q_j]$ can be written as
			\[ \chi_{[q_i,q_j]} = \sum_{k=0}^{\max\{i,j\}} \lambda_{k,p} f_{k,p}. \]
		\end{lemma}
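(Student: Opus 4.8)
The plan is to show that $\chi_{[q_i,q_j]}$ lies in the span of finitely many Haar functions, all of whose indices are at most $\max\{i,j\}$, and then to read off the coefficients explicitly.

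First I would reduce to the case $q_i\le q_j$: swapping the two indices changes nothing, and if $q_i=q_j$ the indicator vanishes almost everywhere, so all coefficients are $0$. Write $m:=\max\{i,j\}$ and let $s:=\lceil\lb(m)\rceil$ be the level of $q_m$. Since the level $\lceil\lb(\cdot)\rceil$ is non-decreasing in the index and $i,j\le m$, both $q_i$ and $q_j$ are integer multiples of $2^{-s}$; hence $\chi_{[q_i,q_j]}$ is constant on every dyadic interval of length $2^{-s}$.

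Next I would invoke the standard fact, which follows from the recursion \eqref{eq:i} used in the proof that the Haar system is a Schauder basis, that the linear span of $f_{0,p},\dots,f_{2^s-1,p}$ is exactly the space of functions that are constant on the grid of dyadic intervals of length $2^{-s}$. Consequently there are uniquely determined coefficients with $\chi_{[q_i,q_j]}=\sum_{k=0}^{2^s-1}\lambda_{k,p}f_{k,p}$ (uniqueness is pure linear algebra, since the $2^s$ functions $f_{k,p}$ are linearly independent). Because the Haar functions are pairwise orthogonal in $L^2$ and the combination is finite, each coefficient equals $\lambda_{k,p}=\langle\chi_{[q_i,q_j]},f_{k,p}\rangle_2/\norm{f_{k,p}}_2^2$, equivalently the value produced by the formula \eqref{eq:lamp}. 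Writing $\ell$ for the level of $f_{k,p}$ and $J$ for its support, the inner product is $2^{(\ell-1)/p}$ times the difference of the lengths of the intersections of $[q_i,q_j]$ with the two halves of $J$; this difference is a dyadic rational $D$, and a short computation gives $\lambda_{k,p}=D\cdot 2^{(\ell-1)(1-1/p)}$. In particular the coefficients are rational for $p=1$ and in general are dyadic rationals times an explicit power of $2^{1/p}$, hence computable whenever $p$ is, which is all the reconstruction in \Cref{resu:reconstruction of xip} needs.

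The heart of the argument, and the step I expect to be the main obstacle, is to show $\lambda_{k,p}=0$ whenever $k>m$. By the inner-product formula, $\lambda_{k,p}\ne0$ forces $\chi_{[q_i,q_j]}$ to take different average values on the two halves of $J$; since $\chi_{[q_i,q_j]}$ jumps only at $q_i$ and $q_j$, this forces at least one of $q_i,q_j$ to lie strictly inside $J$. Now the centre $c:=q_{k+1}$ of $f_{k,p}$ is the unique dyadic point of level at most $\ell$ in the interior of $J$. If the straddling endpoint equals $c$, its index is $k+1\le m$. Otherwise that endpoint has level strictly greater than $\ell$, so its index exceeds $2^{\ell}\ge k+1$, and since that index is itself at most $m$ we again obtain $k+1\le m$. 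In either case $k\le m-1$, so every term with $k>m$ vanishes and $\chi_{[q_i,q_j]}=\sum_{k=0}^{m}\lambda_{k,p}f_{k,p}$, as claimed. The only delicate points are the bookkeeping that relates a Haar function's level and index to the indices $i,j$ through the monotonicity of $\lceil\lb(\cdot)\rceil$ within and across dyadic levels, and the minor clarification that the $\lambda_{k,p}$ are the explicit numbers computed above rather than arbitrary rationals.
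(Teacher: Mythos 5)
Your argument is correct, but it is a genuinely different proof from the one in the paper. The paper proceeds by induction on $N=\max\{i,j\}$: it peels the interval $[q_i,q_j]$ apart at the dyadic points neighbouring $q_{N+1}$, expresses the characteristic function of the small piece as an already-expanded characteristic function plus an explicit multiple of $f_{N+1,p}$ (with a separate case when $N$ is a power of two), and never leaves the realm of pointwise interval manipulations. You instead argue by dimension count and orthogonality: both endpoints lie on the dyadic grid of mesh $2^{-s}$ with $s=\lceil\lb(\max\{i,j\})\rceil$, the first $2^s$ Haar functions span the $2^s$-dimensional space of step functions on that grid, and the $\LL{2}$ inner-product formula for the coefficients combined with the support/level bookkeeping shows that every coefficient of index $\geq\max\{i,j\}$ vanishes (all functions involved are bounded step functions, so the $\LL{2}$ pairing is legitimate even though the ambient space is $\Lp$). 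The paper's induction is more elementary and is essentially the algorithm that the translation in \Cref{resu:reconstruction of xip} executes; your route is easier to verify, yields closed-form coefficients, and makes the truncation at $\max\{i,j\}$ transparent rather than an artefact of the recursion. One point in your favour that is worth keeping: you correctly observe that the coefficients are in general only dyadic rationals times a power of $2^{1/p}$, not rationals as the lemma (and the paper's own proof, via the factor $2^{(1-\lceil\lb(i)\rceil)/p}$) asserts; this is an imprecision in the statement that does not affect its use, since computability of the coefficients from $p$ is all that is needed downstream.
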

		\begin{proof}
			Proceed by induction over $N:=\max\{i,j\}$.
			Since within $\Lp([0,1])$ it holds that $\chi_{[q_0,q_0]}\equiv 0 = 0f_0$ the base case holds.
			Now assume the statement is true for all $i',j'$ such that $\max\{i',j'\}=N$.
			Let $i,j$ be a pair such that $\max\{i,j\}=N+1$ if both $i$ and $j$ equal to $N+1$, then $\chi_{[q_{i},q_{j}]}\equiv 0 = 0f_0$, therefore assume without loss of generality that $i=N+1$ and $j\leq N$.
			Note that either $q_{i+1} = q_i+2^{-\lceil\lb(i)\rceil+1}$ (if $N$ is not a power of two) or otherwise $q_{i+1}= \frac1{2(i-1)} = q_{(i-1)/2}/2$
			First handle the former case: Note that both $q_i+2^{-\lceil\lb(i)\rceil} = q_l$ as well as $q_{i+1}+2^{-\lceil\lb(i)\rceil} = q_m$, where $l,m\leq i$.
			By the induction hypothesis both $\chi_{[q_i,q_l]}$ and $\chi_{[q_l,q_m]}$ can be expanded in a sum $f_k$s with indices smaller than $i$.
			Furthermore,
			\[ \chi_{[q_{l},q_{i+1}]} =  \frac{\chi_{[q_l,q_m]}+2^{\frac{1-\lceil\lb(i)\rceil}p}f_{i+1,p}}2 \]
			and $\chi_{[q_{i+1},q_j]} = \chi_{[q_i,q_j]} - \chi_{[q_i,q_l]}- \chi_{[q_l,q_{i+1}]}$.
			This provides the expansion of $\chi_{[q_{i+1},q_j]}$.
			If $N$ is a power of two, expand $\chi_{[q_{(i-1)/2},q_j]}$ instead of $\chi_{[q_i,q_j]}$ and apply the same argument.
		\end{proof}
		Our proof that the representation $\xip$ can be translated to the representation constructed in \Cref{resu:representing banach spaces} heavily relies two lemmas proven in \cite{arXiv:1612.06419}.
		For a fixed function $f\in\Lp([0,1])$ consider the smoothing operators $A_m:\Lp([0,1])\to C(\RR)$ defined by
		\[ A_m(f)(x):= 2^{dm} \int_{x-2^{-m}}^{x+2^{-m}} f\dd\lambda. \]
		\begin{lemma}[Continuity, \cite{arXiv:1612.06419}]\label{resu:continuitity of approximations}
			Whenever $\mu$ is an $\Lp$-modulus of $f\in\Lp([0,1])$, the function $n\mapsto \mu(n+m)$ is a modulus of continuity of $A_m(f)$.
		\end{lemma}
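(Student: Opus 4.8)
The plan is to reduce the claim to the translation-continuity of $f$ quantified by its $\Lp$-modulus, using Hölder's inequality to convert the resulting $\Lp$-bound into a pointwise estimate on the average $A_m(f)$. Fix a precision level $n$, a shift parameter $m\geq 1$ and two points $x,y\in\RR$, and abbreviate $h:=y-x$. First I would express the difference $A_m(f)(x)-A_m(f)(y)$ as a single integral of $f-\tau_h f$. Substituting $t=s+h$ in the integral defining $A_m(f)(y)$ carries the domain $[y-2^{-m},y+2^{-m}]$ back to $[x-2^{-m},x+2^{-m}]$ and replaces $f$ by $\tau_h f$, so that
\[ A_m(f)(x)-A_m(f)(y)=2^{dm}\int_{x-2^{-m}}^{x+2^{-m}}\big(f-\tau_h f\big)\,\dd\lambda. \]
Since integrals do not see almost-everywhere modifications, this computation is legitimate on the level of equivalence classes, and $\tau_h f\in\Lp(\RR)$, so the integrand lies in $\Lp(\RR)$.

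Next I would estimate the right-hand side with Hölder's inequality on the interval $I:=[x-2^{-m},x+2^{-m}]$, whose length is $2^{-m+1}$. Pairing $f-\tau_h f$ with the constant $1$ and enlarging the $\Lp(I)$-norm to the full $\Lp(\RR)$-norm gives
\[ \abs{A_m(f)(x)-A_m(f)(y)}\leq 2^{dm}\,\abs{I}^{1-1/p}\,\norm{f-\tau_h f}_p=2^{dm}\,2^{(1-m)(1-1/p)}\,\norm{f-\tau_h f}_p. \]
This is the only analytic input; the rest is bookkeeping of exponents.

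Finally I would feed in the hypothesis on $\abs{x-y}$. If $\abs{x-y}=\abs h\leq 2^{-\mu(n+m)}$, then the defining property of the $\Lp$-modulus $\mu$ gives $\norm{f-\tau_h f}_p\leq 2^{-(n+m)}$. Inserting this and using $d=1$ in the present one-dimensional setting, the exponent of the resulting bound equals $m+(1-m)(1-1/p)-(n+m)=-n+(1-m)(1-1/p)$, which for $m\geq 1$ and $p\geq 1$ is at most $-n$. Hence $\abs{A_m(f)(x)-A_m(f)(y)}\leq 2^{-n}$, which is exactly the assertion that $n\mapsto\mu(n+m)$ is a modulus of continuity of $A_m(f)$.

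I do not expect a genuine obstacle; the one point deserving care is the exponent arithmetic, and in particular recognising that the shift $n\mapsto n+m$ is exactly what absorbs the normalisation factor $2^{dm}=2^m$ of the averaging operator. The additional $2^{-m}$ gained by evaluating $\mu$ at $n+m$ cancels this factor, while the Hölder length factor $2^{(1-m)(1-1/p)}$ only helps: it is $\leq 1$ for $m\geq 1$ and equals $1$ throughout when $p=1$.
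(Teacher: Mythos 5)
The paper does not actually prove this lemma: it is imported verbatim from \cite{arXiv:1612.06419} and used as a black box, so there is no in-paper argument to compare yours against. Judged on its own, your proof is the natural one and the exponent bookkeeping is right: the change of variables giving $A_m(f)(x)-A_m(f)(y)=2^{dm}\int_{x-2^{-m}}^{x+2^{-m}}(f-\tau_h f)\,\dd\lambda$ is correct for $\tau_hf(s)=f(s+h)$ with $h=y-x$ (and the sign of $h$ is harmless by swapping $x$ and $y$, which also reconciles your argument with the ``$h>0$'' phrasing of the $\Lp$-modulus definition), and H\"older against the constant $1$ on $I$ with $\abs I=2^{1-m}$ gives exactly the bound $2^{(1-m)(1-1/p)}2^{-n}$ you state.

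The one point to tidy up is your silent restriction to $m\geq 1$. The lemma carries no such hypothesis, and in the application the paper takes $m=\mu(n+1)$, which may well be $0$; for $m=0$ and $p>1$ your bound degrades to $2^{1-1/p}2^{-n}>2^{-n}$, because the normalisation $2^{dm}$ in the definition of $A_m$ is twice the reciprocal of the window length $2^{1-m}$. This is not a real obstruction, but it does need a sentence: for $m=0$ and $x\in[0,1]$ the window $[x-1,x+1]$ contains all of $[0,1]\supseteq\operatorname{supp}(f)$, so $A_0(f)(x)=\int_0^1 f\,\dd\lambda$ is constant on $[0,1]$, and since \Cref{def:moduli of continuity} only quantifies over $x,y\in[0,1]$ the claim is trivially true there. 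With that case added (and the pedantic observation that $n\mapsto\mu(n+m)$ inherits monotonicity from $\mu$, as the definition of a modulus of continuity formally requires), the proof is complete.
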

		\begin{lemma}[Approximation, \cite{arXiv:1612.06419}]\label{resu:approximation}
			Let $\mu$ be an $\Lp$-modulus of $f$.
			It holds that
			\[ \norm{f-A_{\mu(n)}(f)}_p < 2^{-n}. \]
		\end{lemma}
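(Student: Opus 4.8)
The plan is to read $A_m$ as convolution against a probability kernel and then to invoke the generalized Minkowski (integral) inequality. Let $K_m$ denote the normalized indicator of the interval $[-2^{-m},2^{-m}]$, i.e.\ the constant density supported there whose integral against $\lambda$ equals one; this is the normalization that makes $A_m$ the mean-value operator over the interval of radius $2^{-m}$, as intended by the constant in its definition. Then $A_m(f) = f\convo K_m$ and, because $K_m$ is even, $A_m(f)(x) = \int \tau_h f(x)\,K_m(h)\,\dd\lambda(h)$ for almost every $x$. Since $\int K_m\,\dd\lambda = 1$, the value $f(x)$ equals $\int f(x)\,K_m(h)\,\dd\lambda(h)$, and subtracting the two integrals yields the pointwise-a.e.\ identity
\[ A_m(f) - f = \int \big(\tau_h f - f\big)\,K_m(h)\,\dd\lambda(h), \]
read as an equation between $\Lp$-functions. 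First I would justify this identity, and the measurability of $h\mapsto \tau_h f - f$ as an $\Lp(\RR)$-valued map, via Fubini's theorem together with the density of simple functions.

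Applying Minkowski's integral inequality to pull the norm past the integral over $h$ gives
\[ \norm{A_m(f) - f}_p \leq \int \norm{\tau_h f - f}_p\,K_m(h)\,\dd\lambda(h). \]
Now set $m := \mu(n)$. The kernel $K_{\mu(n)}$ is supported on $\abs h \leq 2^{-\mu(n)}$, so every $h$ occurring in the integral meets the hypothesis of the $\Lp$-modulus and hence $\norm{\tau_h f - f}_p \leq 2^{-n}$. As $K_m$ is a probability density, integrating this constant bound already yields $\norm{A_{\mu(n)}(f) - f}_p \leq 2^{-n}$.

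To recover the strict inequality I would use that translation is continuous in $\Lp$ for $1\leq p<\infty$, so that $h\mapsto \norm{\tau_h f - f}_p$ is continuous and vanishes at $h = 0$; it therefore stays strictly below $2^{-n}$ on a neighbourhood of the origin of positive measure on which $K_m$ is strictly positive. Averaging an integrand that is $\leq 2^{-n}$ throughout and $< 2^{-n}$ on a set of positive mass against the probability kernel $K_m$ forces the average strictly below $2^{-n}$, which is the assertion.

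The main obstacle is the clean justification of the vector-valued manipulations: the convolution identity and the interchange of the $\Lp$-norm with the integral over $h$ (Minkowski's integral inequality for $\Lp(\RR)$-valued integrands), together with the bookkeeping ensuring that the constant in the definition of $A_m$ is exactly the one that makes $K_m$ a probability density. The support issues near the endpoints of $[0,1]$ need no separate treatment, since the $\Lp$-modulus is defined through $\tau_h f\in\Lp(\RR)$ and thus already accounts for mass translated across the boundary.
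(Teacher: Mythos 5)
The paper states this lemma without proof, importing it from \cite{arXiv:1612.06419}, so there is no in-paper argument to compare against; your convolution-plus-Minkowski argument is the standard proof and is correct, including the continuity-of-translation device for upgrading $\leq 2^{-n}$ to the strict inequality. The only detail worth flagging is that the $\Lp$-modulus is quantified over $h>0$ only, so for the negative half of the kernel's support you should note that $\norm{f-\tau_h f}_p=\norm{\tau_{-h}f-f}_p$ by translation invariance of the norm; together with your observation that the normalizing constant in the definition of $A_m$ (written $2^{dm}$, evidently a one-dimensional normalization in disguise) must make the kernel a probability density, the proof is complete.
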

		Now we can construct the second translation for and prove its correctness \Cref{resu:reconstruction of xip}.
		\begin{proof}[That $\xip$ can be translated to $\xi$.]
			An oracle machine that transforms a $\xip$-name $\varphi$ of a function in a $\xi$ name can be specified as follows:
			On input $\str a$ it checks if $\str a$ is of the form $\langle n,\sdzero^k\rangle$.
			If so, it returns approximations to the corresponding $\lambda_{n,p}$ by using \Cref{eq:lamp} and the values of the integrals that are encoded in the $\xip$ name.
			If $\str a$ is of the form $0^n$, it returns $0^{\flength \varphi(n+\flength\varphi(n+1)+2)}$.
			If $\str a$ is neither of the above, it returns the empty string.
			The main difficulty in proving that the produced string function is a name of the function $f:= \xip(\varphi)$ is to check that the first condition of (a) in the proof of \Cref{resu:representing banach spaces} is fulfilled.
			By definition of the representation $\xip$ the function $\mu:= \flength \varphi$ is an $\Lp$-modulus of $f$.
			Set $M:=\mu(n+\mu(n+1)+2)$ and define a step function by
			\[ F:= \sum_{i=0}^{2^{M}-1} A_{\mu(n+1)}(f)\big(\frac{2i+1}{2^{M+1}}\big)\cdot\chi_{[\frac i{2^M},\frac{i+1}{2^M}]}. \]
			$A_{\mu(n+1)}(f)$ is an $2^{-n-1}$-approximation to $f$ in $\Lp$-norm by \Cref{resu:approximation} and $F$ is an $2^{-n-1}$-approximation to $f_{\mu(n+1)}$ in supremum norm by \Cref{resu:continuitity of approximations}.
			Thus $F$ is an $2^{-n}$-approximation to $f$ in $\Lp$-norm.
			By \Cref{resu:stepfunctions to haar basis} $F$ can be expanded as a sum of $f_{k,p}$, where each $k$ is smaller than $2^M$.
			This proves that there exists an appropriate approximation.
		\end{proof}

		Two final remarks:
		Firstly the result can be improved to show that $\xi$ and $\xip$ are always polynomial-time equivalent relative to $p$, i.e. if $p$ is provided as an oracle.
		The necessity of having oracle access to approximations for $p$ can be removed by giving up the requirement that the Schauder basis has to be normalized.
		As second remark, \cite{arXiv:1612.06419} also introduces representations of Sobolev spaces, however, these turn out to be more complicated to reconstruct with the methods from this paper.
		The representations from \cite{arXiv:1612.06419} handle Sobolev spaces as space of integrable functions in the sense that the discrete information encoded in the names are still the values of integrals over the function, but the length of a name is required to be an $\Lp$-modulus of the highest derivative of the function instead of an $\Lp$-modulus of the function itself.
		It turns out that if one uses standard Schauder bases of Sobolev spaces in the construction from \Cref{resu:representing banach spaces} (i.e.\ integrals over the Haar basis), one ends up with a representation $\xi$ featuring a similar length of names, but instead of integrals over the function itself, integrals over the highest derivative are provided.
		In terms of polynomial-time translatability this turns out to be strictly less information.}

\section*{Acknowledgment}

	The authors would like to thank Martin Ziegler, Akitoshi Kawamura and Ulrich Kohlenbach for extended discussion of the content of this paper and more.

\bibliography{bib}{}

\end{document}